\let\c@author\relax
        \let\Cref\ref
        \let\cref\ref
        \let\citeauthor\ref
\numberwithin{equation}{section}
\declaretheorem[Refname={Theorem,Theorems}]{theorem}
\declaretheorem[sibling=theorem,Refname={Lemma,Lemmas}]{lemma}
\declaretheorem[sibling=theorem,Refname={Claim}]{claim}
\declaretheorem[sibling=theorem,Refname={Proposition,Propositions}]{proposition}
\declaretheorem[sibling=theorem,Refname={Definition,Definitions}]{definition}
\let\deg\relax 
\DeclareMathOperator{\adj}{\mathsf{adj}}
\DeclareMathOperator{\deg}{\mathsf{deg}}
\DeclareMathOperator{\query}{\mathsf{query}}
\newcommand{\ceil}[1]{\left \lceil #1 \right \rceil}
\newcommand{\floor}[1]{\left \lfloor #1 \right \rfloor}
\renewcommand{\E}[1]{\mathbb{E}{#1}}
\newcommand{\Probe}{\mathrm{Probe}}
\newcommand{\Foot}{\mathrm{Foot}}
\newcommand{\Bin}{\mathrm{Bin}}
\renewcommand{\epsilon}{\varepsilon}
\newcommand{\patrascu}{P\v{a}tra\c{s}cu}
\DeclareMathOperator{\ans}{\mathsf{ans}}
\title{Cell-Probe Lower Bound for Accessible Interval Graphs}
    \author{Sankardeep Chakraborty\thanks{\texttt{sankardeep.chakraborty@gmail.com.}}
    \\{University of Tokyo,  Japan}
    \and{Christian Engels\thanks{\texttt{christian@nii.ac.jp}. This work was supported by JSPS KAKENHI Grant Number JP18H05291.}}
    \\{National Institute of Informatics,  Japan}
    \and{Seungbum Jo\thanks{\texttt{sbjo@cnu.ac.kr}.}}
    \\{Chungnam National University, South Korea}
    \and{Mingmou Liu\thanks{\texttt{mili@di.ku.dk}. Mingmou is a part of Basic Algorithms Research Copenhagen (BARC), supported by the VILLUM
    Foundation grant 16582. Part of
    this work was done when the author
        was a research fellow at the Nanyang Technological University, supported by the Singapore Ministry of Education (AcRF) Tier 1 grant
        RG75/21.}}
    \\{University of Copenhagen, Denmark}}
\begin{document}

\maketitle

\thispagestyle{empty}

\begin{abstract}
    We spot a hole in the area of succinct data structures for graph classes from a universe of size at most $n^n$. Very often, the input
    graph is labeled by the user in an arbitrary and easy-to-use way, and the data structure for the graph relabels the input graph in some
    way. For any access, the user needs to store these labels or compute the new labels in an online manner. This might require more bits
    than the information-theoretic minimum of the original graph class, hence, defeating the purpose of succinctness. Given this, the data
    structure designer must allow the user to access the data structure with the original labels, i.e., relabeling is not allowed. We call
    such a graph data structure ``accessible''. In this paper, we study the complexity of such accessible data structures for interval
    graphs, a graph class with information-theoretic minimum less than $n\log_2 n$ bits. More specifically, our contributions are threefold:
    \begin{itemize}
        \item We formalize the concept of ``accessibility'' (which was implicitly assumed in all previous succinct graph data structures),
              and propose a new presentation, called ``universal interval representation'', for interval graphs.
        \item Any data structure for interval graphs in universal interval representation, which supports both adjacency and degree query
              simultaneously with time cost $t_1$ and $t_2$ respectively, must consume at least $\log_2(n!)+n/(\log n)^{O(t_1+t_2)}$ bits of space.
              This is also the first lower bound for graph classes with information-theoretic minimum less than $n\log_2n$ bits.
        \item Finally, we provide two efficient succinct data structures for interval graphs in universal interval representation that
              support adjacency query and degree query individually in constant time and space costs $\log_2(n!) + \tilde{O}(\sqrt n)$ bits and
              $\log_2(n!) + \tilde{O}(n^{2/3})$ bits respectively.
    \end{itemize}

    Therefore, two upper bounds together with the lower bound show that the two elementary queries for interval graphs are incompatible with
    each other in the context of succinct data structure. To the best of our knowledge, this is the first proof of such incompatibility
    phenomenon.
\end{abstract}

\setcounter{page}{1}

\section{Introduction}\label{sec:intro}
In the field of \textit{compressed data structures} design, given a combinatorial structure $T$ as input, e.g., a graph, a poset, a
permutation, etc., a storage scheme is
called \textit{optimal} if it takes an information-theoretic minimum
of $\log(|T|)$ bits (throughout this paper, $\log$ denotes the logarithm to the base $2$) to encode any arbitrary member $x \in T$. In
general, it is not difficult to
encode any $x \in T$ optimally, but, it is extremely difficult to navigate and query on $x$ efficiently in this compressed form. The whole
point of compressed data
structures is to represent combinatorial objects with as few as possible bits, not just for the sole purpose of compressing them, but also
to
query them efficiently without decompressing.
To this end, it is customary to allow some
extra space (called ``redundancy'' in compressed data
structure parlance) so
that queries can be executed
on the code word of $x$ efficiently.
This gives rise to the notion of \textit{succinct} data structures. Formally, a data structure is called \textit{succinct} if the total space
used is
$\log(|T|)+ o(\log(|T|))$ bits. Since Jacobson~\cite{Jacobson89} initiated the investigation of succinct data structures for various classes
of graphs 35 years ago,
extensive research in this domain has led to such data structures (refer to~\cite{gonzalo} for a thorough introduction) for various
combinatorial objects like
trees~\cite{NavarroS14}, arbitrary graphs~\cite{FarzanM13}, planar maps~\cite{AleardiDS08}, finite
automaton~\cite{ChakrabortyGSS23}, functions~\cite{MunroR04}, permutations~\cite{MunroRRR03}, posets~\cite{MunroN16}, bounded treewidth
graphs~\cite{FarzanK11},
texts~\cite{GagieNP20}, sequences~\cite{NavarroN14}, etc., among
many others and by now this is a mature field of study. Typically, the ideal query time would be constant (or polylogarithmic in the input
size) for such data structures,
thus, these data structures
truly offer the best of both worlds (time- and space-wise) solution for
modern storage issues.

We observe that \textit{underlying all compressed data structures}, there exists a fundamentally subtle (yet implicit in previous works)
assumption which we call the
notion of \textit{``accessibility''}. It can be defined from the end user's point of view in the following sense. Suppose that the user has
an unlabeled graph $G$ as
input, so huge
that the user first wants to construct an encoding $E$ of $G$ using little space, and then, the user wants to execute queries
efficiently on $E$ to retrieve useful information about $G$. Note that, to query $G$, the user has to assign some labels to either the
vertices or the edges of $G$ in an
arbitrary and easy-to-use way, and the user always wants to query on the graph with the labels chosen by herself.
From now on, we assume that only
vertices are being labeled/relabeled, not the edges, but in general, one can relabel either or both.
The user then hands over $G$ to the data structure designer, who can either
adopt the user's labeling or relabel the vertices for any arbitrary reasons. When the designer returns the final data structure to the user,
two scenarios emerge.
If the designer used the same labeling as the user, no additional \textit{``vertex mapping translation''} data structure is required,
resulting in an
\textit{``accessible''} data structure from the user's perspective.
However, if the labeling differs, the designer must provide either a vertex mapping translation data structure to the user so that the user
can correctly identify any
vertex in $G$ and can access/query the data structure meaningfully, or an algorithm for the user to compute the labeling by herself in an
online manner.
Note that both the translation data structure and the algorithm could be considered as a built-in component of our data structure for graphs.
Failure to address these cases leads to an \textit{``inaccessible''} data structure for $G$ from the user's perspective.

To illustrate, take as input a labeled tree.
The designer may relabel the tree so that the relabeled tree becomes some unique representative labeled tree in its automorphism group.
This move is very tempting, as it reduces the minimum bits to encode the tree from $(n-2)\log n$ to
$\approx 1.58n$~{\cite[Eq.(58), p. 481]{flajolet2009analytic}}.
However, it makes the data structure inaccessible since it forces the user to spend lots of computational resources (e.g., additional space
as an extra translation data
structure, or lots of time on accessing the original tree to find the correct way to access the data structure) on computing the automorphism
before accessing the data
structure as it is relabeled. Thus, from a data structure designer's perspective, it is certainly not desirable to construct a data structure
that is correct, and
efficient yet inaccessible (thus, somewhat useless) to the end user.
In light of the above discussion, let us now formally define the notion of accessible data structures.
\begin{quote}
    \textit{A data structure is said to be ``accessible'' to a user if the data structure designer uses the labeling chosen by the user for
        the input. Otherwise, the data
        structure is inaccessible.}
\end{quote}
Note that the
above definition naturally carries over to the previously defined succinct data structures, hence, resulting in
\textit{accessible succinct} data structures. While the notion of accessibility is generally a recommended design practice, it is worth
asking: Is it always strictly
required? It turns out that at least in some scenarios, we can relax this notion to some extent. The cost of transforming an ``inaccessible''
structure into an
``accessible'' one, by including a vertex mapping translation data structure, is at most $O(n\log n)$ bits (assuming the input is a graph $G$
with $n$ nodes). This cost is
negligible for combinatorial structures with an information-theoretic lower bound of $\omega(n\log n)$ bits. In such cases, the designer can
easily relabel and provide the
permutation to the user at minimal extra cost. Examples include chordal and split graphs~\cite{MunroW18}, posets~\cite{MunroN16}, strongly
chordal and chordal
bipartite graphs~\cite{Spinrad95}.

The notion of accessibility also demonstrates a simple yet absolutely critical fact that
\textit{the benchmark of a succinct data structure for one graph class is not unique}
as the user may label the graphs in any possible way and the labels should be considered as a part of the input.
More concretely, in the tree example, the information-theoretic minimum of the input varies from $\approx 1.58n$ bits to $(n-2)\log n$ bits,
depending on how the user
labels the tree. Thus, through the lens of ``accessibility'', all the space efficiency of previous data structures should be re-examined.

Concretely, for interval graphs,~\cite{DBLP:journals/algorithmica/AcanCJS21} (we will provide the formal definition and other required
properties of interval graphs shortly)
processes the input
to $n$ intervals in the range of $[1,2n]$ such that there
is exactly one endpoint that lands in each position $i\in[1,2n]$.
The space cost of their data structure is $n\log n+2n$ bits, and the authors show that the information-theoretic minimum of an unlabeled
interval graph is at least $n\log n-2n\log\log n$.
Hence, the authors claim a redundancy of $O(n\log\log n)$ bits.
However, suppose the user labels the graph in the same way as the authors, then its information-theoretic minimum is
$\log\left(\prod_{i=1}^n (2i-1)\right)\approx n\log n-0.44 n$. Thus, its
redundancy is in fact $\approx 2.44n$ bits. \
\textit{In conclusion, we want to emphasize that sticking to the notion of ``accessibility'' is a good data structure design
    principle almost always, and it is the responsibility of the data structure designer to design different data structures adapting to the
    labeling chosen by the
    user. Furthermore, as can be expected, all the existing previous succinct data structures are accessible in some sense, although this
    fact is always implicit in the literature.}

It is through this lens that we study the class of interval graphs in this paper. In particular, in this work,
\textit{we study the complexity of compressed data structures for interval graphs}. Roughly speaking, a simple undirected graph $G$ is called
an \textit{interval graph} if its vertices can be assigned to intervals on
the real line,
so that two vertices are
adjacent in $G$ if and only if their assigned intervals intersect.
Interval graphs are of immensely importance in multiple fields.
\Textcite{Hajos}, and then \textcite{Benser} independently defined these graphs, and since then, interval graphs
have been studied extensively for their structural properties
(for example, interval graphs are chordal graphs and perfect graphs) and algorithmic applications. Interval graphs have appeared in a variety
of applications, for example,
operations research and scheduling theory~\cite{Bar-NoyBFNS01}, bioinformatics~\cite{bio}, temporal reasoning~\cite{GolumbicS93}, VLSI
design~\cite{spin}, and many more.
We refer the reader
to~\cite{Golumbic,Golumbic85} for a thorough introduction to interval graphs and their myriad of applications. Being such a fundamentally
important and practically motivated graph
class, in this
work, we ask: \textit{Can we design an efficient data structure for interval graphs that respects the labeling chosen by the user? What would be the optimal time/space trade-off for such an interval graph data structure?}

\subsection{Our Set-up}\label{sec:setup}

In order to state our main results, let us first set the stage by formally defining and discussing some fundamental concepts and properties
that will be used throughout
this paper. For
any two integers $n$ and $m$, we use $[n]$ to denote $\{1,\dots,n\}$, and use $[m, n]$ to denote $\{m,\dots, n\}$ where $m\le n$.
When we consider data structures for graphs, we are concerned with mainly supporting the adjacency query
and  degree query. Given a graph $G = (V, E)$, and for any two vertices $u, v \in V$, $\adj (u, v)$ query returns true if $u$ and $v$ share
an edge in $G$, and $\deg(v)$
query for a
vertex $v$ returns the number of neighbors of $v$ in $G$. A graph $G = (V, E)$ is defined as an \emph{interval graph} if and only if
\begin{enumerate*}[label=(\roman*)]
    \item for each $v \in V$, there exists a  corresponding interval $I_v$ on the real line, and
    \item for any two vertices $u, v \in V$, $u$ and $v$ are adjacent if and only if $I_u \cap I_v \neq \emptyset$. We call a set of
          intervals $\{I_v \mid v \in G\}$ an  \emph{interval representation} of $G$.
\end{enumerate*}
Note that, a single interval graph can have multiple interval representations. For instance, the complete graph on three vertices,
$K_3$, can be represented as $\{[1,2],[2,3],[1,3]\}$ and $\{[1,1],[1,1],[1,1]\}$. To uniquely define and \textit{access} the vertices of $G$
from its interval representation, we
introduce the notion of a
\emph{universal interval representation} as follows.
\begin{definition}\label{def:universal}
    An interval representation $I$ of an interval graph $G$ with $n$ vertices is a \emph{universal interval representation} if
    every $i\in [n]$ is a leftmost endpoint of exactly one interval.
\end{definition}

Now, from the user's point of view, using the universal interval representation, each vertex can, very naturally, be denoted as the leftmost
endpoint of its corresponding
interval. This simply implies that vertex $i$ corresponds to the interval $I_i = [i, e_i]$. This shows that the universal interval
representation has the potential to be considered as a viable solution for interval graph representation from the \textit{accessibility}
standpoint if the user agrees to the same methodology for referring to the vertices of the input interval graph. But, does every
interval graph have a universal representation? Fortunately, we answer this question affirmatively by giving a constructive proof (the
details can be found in the proof of the
\cref{prop:special}). Hence, we conclude that the universal interval representation is a good proxy for studying the complexity of interval
graphs, and as a consequence, throughout this paper, we crucially adopt the convention that our input interval graphs are always represented
by adhering to this protocol.

Since each interval $i$ can have $n-i+1$ different right endpoints, i.e., $[i,i], \dots, [i, n]$, there exists up to $n!$ distinct universal
interval representations,
which intuitively implies that the information-theoretic minimum of the universal interval
representation is $\log (n!)$ bits.
Furthermore, any interval $I_i$ can be decoded using $\adj{}$ queries only by finding $\min_{j > i, \adj(i, j) = \text{false}} j$. Also,
$I_i$ can be decoded using
$\deg{}$ queries only, by decoding from the leftmost interval (the detailed procedure is described in \cref{sec:upper}).
Therefore, any data structure which supports $\adj{}$ or $\deg{}$ queries on interval graphs in universal interval representation should
consume at least $\log (n!)$ ($\approx n \log n-1.44n$) bits  of  space.
As alluded to earlier, the state-of-the-art information-theoretic lower bound to
encode any interval graph $G$ on $n$ nodes is given
by $n\log n - 2n\log \log  n - O(n)$ bits~\cite{DBLP:journals/algorithmica/AcanCJS21}.\footnote{Recently \textcite{bukh2022enumeration}
    claim an improved (tight) bound of $\approx n \log n - 2.478n + \Theta(\log n)$ bits.} As our lower bound does not deviate too much
from the information-theoretic minimum to encode unlabeled interval graphs, we believe that the notion of universal interval representation
is a very natural choice while studying the complexity of interval graph data structure.
We also observe that comparing with the above implementation of~\cite{DBLP:journals/algorithmica/AcanCJS21}, our universal interval
representation offers two salient benefits,
\begin{enumerate*}[label=(\roman*)]
    \item the implementation becomes more space-efficient, i.e., the benchmark (the information-theoretic minimum of the input) becomes
          $\log(n!)\approx n\log n-1.44n$ instead of $\approx n \log n -0.44n$ bits, saving $n$ bits of space implicitly, and
    \item as we will see in \cref{sec:tradeoff} and \cref{sec:upper}, it would be easier to analyze the distribution of interval graphs in
          universal interval representation for providing lower and upper bounds, since the left  endpoints are always fixed, whereas
          in~\cite{DBLP:journals/algorithmica/AcanCJS21}, left endpoints could be arbitrary distinct points from $[2n-1]$.
\end{enumerate*}

In conclusion, we believe that the assumption of the universal interval  representation is very natural, practical, and advantageous when
applied to interval graphs, and for this reason, it is also our method of choice in this work. Now we are ready to state the main results of
our paper.

\subsection{Our Contribution}\label{sec:contribution}
Since Jacobson's pioneering work~\cite{Jacobson89} on succinct data structures for trees, extensive research has led to substantial progress
in achieving such data  structures for various other graph classes.
In the regime of succinct space, the time cost of data structures very often is remarkably higher than in the regime of linear space, however
showing such a separation remains a challenge since~\cite{Jacobson89}.
Note that tight lower bounds  have been shown only for a handful of problems such as  \texttt{RANK}~\cite{DBLP:conf/soda/PatrascuV10}, range
minimum query~\cite{liu2021nearly,DBLP:conf/stoc/LiuY20}, permutations~\cite{DBLP:conf/soda/Golynski09}, systematic data structure for
boolean matrix-vector multiplication~\cite{10.1145/3188745.3188830} and sampler~\cite{10.1145/2488608.2488707} in the  succinct  regime. Our
research adds to this growing body of work by providing novel time/space trade-off lower bounds for succinct data structures designed for
interval graphs. More
specifically, we start by showing the following impossibility result, which, informally speaking, states that any data structure for interval
graphs that
assumes that the input interval graph is represented using a universal interval representation must have a high lower bound if the data
structure aims to
support both $\adj$ and $\deg$ query together.

\begin{restatable}[]{theorem}{thmtradeoff}\label{thm:tradeoff}
    Any data structure taking $\log (n!) + r$ bits for representing an input interval graph with $n$ vertices
    that answers $\adj$ in $t_1$ time, and $\deg$
    in $t_2$ satisfies
    \[
        r \ge \frac{n}{(\log n)^{O(t_1+t_2)}}
    \]
    in the cell-probe model with $w = \Theta(\log n)$ bits word size when the input interval graph is provided using a universal interval
    representation.
\end{restatable}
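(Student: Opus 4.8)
The plan is to prove \cref{thm:tradeoff} by an encoding (compression) argument in the cell-probe model, in the spirit of \patrascu{} and Viola's lower bounds for succinct partial sums. Fix the uniform distribution over universal interval representations of $n$-vertex interval graphs; by \cref{def:universal} this is exactly the distribution on $\vec e=(e_1,\dots,e_n)$ obtained by taking the $e_i$ independent with $e_i$ uniform on $\{i,\dots,n\}$, so it has entropy precisely $\log(n!)$. Since the answers to the $\adj$ queries already determine $\vec e$, the memory map $\vec e\mapsto M(\vec e)\in\{0,1\}^{\log(n!)+r}$ is injective, and is therefore itself a near-optimal encoding of the random input. The goal is to show that if $r$ were smaller than $n/(\log n)^{c(t_1+t_2)}$ for a suitable constant $c$, one could shave $\Omega\!\big(n/(\log n)^{O(t_1+t_2)}\big)$ bits off this encoding while still recovering $\vec e$ from it, which is impossible. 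Concretely, for a typical input one exhibits a set $P$ of cells, of total size $\Omega\!\big(n/(\log n)^{O(t_1+t_2)}\big)$ bits, that is ``forgettable'': the encoder outputs $M$ with the cells of $P$ erased, together with a short \emph{rescue string}; the decoder then reconstructs $\vec e$ (hence $M$, hence the erased cells) from this shorter description. If the rescue string plus the $O(\log\binom{m}{|P|})$-bit description of $P$ (where $m$ is the number of cells) is shorter than $|P|\cdot w$, the resulting encoding of $\vec e$ has length below $\log(n!)$, contradicting Shannon and forcing $r=\Omega\!\big(n/(\log n)^{O(t_1+t_2)}\big)$.

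The decoder proceeds in $O(t_1+t_2)$ rounds, and this is where the two query types must be played off against each other. It maintains a set of ``known'' cells, initialised to the complement of $P$, and in each round simulates a batch of $\adj(i,\cdot)$ and $\deg(i)$ algorithms on the partially known memory: a simulated query \emph{succeeds} if every one of its $\le t_1$ (resp.\ $\le t_2$) adaptively chosen probes lands in the known part, and otherwise the missing words are read off the rescue string, which makes them known as well. Two decoding tools are available. The $\adj(i,\cdot)$ answers localise the right endpoint $e_i=\max\{j:\adj(i,j)\text{ true}\}$; the $\deg$ answers deliver the aggregate information, via the identity $\deg(i)=e_i-1-|\{j:e_j\le i-1\}|$, which lets one decode the representation from left to right, each new degree value together with the already-decoded prefix revealing the next endpoint. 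The latter is precisely the prefix-sum-like quantity that a space-tight layout is forced to store redundantly. The set $P$ is chosen pseudorandom of density about $(\log n)^{-a}$ with $a=\Theta(t_1+t_2)$; then a fixed cell lies in $P$ with probability about the density, so a fixed query's whole simulation escapes the known set with probability shrinking geometrically in the number of not-yet-known ``layers'' it must cross, and summing the expected number of escaping probes over all simulated queries yields a geometric series whose total is below $|P|\cdot w$ by the required additive margin once the density is balanced against the $O(t_1+t_2)$ rounds.

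The main obstacle is exactly this round-by-round analysis in the presence of adaptive probes. One cannot fix in advance which cells a query reads, so a query may be declared answerable from a set $S$ only when its entire simulation stays inside $S$, and one has to bound, over the pseudorandom choice of $P$, the probability that it escapes; chaining this across the rounds so that the number of still-unknown cells -- and hence rescue words -- demanded by successive batches shrinks by a $\log n$ factor is the technical heart, and it is here that $\adj$ (which pins down individual endpoints) and $\deg$ (which supplies the prefix aggregates) must be used \emph{together}: with only one of them the shrinkage does not go through, consistent with the single-query upper bounds of \cref{sec:upper}. A further point is that the argument must be run on (almost) the entire family of $n!$ representations, not on an artificially small gadget family, since otherwise the data structure has ``slack'' in which to hide the redundancy; this is why the hard distribution is essentially the uniform one and why the bound is stated relative to $\log(n!)$. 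Finally, one checks that the reduction never leaves the accessible model: every query the decoder issues is an $\adj(i,j)$ or $\deg(i)$ query on the labels the user sees, namely the left endpoints, so no relabeling is smuggled in.
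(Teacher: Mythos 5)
Your proposal has the right high-level shape---an encoding argument against the uniform distribution on $\vec e$, exploiting the identity relating $\deg(i)$, $e_i$, and the prefix count $R(i)$---but the technical engine that actually produces the savings is missing, and the two specific mechanisms you propose do not do the job.

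First, the paper does not take $P$ to be a pseudorandom set of cells. Choosing $P$ of density $(\log n)^{-a}$ and arguing that ``a fixed query escapes $P$ with geometrically shrinking probability'' fails for adaptive query algorithms: the probes a query makes depend on the contents it has already read, so the data structure can deliberately route heavy queries toward or away from any fixed set; there is no clean way to bound the escape probability of an adaptive walk in a random subset without further structure. This is exactly the obstacle the paper negotiates by instead publishing $\Bin(Q'_0)$---the set of cells actually touched by a \emph{fixed, structured} family of search procedures ($\deg(i)$ plus a binary search for $e_i$ via $\adj(i,\cdot)$ over a grid $Q'_0$), and by then invoking the Pătrașcu--Viola/Liu--Yu round elimination machinery: publish those cells, show a constant fraction of hard query tuples $\query(j)=(\deg(j),\adj(j,e_j),\adj(j,e_j+1))$ lose a probe, and iterate, picking up the $(\log n)^{O(t_1+t_2)}$ factor from the growth of the published set across rounds (\cref{lem:q0goodoverlap} and the proof of \cref{thm:tradeoff}).

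Second, and more fundamentally, your one-shot decoder ``pays'' for every unknown cell it hits by reading its contents off the rescue string, which saves nothing: after all rounds the rescue string contains exactly the erased information. The actual savings in the paper come from a mutual-information argument (\cref{lem:mutual}): with the published cells $\Foot_{<\ell}(\Bin(Q'_0))$ in hand, the decoder can recover nearly all of $\ans(Q_\Delta)$ for a shifted grid $Q_\Delta$, and the threshold indicators $T(Q_\Delta)$ are shown to determine $T(Q_0)$ almost exactly---so the missing block $\Foot_\ell(\Bin(Q'_0))$, which determines $T(Q_0)$, can be encoded conditionally on $T(Q_\Delta)$ and thus compressed. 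Establishing this correlation is the heart of the proof and is delicate: it requires analysing the number of right endpoints per block under the non-i.i.d.\ distribution induced by the universal interval representation (a Poisson binomial, not a binomial), via Chernoff, Chebyshev, and a Berry--Esseen step (\cref{lem:Y_prime_and_Z,lem:common_and_distinct,lem:shift_variance}). Your sketch gestures at a ``geometric series'' but supplies no replacement for this correlation argument, and without it the encoding you describe cannot beat $\log(n!)$. So the gap is not a missing detail but a missing idea: the structured choice of published cells, the shift $Q_0\to Q_\Delta$, and the quantitative mutual-information lower bound that makes the conditional encoding strictly shorter.
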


Surprisingly, there exists a dichotomy between these two queries which we summarize in the following two theorems. Here we assume the usual
model of computation, i.e.,
a $\Theta (\log n)$-bit word RAM model while designing the following succinct data structures.

\begin{restatable}[]{theorem}{thmimprovedadj}\label{thm:improved_adj}
    There exists a data structure that supports $\adj$  queries  in $O(1)$ time using $\log(n!) + O(\sqrt{n}\log n)$ bits for $n$-vertex
    interval graphs represented in universal interval representation.
\end{restatable}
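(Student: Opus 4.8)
The plan is to reduce the theorem to a pure succinct-array problem and then encode that array with a ``loss-free change of base''. By \cref{def:universal} a universal interval representation of an $n$-vertex interval graph is exactly a tuple $(e_1,\dots,e_n)$ with $e_i\in\{i,\dots,n\}$, vertex $i$ owning $[i,e_i]$, and there are $\prod_{i=1}^{n}(n-i+1)=n!$ of them, matching the leading term. For $i<j$ the intervals $[i,e_i]$ and $[j,e_j]$ meet iff $j\le e_i$, so any query $\adj(u,v)$ with $u\neq v$ is decided by one comparison once $e_{\min(u,v)}$ is known. Thus it suffices to store $(e_1,\dots,e_n)$ in $\log(n!)+O(\sqrt n\log n)$ bits with $O(1)$-time access to any single entry.

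\emph{The encoding.} The naive layout --- entry $i$ in $\ceil{\log(n-i+1)}$ bits, with offsets computed in $O(1)$ since the prefix sums of $\ceil{\log(n-i+1)}$ are piecewise linear with breakpoints at powers of two --- already has $O(1)$ access but loses $\Theta(n)$ bits to the per-entry ceilings. To reclaim them I would cut $[n]$ into $\sqrt n$ blocks of $\sqrt n$ consecutive indices, read the entries of block $B_k$ as one mixed-radix integer $N_k\in[0,P_k)$ with $P_k=\prod_{i\in B_k}(n-i+1)$, and concatenate the $N_k$'s with the spillover / change-of-base technique of the succinct-data-structure literature (\patrascu's \emph{Succincter}; Dodis--\patrascu--Thorup), passing an $O(1)$-word carry between consecutive blocks so the $N_k$'s jointly occupy $\sum_k\log P_k+O(\log n)=\log(n!)+O(\log n)$ bits rather than $\sum_k\ceil{\log P_k}$. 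Alongside this I would store $\sqrt n$ boundary pointers together with the incoming carries ($O(\log n)$ bits each), an $O(\sqrt n)$-bit vector of the per-block rounding corrections, and an $O(\log^2 n)$-bit table that evaluates the closed forms $\log(n!/(n-m)!)$ and $\sum_{i\le m}\floor{\log(n-i+1)}$ in $O(1)$ time. The $\sqrt n$ pointers are exactly what produces the $O(\sqrt n\log n)$ redundancy.

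\emph{Constant-time decoding --- the crux.} To answer a query one loads the carry into $B_k$, reads $N_k$, and peels off the single coordinate $e_i$; on the raw mixed-radix integer this costs time proportional to the word-length of $N_k$ (one divides by the product of the ranges left of $i$), which is $\Theta(\sqrt n)$ --- far too slow. So inside each block one needs a genuinely $O(1)$-access, near-optimal representation of the $\sqrt n$ bounded integers it holds; this is again supplied by the change-of-base machinery (applied recursively, down to word-sized granularity, with the bottom few levels finished by a precomputed decoding table), at a cost of $O(\log n)$ redundancy bits per block, i.e.\ $O(\sqrt n\log n)$ overall. Adding the $O(\sqrt n)$ bits of block-level ceiling corrections and the $O(\sqrt n\log n)$ bits of boundary pointers, the total is $\log(n!)+O(\sqrt n\log n)$ bits, with $O(1)$ query time.

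\emph{Where the difficulty lies.} The hard part is the tension the decoding step exposes: every layer that keeps the encoding loss-free (a genuine change of base, not per-entry rounding --- the latter alone costs $\Theta(n)$ bits) makes the stored object span $\omega(1)$ machine words, so naive random access into it is $\omega(1)$ time; yet checkpointing the decode state densely enough for fast access would itself overflow the bit budget. Choosing the block size ($\sqrt n$), the recursion depth, and the carry widths so that one entry is recoverable in $O(1)$ time while the cumulative navigation data stays within $O(\sqrt n\log n)$ bits is the technical heart of the argument; the interval-graph-specific part --- the reduction above, and the power-of-two offset formulas --- is by comparison routine.
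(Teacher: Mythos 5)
Your proposal is correct and takes essentially the same route as the paper: reduce $\adj$ to random access into the endpoint array $(e_1,\dots,e_n)$ with $e_i\in[i,n]$, chop the array into $\sqrt n$ blocks of $\sqrt n$ consecutive indices, and encode each block with Dodis--P\u{a}tra\c{s}cu--Thorup to reach $\log(n!)+O(\sqrt n\log n)$ bits with $O(1)$ access. The one place you diverge (and make extra work for yourself) is inside a block: the paper first \emph{pads} every entry of block $k$ to the common alphabet $[n-(k-1)\sqrt n]$, so the stated fixed-alphabet form of \cref{lem:DPT} applies directly with a single $O(\log n)$ term of redundancy per block, and then spends a page of elementary series manipulation showing that the cumulative padding loss $\sum_k\sqrt n\bigl(\log(n-(k-1)\sqrt n)-\tfrac1{\sqrt n}\sum_{i\in B_k}\log(n-i+1)\bigr)$ is $O(\sqrt n\log n)$. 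You instead keep the exact mixed-radix product $P_k=\prod_{i\in B_k}(n-i+1)$, which forces you to invoke the more general variable-radix change-of-base machinery, to talk about chaining spills between blocks, and to add $O(\sqrt n\log n)$ bits of boundary pointers --- none of which is actually needed, since with deterministic radixes the per-block encoding lengths (and hence the block boundaries) are fixed functions of $n$ and $k$, and an independent per-block application of the lemma already leaves only $O(\sqrt n)$ bits of ceiling waste. Both routes land on the same bound; the paper's is the cleaner one to write out because it turns the ``loss-free change of base'' entirely into a black-box invocation of \cref{lem:DPT} plus an arithmetic estimate of the padding, whereas your version re-opens the internals of that lemma. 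Your identification of the $\Theta(\sqrt n)$-time bottleneck of naive mixed-radix peeling, and why the within-block encoding must itself be $O(1)$-access, is exactly the right thing to worry about and matches what the lemma is there to solve.
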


\begin{restatable}[]{theorem}{thmimproveddeg}\label{thm:improved_deg}
    There exists a data structure that supports $\deg$  queries  in $O(1)$ time using $\log(n!) + O({n}^{2/3}\log n)$ bits for $n$-vertex
    interval graphs represented in universal interval representation.
\end{restatable}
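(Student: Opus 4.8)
The plan is to reduce the $\deg$ query to two primitive operations on the sequence of right endpoints and then to engineer a near-optimal layout of that sequence supporting both primitives in $O(1)$ time. First I would put the query in closed form. Writing $I_i=[i,e_i]$ with $i\le e_i\le n$, the universal interval representation is exactly the sequence $(e_1,\dots,e_n)$, and there are $\prod_{i=1}^{n}(n-i+1)=n!$ of them, which matches the benchmark. For $j>i$ we have $\adj(i,j)$ iff $j\le e_i$ (since $e_j\ge j$ is automatic), so $i$ has exactly $e_i-i$ neighbours of larger index; for $j<i$ we have $\adj(i,j)$ iff $e_j\ge i$. Since $e_j<i$ forces $j<i$ and $j\ge i$ forces $e_j\ge i$, the number of neighbours of smaller index is $(i-1)-\lrvert{\{j:e_j<i\}}$, whence
\[
  \deg(i)\;=\;e_i-1-\lrvert{\{\,j\in[n]:e_j<i\,\}} .
\]
So it suffices to store $(e_1,\dots,e_n)$ in $\log(n!)+O(n^{2/3}\log n)$ bits while answering, in $O(1)$ time, both $\mathsf{access}(i)=e_i$ and $\mathsf{rank}(i)=\lrvert{\{j:e_j<i\}}$, the latter being a rank query on the multiset $\{e_1,\dots,e_n\}$.

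For the layout I would cut $[n]$ into $n/s$ consecutive blocks of size $s$ and re-encode each $e_i$ by its target block $\lceil e_i/s\rceil\in\{\lceil i/s\rceil,\dots,n/s\}$ together with its offset inside that block. A short count shows this re-bracketing of the mixed-radix code for $(e_i)$ is loss-free: the number of (block, offset) pairs available to index $i$ is again $n-i+1$, so storing the two resulting arrays near-optimally keeps the total at $\log(n!)$ plus lower-order terms. On top of this I would keep, at block granularity, prefix sums of how many endpoints land in each block -- this yields the coarse part of $\mathsf{rank}$ -- and, for each block, a compact description of the multiset of offsets landing in it, from which both the fine part of $\mathsf{rank}$ and the value $e_i$ are read off; intra-block queries are made $O(1)$ by further carving blocks into $\Theta(\log n/\log\log n)$-bit words and consulting precomputed universal tables of size $n^{o(1)}$. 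Choosing $s$ to balance the $O((n/s)\log n)$ cost of the block headers against the intra-block overhead yields the stated redundancy; morally this structure extends the adjacency structure of \cref{thm:improved_adj}, the new component being the rank machinery for the right endpoints.

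The delicate point -- and the main obstacle -- is supporting $\mathsf{rank}$ in $O(1)$ without paying the standard redundancy: encoding the endpoint multiset as a length-$2n$ Dyck-type bit string with an off-the-shelf rank index costs $2n+\Theta(n/\mathrm{polylog}\,n)$ extra bits, which is $\omega(n^{2/3})$, while a wavelet tree over $[n]$ gives both primitives but only with $\omega(1)$ time or $\Theta(n\log\log n)$ redundancy. The construction must instead interleave the offset arrays with the rank checkpoints so that the $\approx 2n$ bits nominally spent on the multiset are exactly the bits saved in the assignment code, leaving only the block headers and the table navigation as genuine redundancy; getting this bookkeeping to close at $O(n^{2/3}\log n)$ rather than at the $O(\sqrt n\log n)$ available for adjacency alone is where the extra effort goes, and I expect the case analysis forced by the diagonal constraint $e_i\ge i$ -- which shrinks the per-entry alphabet as $i$ grows and hence makes the blocks ``ragged'' -- to be the most error-prone part of the write-up.
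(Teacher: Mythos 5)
Your reduction $\deg(i)=e_i-1-\lrvert{\{j:e_j<i\}}$ is correct, and it is indeed the identity behind the paper's hardness argument (compare Observation~(2) in Section~2.1). But the data structure you sketch cannot exist, and the paper's own lower bound is the witness. You propose to store $(e_1,\dots,e_n)$ in $\log(n!)+O(n^{2/3}\log n)$ bits supporting both $\mathsf{access}(i)=e_i$ and $\mathsf{rank}(i)=\lrvert{\{j:e_j<i\}}$ in $O(1)$ time. If that were possible, $\mathsf{access}$ alone would answer $\adj$ in $O(1)$ (for $i\le j$, $\adj(i,j)\iff j\le e_i$), and $\mathsf{access}$ together with $\mathsf{rank}$ would answer $\deg$ in $O(1)$; you would then have a constant-time structure for \emph{both} queries with $O(n^{2/3}\log n)$ redundancy, contradicting \cref{thm:tradeoff}, which forces $r\ge n/(\log n)^{O(1)}\gg n^{2/3}\log n$ at constant time. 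So the ``delicate bookkeeping'' you defer is not a bookkeeping problem at all: the interleaving you hope for cannot close at $O(n^{2/3}\log n)$, for information-theoretic reasons this very paper proves.

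The paper's proof avoids the trap by never making $e_i$ or the rank individually accessible. It cuts $[n]$ into blocks of size $n^{1/3}$, classifies the neighbours of a vertex $i$ in block $k$ into four kinds, notes that the count of kind~(i) (intervals straddling the whole block) is the same for every $i$ in block $k$ and can be tabulated with $O(n^{2/3}\log n)$ bits, and then directly encodes, per vertex, the residual $\deg(i)$ minus that block-uniform count. That residual lies in $\bigl[0,\,n-(k-1)n^{1/3}+b_k\bigr]$ where $b_k$ is the number of right endpoints landing in block $k$; the side constraint $\sum_k b_k=n$ plus Jensen's inequality is exactly what makes the per-block alphabet sizes (fed into \cref{lem:DPT}) sum to $\log(n!)+O(n^{2/3}\log n)$. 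Because only this difference quantity is stored, neither $e_i$ nor $\mathsf{rank}(i)$ is recoverable in constant time, which is what keeps the structure on the right side of \cref{thm:tradeoff}. Your plan needs to be reworked along these lines: encode the degree (up to a cheaply stored shift), not the endpoint sequence.
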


Notice that for constant time $\adj$ and $\deg$ queries together, our lower bound from \cref{thm:tradeoff} results in
$\log(n!) + O(n/\polylog(n))$ bit, thus, from the
perspective of succinct data structure complexity, we capture an interesting phenomenon of incompatibility between two different queries on
the same database: There are
efficient data structures for the two queries individually, but any data structure which supports two queries simultaneously has a high lower
bound.
To the best of our knowledge, this is the first proof of such a phenomenon.
Previously, \textcite{DBLP:conf/soda/Golynski09}
proved lower bounds for three different problems with multiple queries.
It turns out that two of the problems (text searching \& retrieving, rank/select query on matrix) can be reduced from the rank/select problem
on a boolean string, which has high lower
bounds with individual query by~\cite{DBLP:conf/soda/PatrascuV10}.
On the other hand, there is no known efficient succinct data structure (e.g., constant time and $n^{0.9}$ bits of
redundancy)
for the third problem (permutation) with an individual query, to our knowledge.

\subsection{Previous techniques}\label{sec:previoustech}
The state-of-the-art lower bound techniques in cell-probe model are derived from~\cite{DBLP:journals/jcss/MiltersenNSW98}.
In this seminal work, Miltersen et al.\ propose two important techniques, \emph{richness} and \emph{round elimination}.

The richness technique is refined and improved later on by~\cite{patrascu2006higher,PTW10}.
The technique appears in context of proving lower bound for static data structure~\cite{10.1109/SFCS.1989.63450,PTW10,Larsen12b,GL16,Yin16},
dynamic data
structure~\cite{Larsen12a,CGL15,LWY18}, streaming~\cite{LNN15}, and succinct data
structure~\cite{GM07,10.1145/2488608.2488707, 10.1145/3188745.3188830}.
The richness technique is similar to the rectangle refutation technique in the literature on communication complexity.
The richness technique is all about showing that
\begin{enumerate*}[label=(\roman*)]
    \item a too-good-to-be-true data structure implies the existence of a large monochromatic (or ``almost monochromatic'' for randomized
          data structures) combinatorial
          rectangle in the answer table of the data structure problem;
    \item large (``almost'') monochromatic rectangles do not exist in the answer table.
\end{enumerate*}
Specifically, we find a large set $Q$ of queries and a large set $B$ of databases by observing a too-good-to-be-true data structure, and
proving that
\begin{enumerate*}[label=(\roman*)]
    \item for any $q\in Q, b\in B$, the answer to $q$ on $b$ is the same (or ``almost the same'' for randomized data structure);
    \item by extremal combinatorics, in the answer table of the data structure problem, there do not exist such $Q$ and $B$.
\end{enumerate*}
It is known that the technique works well in proving lower bound for problems which satisfy that the answers to a few fixed queries reveal
much information about the database.

The round elimination technique is also used in, and refined to adapt to, various scenarios accordingly, including static data structure with
polynomial
space~\cite{SEN2008364,chakrabarti2010optimal,10.1145/3209884}, static data structure with near-linear
space~\cite{patracscu2006time,DBLP:conf/soda/PatrascuT07}, and succinct data
structure~\cite{DBLP:conf/soda/PatrascuV10,DBLP:conf/stoc/LiuY20,liu2021nearly}.
The technique works in the following manner:
\begin{enumerate*}[label=(\roman*)]
    \item suppose there is a data structure with time cost $t$, then we can modify the data structure to obtain a new data structure that has
          a time cost of $t-1$ but
          consumes more space, and/or works only for a smaller database;
    \item we recursively apply this modification and at the end of the recursion, there is a data structure that solves a non-trivial problem
          but does not access any memory
          cell, so we obtain a contradiction.
\end{enumerate*}
In this work, we adapt the ideas from~\cite{DBLP:conf/soda/PatrascuV10,DBLP:conf/stoc/LiuY20} to prove the lower bound.

\Textcite{DBLP:conf/soda/Golynski09} develops an independent technique to prove lower bounds for a special class of problems that support a
pair of opposite queries
which can ``verify''
each other, for example, $\pi(i)=j$ and $\pi^{-1}(j)=i$ assuming the database is a permutation $\pi$.
The central observation is the following.
Assume the time cost is low, then there is a cell $d$ such that the number of opposite query pairs which access $d$ simultaneously is only
$\beta\ll w/\log w$ where the
word size is $w$ bits.
Thus, we can compress the data structure by removing $d$ in the following way:
\begin{enumerate*}[label=(\roman*)]
    \item write down the address of $d$ to identify $d$;
    \item enumerate and execute queries to identify the $\beta$ opposite query pairs which access $d$ simultaneously;
    \item write down the bijection between the queries with $\beta\log \beta$ bits.
\end{enumerate*}
Indeed, for a pair of opposite queries $\pi(i)=j$ and $\pi^{-1}(j)=i$, if only $\pi(i)$ accesses $d$, then $\pi^{-1}(j)=i$ can be answered
without knowing $d$.

\Textcite{natarajanramamoorthy_et_al:LIPIcs:2018:8862} propose a different technique to prove lower bound for dynamic non-adaptive succinct
data structures with the
sunflower lemma.
They proved lower bounds for a couple of elementary data structure problems, including returning the median (or the minimum) of a set of
numbers, and predecessor search.
Their technique essentially leverages the \emph{non-adaptivity} of the \emph{dynamic} data structures. Non-adaptivity means that the set of
cells which are accessed by the query/update algorithm is independent of the database.
Therefore, the technique is not applicable in our case, as we study \emph{fully-adaptive} \emph{static} data  structures.

Very recently, \textcite{li2023tight} proved a tight lower bound for dynamic succinct dictionaries.
They extend a classical idea in proving lower bound for the dynamic data structures, and develop an ad-hoc analysis for dynamic dictionary
problem.
Their technique intrinsically utilizes the update operations of \emph{dynamic} data structure, and hence does not apply to our case of
\emph{static} data structure.

\section{Technical Overview}\label{sec:techover}
Our lower bound results presented in \cref{sec:tradeoff} are established in the \emph{cell-probe} model~\cite{DBLP:journals/jacm/Yao81a}.
In the cell-probe model, we have given two algorithms, $C,Q$ which can have unlimited computational power.
$C$ takes a certain structure, such as a graph, and produces a table of $s$ words with each word having $w$ bits
(the word size).
Given any query, we access the table using the query algorithm $Q$.
We are allowed to adaptively access at most $t$ cells of the table.
Meaning, given a query, $Q$ decides the first cell to probe. Then $Q$ decides using the cell's
content the next cell to probe and so on. At the end, it accessed $t$ cells and outputs the
answer to the query according to the contents of the $t$ cells.
We call the tuple of parameters $(s,w,t)$ the \emph{cell-probe complexity} of the algorithm.
This is the only complexity metric we will use for our lower bounds.
Additionally, we augment the model as in~\cite{DBLP:conf/soda/PatrascuV10} with $P$ extra bits which $Q$ is allowed to access for free.
The bits are called \emph{published bits}.
Informally, one may consider the published bits as the cache of the CPU, so it is (almost) free to access.

\subsection{The Issue of Applying \texorpdfstring{\patrascu{}}{Patrascu} and Viola's Technique}\label{sec:pv10tec}
We adopt the basic framework from~\cite{DBLP:conf/soda/PatrascuV10}.
As we discussed previously, the framework is called round elimination.
Without loss of generality, suppose the information-theoretic lower bound of the database is $n$ bits, the word size is $w$ bits, and we have
a data structure that takes $n+r$ bits of space and solves any query in $t$ time.
Besides the memory,~\cite{DBLP:conf/soda/PatrascuV10} further assumes that there are some \emph{published bits}, which the query algorithm
can access \emph{for free}.
Given a succinct data structure using $n+r$ bits of space and $t$ time, one may construct a data structure that consumes $n$ bits of space,
$r$ bits of published bits and
solves any query in $t$ time.
The round elimination in~\cite{DBLP:conf/soda/PatrascuV10} then works in the following manner:
\begin{enumerate*}[label=(\roman{*})]
    \item Given a data structure using $p$ bits of published bits and $t$ time, we can find a set $\mathcal{P}$ of memory cells such that a
          fraction of queries, e.g., a constant fraction, access at least one cell in $\mathcal{P}$.
    \item Then we copy the contents together with the addresses of the cells in $\mathcal{P}$ to the published bits, and modify the original
          query algorithm so that whenever the
          original query algorithm wants to access a cell in $\mathcal{P}$, the query algorithm can save the memory access with the
          information we copied to the published bits.
    \item Thus, we obtain a data structure which uses $p+|\mathcal{P}|(\log n+w)$ published bits and answers an average query with
          $t-\Omega(1)$ times of memory access.
    \item We recursively apply this modification and at the end, we obtain a data structure which does not access memory at all but can solve
          any query. Then the final data structure must use at least $n$ published bits.
\end{enumerate*}
In~\cite{DBLP:conf/soda/PatrascuV10}, $|\mathcal{P}|$ is always in $\Theta(p\log n)$.
Hence, at the end of the recursion, the authors have $r(w+\log n)^{O(t)}$ published bits, and they can prove $r=n/(\log n)^{O(t)}$ by
assuming $w=O(\log n)$.

The issue of applying the original technique in~\cite{DBLP:conf/soda/PatrascuV10} to our problem happens in the very first step.
Intuitively,~\cite{DBLP:conf/soda/PatrascuV10} can find such $\mathcal{P}$ because the answers to the queries are correlated and the database
is encoded succinctly. Hence, many queries
must access the same cells to retrieve the same information.
For the interval graph, the answers to adjacency queries are almost independent of each other\footnote{
    One may think that we are dealing with an array $A$ of integers from $[n]\times[n-1]\dots[2]\times[1]$ which is the length of eac
    interval.
    To answer $\adj(i,j)$ with $j\ge i$, we simply retrieve $A_i$ and
    check if $A_i\ge j-i+1$.
    So the answers to queries are basically independent of each other.
    In fact, if one would like to solve adjacency query only, in the cell-probe model we can have a data structure using $\log(n!)+1$ bits of
    space and constant time by
    generalizing the technique in~\cite{DBLP:conf/stoc/DodisPT10}.
    See formal proof in \cref{sec:cell_adj}.
}.
A similar thing holds for the degrees of the vertices.

\pagebreak
\paragraph{Observations:}
\begin{enumerate}[label=(\arabic*)]
    \item Answers to adjacency queries can verify the length of each interval.
    \item For each interval with left endpoint $i$, if one can know $\deg(i)$ together with the right endpoint of the interval
          (denoted by $e_i$), one can know the number of right endpoints which land before position $i$ (denoted by $R(i)$) as $e_i=\deg(i)+R(i)$.
    \item \cite{DBLP:conf/soda/PatrascuV10} proves a lower bound for the \texttt{RANK} problem where the database is a Boolean string
          $\{0,1\}^n$ and the answer to query $i$ is the number of ones before position $i$.
\end{enumerate}
Hence, our problem is somehow similar to the one in~\cite{DBLP:conf/soda/PatrascuV10}.
To leverage this observation, we will exploit that ``answers to adjacency queries can verify the length of each interval''.
At first glance, this might be a problem as
querying $\adj(i,j)$ for $j\ge i$ only returns one bit to tell us if the length of interval $i$ is at least $j-i$. There is no fixed query to
reveal the length.
Therefore, we focus on a tuple of \emph{hard} queries $\deg(i)$, $\adj(i,e_i)$, and $\adj(i,e_i+1)$.
Notice that $e_i$ is hidden from us and is a random variable which depends on the database.
The situation reminds us of~\cite{DBLP:conf/stoc/LiuY20} which fixes a similar issue about proving lower bound for
some hard queries which depend on the database.

\subsection{Towards applying \texorpdfstring{\citeauthor{DBLP:conf/stoc/LiuY20}}{Liu and Yu}'s technique}\label{sec:LYtech}
To find the set $\mathcal{P}$ of cells such that many hard queries access at least one cell in $\mathcal{P}$, we should find a set $Q$ of
queries such that the answers to
$Q$
are highly
correlated with the hard queries and the \emph{answers} to the hard queries. Then we let $\mathcal{P}$ be the set of cells which are accessed
by $Q$.
Previous works~\cite{DBLP:conf/soda/PatrascuV10,DBLP:conf/stoc/LiuY20,liu2021nearly} apply a proof by contradiction to show that many hard
queries access at least one cell in
$\mathcal{P}$.
\Cite{DBLP:conf/soda/PatrascuV10} shows that if $\mathcal{P}$ is barely accessed by hard queries, assuming the space cost is $n+p$ bits. Then
we can apply the following encoding:
\begin{enumerate}
    \item Write down the content of $\mathcal{P}$, which is the cells accessed by $Q$, in the order of being accessed by the query algorithm.
    \item The amount of information shared by the answers to hard queries and the answers to $Q$ is $\omega(p)$ bits.
          Therefore, we write down the content of the cells accessed by the hard queries, given the content of $\mathcal{P}$, with the
          optimal encoding to save $\omega(p)$ bits
          of space.
    \item Write down the contents of the remaining cells in the lexicographical order of their addresses.
\end{enumerate}
This allows us to encode the data structure with $n-\omega(p)$ bits which is impossible.

When both $Q$ and the hard queries are fixed and independent of the database,
we only have to write down the content of $\mathcal{P}$ and the cells accessed by hard queries in the order in which they are accessed by the
query algorithm, as the address of the next
cell is automatically revealed by simulating the query algorithm on previous cells.
When $Q$ or the hard queries are determined by the database, the simple argument in~\cite{DBLP:conf/soda/PatrascuV10} does not work anymore
because we cannot find the set
of cells we are supposed to compress to save space.
In fact, the information which is shared by the two sets of cells, i.e., the information we are supposed to compress, may precisely be
``what is the set of cells'', so we
cannot simply write the addresses down.

\Cite{DBLP:conf/stoc/LiuY20,liu2021nearly} overcome this.
They note that if we can encode $Q$ at low cost, then we can find an $\ell\in [t]$ such that the contents of the cells, which are accessed by
$Q$ in $\ell$-th round of
memory access, share a lot of information with the answers to the hard queries.
The idea works because the addresses of the $\ell$-th cells are revealed by the contents of the previous $\ell-1$ cells.
They introduce a new encoding argument to use this:
\begin{enumerate}
    \item Write down a few bits to encode $Q$ together with $\ell$.
    \item Write down the contents of the cells which are accessed by $Q$ in the first $\ell-1$ rounds of
          memory access in the order in which they are accessed by the query algorithm.
    \item Write down the contents of the remaining cells, except the cells which are accessed by $Q$ in the $\ell$-th round of memory access,
          in the lexicographical order of their addresses. Note that most of the cells which are accessed by the hard queries are written down in
          this step because they barely intersect with the $\ell$-th cells by the assumption.
    \item Enumerate every query and try to answer them by simulating the query algorithm on the cells we have written down.
          We have enumerated many hard queries and revealed answers to them because of the disjointedness given
          by the assumption, and we know a lot of information about the contents of the
          missing cells because of the aforesaid correlation.
          Then we write down the contents of the missing cells with the known information and the optimal encoding to save $\omega(p)$ bits
          of space.
\end{enumerate}

\paragraph{$Q$ depends on the dataset.}
An immediate issue arises when one tries to apply~\cite{DBLP:conf/stoc/LiuY20,liu2021nearly}'s technique on the hard query tuples
$(\deg(i),\adj(i,e_i),\adj(i,e_i+1))$.
Recall that we would like to utilize $R(i)$, the number of right endpoints that land before position $i$.
It turns out that the only query, which is correlated with $R(i)$'s and which we could find, is another tuple of hard queries.
In other words, $Q$ cannot be encoded at low cost in our case.
We observe that $e_i$ can be computed by a binary search using $\adj(i,\cdot)$ queries.
Hence, we let all the cells accessed by the binary search procedure be the published bits.
One may think that $Q$ is fixed in some sense, but we modify the query algorithm so that $Q$ probes a lot of cells, so that the answers to
$Q$ reveal a lot of information
about the hard queries and their answers.
Specifically, in the original proof, $Q$ consists of $\approx p$ queries, each access $t\le\log n$ cells adaptively according to the query
algorithm, forming $\approx p$
sequences of cells of length $t\le\log n$. In our case, $Q$ still  consists of $\approx p$ queries but each access $t\log n\le \log^2 n$
cells adaptively. This forms $\approx p$
sequences of cells of length $t\log n\le\log^2n$ and each sequence of cells is obtained by adaptively combining the original query algorithm.
The approach is a natural extension of the idea of~\cite{DBLP:conf/stoc/LiuY20,liu2021nearly}.

\paragraph{Mutual information.}
Another technical issue is to lower bound the mutual information between the answers to $Q$ and the answers to hard queries.
The mutual information is a significant ingredient in Step 4 of the aforesaid new encoding argument.
The distribution here is very different from the previous ones and is sophisticated, hence, analyzing it is a highly non-trivial job.
Indeed, in the problem studied by~\cite{DBLP:conf/soda/PatrascuV10}, in each position the number of ones has a Bernoulli distribution with
parameter $1/2$ and the numbers
of ones in different positions are mutually independent of each other. Therefore, the number of ones in any range has binomial distribution
which we can easily analyze
with standard techniques.
In our case, the ``number of right endpoints in position $i$'' has Poisson
binomial distribution with parameters $p_1=1/n,p_2=1/(n-1), \dots, p_i=1/(n-i+1)$ and the numbers in different positions  are correlated with
each other.

We still adopt the idea from~\cite{DBLP:conf/soda/PatrascuV10} that breaking the universe of queries
into $p$ consecutive blocks of length $n/p$, denoted by $q_1,\dots,q_p$, and
letting $Q$ together with answers to $Q$ reveal the number of right endpoints in each block.
Assuming only a few hard queries access some cell accessed by $Q$, then by an average argument, there
is a set of hard queries which break the universe into $p$ consecutive
blocks, denoted by $q'_1,\dots,q'_{p}$, of length $n/p$, with the first and the last blocks being exceptions we ignore for the overview.
Additionally, the intersection of $q_{i+1}$ and $q'_{i}$ is of length $(1-o(1))n/p$ for
any $i$.
These are now two different partitions of the same universe $[n]$.
In order to measure the information contained in the numbers of right endpoints in the blocks, we set a threshold for each of the $2p$ blocks
of length $n/p$ in the two
partitions and check if the number of right endpoints is larger than the threshold for each aforesaid block.
We use random variable indicators $T_i$'s and $T'_i$'s to denote these.
The proof for lower bounding mutual information then consists of two components: Showing that the entropy of $\{T_i\}$ is $\Omega(p)$, and
showing that the entropy of $\{T_i\}_{i\in [p]}$ is  $o(p)$ given $\{T'_i\}_{i\in [p]}$.

We achieve the first component by the following argument.
Let $Y_i$ denote the number of right endpoints before block $q_i$.
To lower bound $H(T_1,\dots,T_p)$, it suffices to lower bound $H(T_i \mid Y_i)$ by the chain rule and
observing that $H(T_i|T_1,\dots,T_{i-1})\ge H(T_i|Y_1,\dots,Y_i)=H(T_i|Y_i)$.
Let $Y'_i$ denote the number of right endpoints which are in block $q_i$ but their corresponding left endpoints are not. Let $Z_i$ denote the
number of intervals which are fully contained
in block $q_i$. Then $T_i$ indicates if $Y'_i+Z_i$ is large enough.
We further observe that $Y'_i$ has a binomial distribution as long as $Y_i$ is fixed, and $Y_i$ concentrates around its expectation with high
probability.
Thus, we are able to prove that $Y'_i$ acts like a binomial random variable and, hence, deviates from some fixed value in an approximately
symmetric way, and that
$|Z_i-\E{[Z_i]}|$ is dominated by the former deviation with high probability.

To upper bound $H(T_1,\dots|T'_1,\dots)$, it suffices to upper bound $H(T_{i+1}|T'_i)$ by the chain rule and the sub-additivity of entropy.
Note that the range $q_{i+1}\cup q'_i$ consists of a common part of length $(1-o(1))n/p$ and two distinct parts of length $o(n/p)$.
We are going to prove that the number of right endpoints in the common part acts like a binomial random variable with a similar proof to the
one in the previous  paragraph.
So it deviates from its expectation like a binomial distribution.
For the ``numbers of the right endpoints'' in the distinct parts, since they are sums of a binomial random variable and a Poisson binomial
random variable given fixed $Y_i$, and since distinct parts are of length $o(n/p)$, we are able to prove that their variances are small.
This allows us to apply Chebyshev's inequality to upper-bound the deviations.

\subsection{Upper Bound Techniques}\label{sec:uppertech}
\Textcite{DBLP:journals/algorithmica/AcanCJS21} recently provided an $(n \log n + 2n)$-bit representation of an interval graph $G$ as
follows: They first represent
$G$ as $n$ intervals of $2n$ distinct endpoints from $[2n]$. Here, each vertex is denoted as $i$ if and only if the corresponding interval's
left endpoint is at the
$i$-th leftmost position among all $n$ left endpoints.
The encoding consists of two components:
\begin{enumerate*}[label=(\roman*)]
    \item A binary sequence of size $2n$ that indicates whether a given point $i$ is the leftmost or rightmost endpoint, and
    \item an integer sequence of size $n$ that stores the right endpoints according to the order of their leftmost points.
\end{enumerate*} Using this representation, one can decode the interval $I_i$
corresponding to vertex $i$.
In addition to this, their data structure provides efficient support for both $\adj{}$ and $\deg{}$ queries in $O(1)$ time by combining the
representation with an
$o(n)$-bit auxiliary structures for supporting \texttt{RANK} and \texttt{SELECT} queries in $O(1)$ time~\cite{ClarkM96}.

In order to improve this result by  $\Omega(n)$  bits while supporting individual queries in $O(1)$ time, we introduce mainly two ideas:
\begin{enumerate*}[label=(\roman*)]
    \item As alluded previously, we assume the input graph is in the universal interval representation, so that the entropy of the input
          database is $\log(n!)\approx n\log n-1.44n$ bits instead of $\approx n \log n -0.44n$ bits, and
    \item we use a fairly standard trick of dividing the input into smaller-sized sub-blocks to encode them even more space-efficiently
          without compromising the query time.
\end{enumerate*}
The previous data structure works with the global information about the input directly without exploiting the interaction between the
intervals in a more granular
fashion.
We provide a high-level overview below whereas all the details can be found in \cref{sec:upper}.

\paragraph{Adjacency query:}
Recall that we assume that the graph is represented in universal interval representation.
To answer adjacency query $\adj(i,j)$ with $i\le j$, it suffices to retrieve the length of interval $i$ and then compare the right endpoint
with $j$.
Hence, it suffices to encode an array $A\in [n]\times[n-1]\cdots[2]\times[1]$ which reminds us of the technique
in~\cite{DBLP:conf/stoc/DodisPT10}.
This technique works for encoding an
array from $[\Sigma]^n$ with $n\log\Sigma+O(\log n)$ bits.
In fact, in the cell-probe model, i.e., assuming the CPU has unlimited computational power, the array $A$ can be encoded in $\log(n!)+1$ bits
by
generalizing~\cite{DBLP:conf/stoc/DodisPT10}. We exhibit this in \cref{sec:cell_adj}.
To apply the technique from~\cite{DBLP:conf/stoc/DodisPT10} in the word RAM model, we divide the array into $\sqrt n$ blocks of equal length.
To encode a block, which is a sub-array from $[n-k\sqrt n] \times\cdots\times [n-(k+1)\sqrt n+1]$ for some $k\in\{0,\dots,\sqrt n-1\}$,
we simply apply~\cite{DBLP:conf/stoc/DodisPT10}'s encoding scheme with a universe of size $[n-k\sqrt n]^{\sqrt n}$.

\paragraph{Degree query:}
\begin{figure}
    \begin{tikzpicture}
        \tikzstyle{Ii} =[fill, thick, black]
        \tikzstyle{i}  =[fill, thick, red];
        \tikzstyle{ii} =[fill, thick, gray];
        \tikzstyle{iii}=[fill, thick, blue];
        \tikzstyle{iv} =[fill, thick, violet];

        \draw[draw=black] (0,0) rectangle ++(3,2);
        \draw[draw=black] (3,0) rectangle ++(3,2);
        \draw[draw=black] (6,0) rectangle ++(3,2);

        \draw[Ii] (4.5,0.2) rectangle (8.5,0.15);
        \draw[dashed] (4.5,0) -- (4.5,2);

        \draw[i] (1,0.4) rectangle (8,0.45);
        \draw[i] (2,0.6) rectangle (7,0.65);

        \draw[ii] (5,0.8) rectangle (7.5,0.85);
        \draw[ii] (4.75,1.0) rectangle (8.75,1.05);

        \draw[iii] (0.5,1.2) rectangle (4,1.25);
        \draw[iii] (1.5,1.4) rectangle (5,1.45);

        \draw[iv] (3.5,1.6) rectangle (4.25,1.65);
        \draw[iv] (3.25,1.8) rectangle (8,1.85);

        \node[Ii] at (-3,2) {};
        \node at (-1.75,2) {$i$-th interval};
        \node[i] at (-3,1.5) {};
        \node at (-2.5,1.5) {(i)};
        \node[ii] at (-3,1) {};
        \node at (-2.5,1) {(ii)};
        \node[iii] at (-3,0.5) {};
        \node at (-2.5,0.5) {(iii)};
        \node[iv] at (-3,0) {};
        \node at (-2.5,0) {(iv)};
    \end{tikzpicture}
    \caption{Type of Intervals for Degree Query\label{fig:upper_bound_intervals}}
\end{figure}
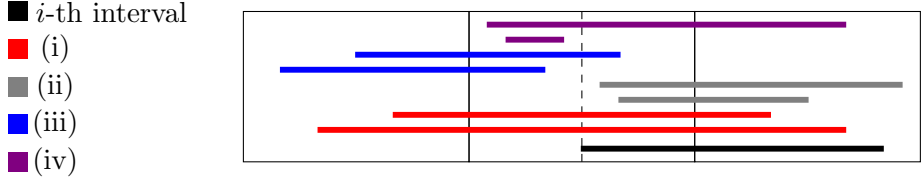

To answer degree query, we again apply the trick of block partitioning but with the block size of $\sqrt[3]n$.
For an interval $i$ in the $k$-th block, there are four different intervals which may intersect with interval $i$
(See \cref{fig:upper_bound_intervals}).
\begin{enumerate*}[label=(\roman*)]
    \item The left endpoint lands before the $k$-th block, and the right endpoint lands after the $k$-th block;
    \item the left endpoint lands in the interval $i$;
    \item the left endpoint lands before the $k$-th block, and the right endpoint lands in the $k$-th block;
    \item the left endpoint lands in the  $k$-th block but before the position $i$.
\end{enumerate*}
For each interval in the same block, the number of intervals of kind (i) is the same, so we can store them with an array using
$O({n}^{2/3}\log n)$ bits.
Assuming there are $b_k$ right endpoints land in block $k$, the numbers of the other three kinds of intervals are at most
\begin{enumerate*}[label=(\roman{*})]\setcounter{enumi}{1}
    \item $n-i$,
    \item $b_k$,
    \item $i-(k-1)\sqrt[3] n$,
\end{enumerate*}
respectively.
Therefore, it suffices to encode an array with universe
$[n+b_1]^{\sqrt[3] n}\times\cdots \times [n-(k-1)\sqrt[3] n +b_k]^{\sqrt[3] n}\times\cdots$ with constraint $\sum_k b_k=n$.
Finally, we apply~\cite{DBLP:conf/stoc/DodisPT10} and calculate the space cost carefully to complete the proof.

\section{Preliminaries}\label{sec:defs}
In this section, we state a couple of basic preliminaries which we need for the upper and lower bounds. First, we want to show that it is
possible to transform any interval
graph into our universal interval representation.
\begin{proposition}\label{prop:special}
    Any interval graph $G$ of $n$ vertices has a universal interval representation.
\end{proposition}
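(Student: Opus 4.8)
The plan is to start from \emph{any} interval representation of $G$ and massage it into the required form, exploiting the freedom to relabel the vertices. First I would reduce to the case where all $2n$ endpoints are pairwise distinct real numbers: given a representation $\{[a_v,b_v]\}_{v\in V}$, perturb it by decreasing every left endpoint and increasing every right endpoint by amounts smaller than half the least positive gap between distinct endpoint values. Under this operation every interval only grows, so no edge of $G$ is destroyed; and since the perturbations are small enough, no pair of previously disjoint intervals is brought into contact, so no edge is created. Hence the perturbed intervals still represent $G$ (this is a routine normalization for interval graphs, cf.~\cite{Golumbic}); in fact only the left endpoints need to be made mutually distinct for what follows. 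Relabel the vertices $1,\dots,n$ so that vertex $i$ receives the $i$-th smallest left endpoint, and write $\ell_i$ and $r_i$ for the left and right endpoints of vertex $i$, so that $\ell_1<\ell_2<\dots<\ell_n$ and $\ell_i<r_i$ for every $i$.

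The combinatorial core is the standard observation that, for $i<j$, the vertices $i$ and $j$ are adjacent if and only if $\ell_j<r_i$: since $\ell_i<\ell_j$, the closed intervals $[\ell_i,r_i]$ and $[\ell_j,r_j]$ meet exactly when $\ell_j\le r_i$, and by distinctness this is $\ell_j<r_i$. Guided by this, I would set
\[
  e_i \ :=\ \max\{\, j\in[n]\ :\ \ell_j<r_i \,\}.
\]
Because $\ell_i<r_i$, the index $i$ itself belongs to this set, so $i\le e_i\le n$, i.e.\ $e_i\in\{i,i+1,\dots,n\}$; in particular $[i,e_i]$ is a genuine (nonempty) interval whose left endpoint is the integer $i$.

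Finally I would check that $\{\,[i,e_i]\ :\ i\in[n]\,\}$ is a universal interval representation of $G$. It is universal by construction, since the left endpoints are precisely $1,2,\dots,n$, each used exactly once. It represents $G$ because, for any $i<j$, we have $[i,e_i]\cap[j,e_j]\neq\emptyset\iff j\le e_i$, and $j\le e_i\iff \ell_j<r_i$: if $j\le e_i$ then $\ell_j\le\ell_{e_i}<r_i$ by monotonicity of $(\ell_k)_k$ together with the definition of $e_i$, and conversely $\ell_j<r_i$ places $j$ in the set defining $e_i$, whence $j\le e_i$. Combining this with the adjacency characterization from the second paragraph shows $[i,e_i]$ and $[j,e_j]$ intersect exactly when $i$ and $j$ are adjacent in $G$, completing the proof.

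There is no deep obstacle here; the only point demanding a little care is the opening reduction, namely arguing that one may assume without loss of generality that the starting representation has distinct endpoints (equivalently, distinct left endpoints, which is all the construction uses) and that the perturbation alters neither the vertex set nor the edge set of $G$. Everything after that is a direct verification.
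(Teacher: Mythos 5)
Your proof is correct, and it takes a genuinely different route from the paper's. The paper starts from Hanlon's representation (distinct integer endpoints in $[2n]$) and then runs an iterative ``compression'' pass over the positions $1,\dots,2n$, shifting endpoints leftward whenever a gap in the left endpoints is found, until each $i\in[n]$ hosts exactly one left endpoint; it is a procedural, step-by-step transformation whose invariant (intersection structure preserved) must be tracked through each shift. Your approach is a one-shot, rank-based construction: after the standard normalization to distinct left endpoints, you sort so $\ell_1<\dots<\ell_n$, observe the adjacency criterion $i\sim j \iff \ell_j<r_i$ for $i<j$, and then read off $e_i=\max\{j:\ell_j<r_i\}$ directly, with a short two-line verification that $\{[i,e_i]\}$ is a universal representation of the same graph. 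The main virtue of your version is that correctness reduces to a single equivalence $j\le e_i \iff \ell_j<r_i$ rather than an iterated invariant argument, making it shorter and easier to audit; the paper's version has the mild advantage of never invoking the distinct-endpoints normalization, since Hanlon's representation already provides it as a black box. One small note: your perturbation step is described as shifting \emph{every} left endpoint down and \emph{every} right endpoint up by amounts bounded by half the smallest positive gap; for the argument you should make explicit that the perturbation amounts are chosen generic (e.g.\ pairwise distinct), as a uniform shift alone would not separate coincident left endpoints. Since you only need distinct \emph{left} endpoints, you could even skip the right-endpoint perturbation entirely.
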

\begin{proof}
    According to \textcite{Hanlon}, $G$ has a well-known interval representation $I$,
    where all the endpoints in $I$ are distinct integers from the set $[2n]$. We can then
    generate a universal interval representation of $G$ using the following procedure:

    Starting from the point $1$, we check each points in $[2n]$ from left to right. If $i$ is the left endpoint of the interval in $I$, we
    check the next point. Otherwise,
    let $j < i$
    be the rightmost left endpoint to the left of $i$. Then we modify $I$ in two ways. First, we modify all intervals in the set
    $\{[p, q] \in I \mid p < i, q \ge i\}$ to $[p, q-(j-i)]$. Second, we modify all intervals in the set $\{[p, q] \in I \mid\ p > i \text{~and~} q > i\}$
    to $[p-(j-i), q-(j-i)]$. We repeat
    this modification process until $I$ has an interval with a left endpoint of $i$ or there is no endpoint to the right of $i-1$. Since this
    modification process does not
    alter the intersection relations among the intervals in $I$, and preserves the uniqueness of the left endpoints of the intervals in $I$,
    we can obtain a universal interval
    representation of $G$ after completing the procedure.
\end{proof}

As mentioned previously, we adopt the convention that any
interval graph will be presented in its universal interval representation throughout the remainder of this paper. Additionally, sampling an
interval graph from its universal
interval representation is straightforward, and we make use of the following
two essential properties crucially in our lower-bound proofs.
\begin{proposition}\label{prop:independent}
    Let the universal interval representation be drawn uniformly at random. Then for each $i \in [n]$,
    \begin{enumerate*}[label=(\roman*)]
        \item $e_i$ is uniformly distributed over the range $[i, n]$, and
        \item $e_1, e_2, \dots, e_n$ are mutually independent.
    \end{enumerate*}
\end{proposition}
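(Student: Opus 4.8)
The plan is to reduce the proposition to the elementary fact that a uniform distribution on a Cartesian product has mutually independent, marginally uniform coordinates. First I would pin down exactly what object is being sampled. By \cref{def:universal}, in a universal interval representation every $i\in[n]$ is the left endpoint of exactly one interval; call that interval $I_i=[i,e_i]$, so $e_i\ge i$, and by the convention fixed earlier $e_i\le n$. Hence a universal interval representation is faithfully described by the tuple $(e_1,\dots,e_n)$, and conversely every tuple with $e_i\in\{i,i+1,\dots,n\}$ arises this way: the set $\{[i,e_i]\mid i\in[n]\}$ has left endpoints $1,2,\dots,n$, each occurring exactly once, so it is a universal interval representation (of the interval graph it induces). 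This yields a bijection
\[
    I \;\longmapsto\; (e_1,\dots,e_n)
\]
between the universal interval representations of $n$-vertex interval graphs and the product set $\prod_{i=1}^{n}\{i,i+1,\dots,n\}$.

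Next I would transport the uniform distribution through this bijection: drawing a universal interval representation uniformly at random is the same as drawing $(e_1,\dots,e_n)$ uniformly from $\prod_{i=1}^{n}\{i,\dots,n\}$. For a uniform draw from a finite Cartesian product $\prod_i S_i$, the coordinates are mutually independent and coordinate $i$ is uniform on $S_i$, since the product measure assigns mass $\prod_i \tfrac{1}{|S_i|}$ to each tuple and this clearly factors. Applying this with $S_i=\{i,i+1,\dots,n\}$ (so $|S_i|=n-i+1$) gives both claims simultaneously: each $e_i$ is uniform on $[i,n]$, and $e_1,\dots,e_n$ are mutually independent.

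There is essentially only one point that needs care, and it is exactly what makes the argument go through: one must check that the coordinates are genuinely unconstrained, i.e., that no global consistency condition links the $e_i$'s in a way that would shrink the sample space below the full product $\prod_i\{i,\dots,n\}$. This holds because any finite collection of closed real intervals is, by definition, the interval representation of the graph it induces — no compatibility among the intervals is required — so the set of universal interval representations really is this product, of size $\prod_{i=1}^{n}(n-i+1)=n!$ (matching the earlier count), and the independence and uniformity of the $e_i$'s follow immediately.
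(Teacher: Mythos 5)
Your proof is correct and matches the (implicit) reasoning in the paper, which states this proposition without an explicit proof but has already established in \cref{sec:setup} that universal interval representations are in bijection with $\prod_{i=1}^n\{i,\dots,n\}$ (which is where the count of $n!$ comes from). You correctly isolate and verify the one step that actually needs a remark — that there is no cross-coordinate constraint shrinking the sample space below the full Cartesian product — and the rest is the standard factorization of a uniform distribution on a product set.
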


\paragraph{Terminologies for Cell-probe Lower Bounds}
Recall that any query can be answered in $t$ adaptive memory accesses by assumption, and that the word size is $w$ bits.
We adopt the notation from~\cite{DBLP:conf/soda/PatrascuV10,DBLP:conf/stoc/LiuY20} and define the following
for a query $q$ or a query set $Q$ on the input data:
\begin{itemize}
    \item For a query $q$ and an integer $\ell\in[t]$, the address of the cell probed by $q$ in $\ell$-th memory access is denoted by
          $\Probe{}_\ell(q)$.
          We also define $\Probe{}_\ell(Q)\triangleq\cup_{q\in Q}\Probe{}_\ell(q)$, $\Probe{}_{\le \ell}(q)\triangleq\cup_{i\le \ell}\Probe{}_i(q)$,
          $\Probe{}_{< \ell}(q)\triangleq\cup_{i< \ell}\Probe{}_i(q)$ and
          $\Probe{}_{\le\ell}(Q)\triangleq\cup_{q\in Q}\Probe{}_{\le\ell}(q)$, $\Probe{}_{< \ell}(Q)\triangleq\cup_{i< \ell}\Probe{}_i(Q)$.
    \item The set of cells probed when answering a query $q$ (resp.\ all queries in $Q$) is denoted by
          $\Probe{}(q)\triangleq\cup_{i\in[t]}\Probe{}_i(q)$
          (resp.\ $\Probe{}(Q)\triangleq \cup_{q\in Q}\Probe{}(q)$).
    \item A binary string encoding the content of the $\ell$-th probed cell when answering the query $q$ is denoted by
          $\Foot{}_{\ell}(q)\in\{0,1\}^w$. We also define
          $\Foot{}_{\ell}(Q)\triangleq\Foot{}_{\ell}(q_1)\cdot\Foot{}_{\ell}(q_2)\dotsb\in(\{0,1\}^w)^{|Q|}$ as the concatenation of all $\Foot{}_{\ell}(q)$ for $q \in Q$ in
          lexicographical order $q_1<q_2<\dotsb$.
    \item The binary string obtained by concatenating the contents of the probed cells when answering a query $q$ is called its
          \emph{footprint}, denoted by
          $\Foot{}(q)\triangleq \Foot{}_{1}(q)\cdot\Foot{}_{2}(q)\dotsb\in(\{0,1\}^w)^t$. We also define $\Foot{}(Q)\triangleq\Foot{}(q_1)\cdot\Foot{}(q_2)\dotsb$ as the concatenation of
          the footprints of all queries in $Q$ in lexicographical order $q_1<q_2<\dotsb$.
    \item $\Foot{}_{<\ell}(q)\triangleq \Foot{}_{1}(q)\cdot\Foot{}_{2}(q) \cdots \Foot{}_{\ell-1}(q)$ is obtained by concatenating the
          contents of the first probed cells up to the $(\ell-1)$-th probe. We also define
          $\Foot{}_{<\ell}(Q)\triangleq\Foot{}_{<\ell}(q_1)\cdot\Foot{}_{<\ell}(q_2)\dotsb$
          as the concatenation of all $\Foot{}_{<\ell}$ for all queries in $Q$ in lexicographical order $q_1<q_2<\dotsb$.
\end{itemize}

Note that the $\ell$-th probe of any query algorithm
can be determined based on the first $\ell-1$ probes, as the algorithm can only act based on the known information.
In other words, $\Probe{}_\ell(q)$ (resp.\ $\Probe{}_\ell(Q)$) is known given $\Foot{}_{<\ell}(q)$ (resp.\ $\Foot{}_{<\ell}(Q)$) for any
$\ell$ and $q$ (resp.\ $Q$). This
implies that one can decode the address of the next cell to be probed using $\Foot{}_{<\ell}(q)$.
Therefore, one can answer the query $q$ using its footprint $\Foot{}(q)$ as follows:
At each step, one checks whether the cell to be probed has already been published or not. If the cell has been published (hence, one knows
the contents of the cell), one
can simply read its contents without incurring any additional cost and move on to the next step. Otherwise, one needs to read the next $w$
bits of $\Foot{}(q)$, which
contains the contents of the cell to be published as well as the address of the next cell to read. As a result, the length of $\Foot{}(q)$ is
exactly $w \cdot \lvert \Probe{}(q)\rvert$ bits.

\paragraph{Useful lemmas}
Proving our upper bounds uses the following well-known lemma that gives us a succinct encoding
for integer sequence of arbitrary alphabet size without losing the
constant time accessibility of the data structure.

\begin{lemma}[\cite{DBLP:conf/stoc/DodisPT10}]\label{lem:DPT}
    For any integer sequence of size $n$ with values from some
    alphabet $\Sigma = \{0, 1, \dots, |\Sigma|\}$ can be stored using $\ceil {n\log \lvert\Sigma\rvert} + O(\log n)$
    bits while supporting $O(1)$-time access to any position of the sequence in the word RAM model.
\end{lemma}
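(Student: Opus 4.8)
The plan is to follow the \emph{change of base without losing space} technique of \textcite{DBLP:conf/stoc/DodisPT10}; we only sketch the construction. It is worth noting first that in the cell-probe model the statement is trivial: the whole sequence is a single integer $N=\sum_j x_j\,|\Sigma|^{\,j}\in[|\Sigma|^{n}]$, stored in $\ceil{n\log|\Sigma|}$ bits, and any $x_i$ is read by one (arbitrarily expensive) division of $N$ by a power of $|\Sigma|$ followed by a reduction mod $|\Sigma|$. The whole content of the lemma is therefore to recover this integer on a machine that has only $O(1)$-time word operations, and this is what the construction below is for.

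First I would reduce to the regime $|\Sigma|\le 2^{w}$: a symbol of a larger alphabet is split into the fixed number $\ceil{\log|\Sigma|/w}$ of ``digits'' over $[2^{w}]$, and the resulting digit-sequences are stored independently, so any position is read with $O(\ceil{\log|\Sigma|/w})$ probes, i.e.\ one probe per word a symbol occupies (for the alphabets $[1],\dots,[n]$ that we actually use this factor is $1$). With $|\Sigma|\le 2^{w}$, I would cut the sequence into $\beta=\ceil{n/b}$ consecutive blocks of $b:=\max\{1,\floor{w/\log|\Sigma|}\}$ symbols; a block is identified with the single integer $V=\sum_{j=0}^{b-1}x_j|\Sigma|^{\,j}\in[|\Sigma|^{b}]\subseteq[2^{O(w)}]$, which lives in $O(1)$ words, and with a precomputed table of the powers $|\Sigma|^{0},\dots,|\Sigma|^{b-1}$ the $j$-th symbol of a block is extracted from $V$ in $O(1)$ time as $\floor{V/|\Sigma|^{\,j}}\bmod|\Sigma|$. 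Storing every block value in $\ceil{b\log|\Sigma|}$ bits, however, would cost $\ceil{n\log|\Sigma|}+O(\beta)$ bits, and $\beta$ can be $\Theta(n)$, so this crude blocking is far from succinct.

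The fix is the \emph{spillover} primitive: given values $u\in[U]$ and $v\in[V]$, write $uV+v\in[UV]$ in binary, store its low $m:=\floor{\log(UV)}$ bits verbatim as a \emph{payload} string placed at an $O(1)$-computable offset, and carry the remaining high part as a \emph{spillover} in an alphabet of size $\ceil{UV/2^{m}}\le 2$ --- a single (fractional) bit --- from which, together with the payload, $u$ and $v$ are recovered by one multiplication and division. I would then apply this hierarchically. Fold the $\beta$ block-values with the primitive into (i) a flat payload array of total length $\le\ceil{n\log|\Sigma|}$ bits whose per-block offsets are fixed, and (ii) a shorter sequence of $\beta$ spillover ``bits'' over an $O(1)$-size alphabet, which is stored recursively by the same scheme; packing $\Theta(w)$ spillovers into each next-level word shrinks the count by a factor $\Theta(w)$ per level, and every level again emits its payload bits verbatim. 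The fractional parts lost to the floors at one level are precisely the information carried by that level's spillover sequence, which the next level absorbs, so telescoping gives a total payload of $\ceil{n\log|\Sigma|}$ bits up to the $O(1)$-bit spillover surviving the last level; the remaining overhead --- the values $n$ and $|\Sigma|$, the power table, and boundary corrections when $b\nmid n$ or $\log|\Sigma|\nmid w$ --- is $O(\log n)$ bits (a careful implementation keeps it to $O(\log n)$; a looser accounting gives $\mathrm{poly}\log n$, which already suffices for our applications). To read symbol $i$: locate its level-$0$ block and read that block's payload, recover the missing spillover bit by reading the payload of the block containing it at level $1$, and so on; then reassemble $V$ and extract the symbol arithmetically.

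The step I expect to be the main obstacle is making the access cost $O(1)$ rather than $O(\log n/\log\log n)$: the recursion above has $\Theta(\log n/\log\log n)$ levels, so a naive top-down descent would probe one cell per level. Removing this dependency is the technical heart of \textcite{DBLP:conf/stoc/DodisPT10} and is handled by word-level parallelism --- interleaving the levels in memory so that all the spillover bits relevant to a given block across the levels fall into $O(1)$ words, and using $\mathrm{poly}(w)$-size precomputed tables to carry out the required base conversions on an entire machine word of digits at once --- which collapses the chain to $O(1)$ probes and $O(1)$ arithmetic operations. The other delicate points, namely the boundary blocks and the non-divisibility of $\log|\Sigma|$ by $w$, contribute only lower-order $O(\log n)$ terms, so the whole construction uses $\ceil{n\log|\Sigma|}+O(\log n)$ bits with $O(1)$ access.
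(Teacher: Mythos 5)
This lemma is cited from Dodis, P\u{a}tra\c{s}cu, and Thorup and not re-proved in the paper, so there is no internal proof of \cref{lem:DPT} to compare against directly; but \cref{sec:cell_adj} --- in particular \cref{lem:DPTdetail} and the proof of \cref{thm:improved_deg2} --- makes the intended mechanism explicit, and it is genuinely different from the one you sketch. The DPT scheme is a \emph{flat, sequential chain}: blocks are processed in order, with each application of the spillover primitive taking one input from the current block's value and the other from the previous block's spill, and producing the current memory word plus the spill that is passed to the next block. The crucial property, stated as the fourth bullet of \cref{lem:DPTdetail}, is that one of the two inputs (the ``$y$'' coordinate) is decodable from the memory word $m$ \emph{alone}, without the companion spill $s$. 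Pairing things so that the forwarded spill plays the role of $y$ in the next block, decoding is local: to read block $i$, fetch $m_i$ and $m_{i+1}$, recover $s_i$ from $m_{i+1}$ alone, and then recover block $i$'s value from $(m_i,s_i)$. That is two adjacent word probes and $O(1)$ arithmetic --- no recursion, no level-collapsing, nothing else to argue.

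Your proposal instead builds a tree: each block emits its own spill into a shorter sequence, which is stored recursively over $\Theta(\log n/\log\log n)$ levels. You correctly identify the resulting per-level descent as ``the main obstacle'' and wave at ``word-level parallelism --- interleaving the levels in memory so that all the spillover bits relevant to a given block across the levels fall into $O(1)$ words'' to collapse it, attributing this to DPT. That claim is the entire difficulty of the lemma, and you give no argument for it; it also does not appear to be how DPT actually proceed. Moreover it is unclear it can be made to work at all: the level-$(k{+}1)$ position you need is a query-dependent ancestor in a tree (roughly $\lceil i/w^{k}\rceil$), the set of ancestors of a leaf cannot be co-located with that leaf for all leaves simultaneously without blowing up space, and the total spill storage across levels is $\Theta(n\log|\Sigma|/w)$ bits, far too much to gather into $O(1)$ words. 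You have the right building block --- the spillover primitive and the fact that its per-application redundancy can be driven to $O(1/\sqrt{X})$ --- but you are missing the idea that actually yields constant time: chain the spills forward and exploit that each spill is decodable from the single next payload word.
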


Finally, we state some well-known probability theory propositions for completeness.
\begin{proposition}[Chebyshev's inequality]
    Let $X$ be a random variable with finite mean $\E[X]$ and variance $\sigma^2$. Then for any $k >0$,
    \[
        \Pr[X-\E[X]\leq k \sigma] \leq 1/k^2.
    \]
\end{proposition}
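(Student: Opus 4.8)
The plan is to derive the bound from Markov's inequality applied to the non-negative random variable $(X-\E[X])^2$. First I would recall (or prove in a single line) Markov's inequality: for any non-negative random variable $Y$ with finite mean and any $a>0$, taking expectations of the pointwise bound $a\cdot\mathbf{1}[Y\ge a]\le Y$ gives $\Pr[Y\ge a]\le \E[Y]/a$. This is the only ingredient needed.

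Next I would specialize to $Y=(X-\E[X])^2$, which is non-negative and satisfies $\E[Y]=\sigma^2<\infty$ by hypothesis, and choose $a=k^2\sigma^2$ for the given $k>0$. The key observation is that, since squaring is monotone on the non-negative reals, the event $\{|X-\E[X]|\ge k\sigma\}$ is exactly the event $\{(X-\E[X])^2\ge k^2\sigma^2\}=\{Y\ge a\}$. Markov's inequality then gives $\Pr[|X-\E[X]|\ge k\sigma]=\Pr[Y\ge a]\le \E[Y]/a=\sigma^2/(k^2\sigma^2)=1/k^2$, which is the asserted estimate in its standard two-sided form (the display as written should be read with $|X-\E[X]|\ge k\sigma$ on the left-hand side).

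There is no substantive obstacle — this is the textbook argument — so the only point requiring a word of care is the degenerate case $\sigma=0$, in which $X=\E[X]$ almost surely and the inequality is vacuous or trivial; since the paper invokes the proposition only with $\sigma>0$, that edge case is dispatched in one sentence. I would present the proof in essentially the two sentences above, optionally inlining the one-line proof of Markov's inequality so the statement is fully self-contained.
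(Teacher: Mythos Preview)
Your argument is correct and is the standard textbook derivation via Markov's inequality; you also correctly flag that the display should read $|X-\E[X]|\ge k\sigma$ rather than $X-\E[X]\le k\sigma$. The paper itself offers no proof of this proposition --- it is listed in the preliminaries among ``well-known probability theory propositions'' stated for completeness --- so there is nothing to compare against, and your two-sentence presentation (with the one-line Markov step inlined) is entirely appropriate.
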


\begin{proposition}[Chernoff bound]
    Let $X = \sum_{i=1}^n X_i$ be a random variable where each $X_i$ has a Bernoulli distribution with parameter $p_i$, and let
    $\E[X] = \sum_{i=1}^n p_i$.
    Then for any constant $\delta \ge 0$,
    \[
        \Pr\left[\lvert X-\E[X]\rvert \leq \delta\E[X]\right] \geq 1 - 2e^{-\delta^2\E[X]/3}.
    \]
    If we choose $\delta$ as $10/\sqrt{\E[X]}$ the right term of the inequality becomes a constant very close to $1$, i.e., $> 0.999$.
\end{proposition}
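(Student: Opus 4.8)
This is the standard two-sided multiplicative Chernoff bound, and since the paper states it only for completeness, the plan is to give the classical moment-generating-function argument and then specialize to $\delta = 10/\sqrt{\E[X]}$. Write $\mu \triangleq \E[X] = \sum_i p_i$. For the upper tail I would fix any $t>0$, apply Markov's inequality to $e^{tX}$, and use independence of the $X_i$: $\Pr[X \ge (1+\delta)\mu] \le e^{-t(1+\delta)\mu}\prod_{i=1}^n \E[e^{tX_i}]$. Since $\E[e^{tX_i}] = 1+p_i(e^t-1) \le \exp(p_i(e^t-1))$, the product is at most $\exp(\mu(e^t-1))$, and choosing $t = \ln(1+\delta)$ gives $\Pr[X \ge (1+\delta)\mu] \le \bigl(e^{\delta}/(1+\delta)^{1+\delta}\bigr)^{\mu}$. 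The lower tail is symmetric: the same computation with $t<0$ and $t = \ln(1-\delta)$ yields $\Pr[X \le (1-\delta)\mu] \le \bigl(e^{-\delta}/(1-\delta)^{1-\delta}\bigr)^{\mu}$; and this tail event is vacuous once $\delta \ge 1$, since $X \ge 0$.

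Next I would pass to the claimed exponential form using the two elementary estimates $(1+\delta)\ln(1+\delta)-\delta \ge \delta^2/3$ and $\delta + (1-\delta)\ln(1-\delta) \ge \delta^2/2$, both valid for $0 \le \delta \le 1$. Each follows from the same recipe: the difference of the two sides vanishes at $\delta = 0$, and a one-line derivative computation shows the difference is nondecreasing on $[0,1]$ (for the first, the derivative is $\ln(1+\delta)-2\delta/3$, which is $0$ at $\delta=0$, unimodal by the second-derivative test, and still positive at $\delta=1$; for the second, the derivative is $-\ln(1-\delta)-\delta \ge 0$). These turn the two tail bounds into $\Pr[X \ge (1+\delta)\mu] \le e^{-\delta^2\mu/3}$ and $\Pr[X \le (1-\delta)\mu] \le e^{-\delta^2\mu/2} \le e^{-\delta^2\mu/3}$; a union bound over the two one-sided events gives $\Pr[|X-\mu| \ge \delta\mu] \le 2e^{-\delta^2\mu/3}$, and taking complements yields the stated inequality. (The exponent $\delta^2\mu/3$ is the usual form in the range $0 \le \delta \le 1$, which is the only range we ever invoke — with $\delta = 10/\sqrt{\mu}$ this amounts to $\mu \ge 100$; for $\delta > 1$ one would use a linear-in-$\delta$ exponent instead, but that case is never needed.)

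Finally, substituting $\delta = 10/\sqrt{\mu}$ gives $\delta^2\mu/3 = 100/3 > 33$, so the failure probability is $2e^{-\delta^2\mu/3} < 2e^{-33} < 10^{-3}$, and hence the right-hand side of the inequality exceeds $0.999$, as claimed. The only step that needs any care is verifying the two logarithmic inequalities bounding $(1+\delta)\ln(1+\delta)-\delta$ and $\delta+(1-\delta)\ln(1-\delta)$ from below; everything else is mechanical bookkeeping, and one could alternatively just cite a standard randomized-algorithms text.
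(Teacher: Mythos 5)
The paper states this Chernoff bound as a well-known fact for completeness and gives no proof of its own; there is therefore nothing to compare against. Your proof is the standard moment-generating-function derivation, and it is correct: the upper- and lower-tail bounds via $t=\ln(1\pm\delta)$ are right, the two calculus inequalities $(1+\delta)\ln(1+\delta)-\delta\ge\delta^2/3$ and $\delta+(1-\delta)\ln(1-\delta)\ge\delta^2/2$ hold on $[0,1]$ by exactly the derivative arguments you sketch (for the first, $f'(\delta)=\ln(1+\delta)-2\delta/3$ vanishes at $0$, rises until $\delta=1/2$, and is still $\ln 2 - 2/3>0$ at $\delta=1$; for the second, $-\ln(1-\delta)-\delta\ge0$ is the Mercator expansion), and the union bound and numerical substitution close the argument.

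You are also right to flag the range issue, and it is worth spelling out why it matters: the paper's phrasing ``for any constant $\delta\ge 0$'' is in fact an over-claim. The exponent $\delta^2\E[X]/3$ fails for $\delta$ beyond roughly $1.8$ — for example at $\delta=2$ one has $3\ln 3-2\approx 1.296<4/3=\delta^2/3$, so $(e^\delta/(1+\delta)^{1+\delta})^{\mu}>e^{-\delta^2\mu/3}$ — and the correct large-$\delta$ form has a linear exponent. Since the paper only ever invokes this proposition with $\delta=10/\sqrt{\E[X]}$ in a regime where $\E[X]\ge 100$, so $\delta\le 1$, your restriction to $0\le\delta\le 1$ is exactly what is needed, and your proof is a sound patch of a statement the paper left slightly too loose.
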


\begin{proposition}[Berry-Essen Theorem for Binomial Distribution]
    Let $X \sim B(n, p)$ be a binomial distributed random variable. Then $Y_n = \frac{X-np}{\sqrt{np(1-p)}}$ converges to $N(0, 1)$,
    i.e., the standard normal distribution.
    Also, for all $x$ and $n$, the difference between $\Pr[Y_n \le x]$ and $\Phi(x)$ is at most $O(1/\sqrt{n})$
    where $\Phi$ denotes the cumulative distribution functions of
    $N(0, 1)$.
\end{proposition}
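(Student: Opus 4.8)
The plan is to obtain the statement as a direct corollary of the classical Berry--Esseen theorem for sums of independent, identically distributed summands, which we use as a black box: if $Z_1,\dots,Z_n$ are i.i.d.\ with $\E[Z_i]=0$, variance $\sigma^2>0$, and finite third absolute moment $\rho\triangleq\E[\lvert Z_i\rvert^3]$, then the cumulative distribution function $F_n$ of the normalized sum $(Z_1+\dots+Z_n)/(\sigma\sqrt{n})$ satisfies $\sup_{x}\lvert F_n(x)-\Phi(x)\rvert\le C\rho/(\sigma^3\sqrt{n})$ for an absolute constant $C$. Granting this, all that remains is to verify the hypotheses for the centered Bernoulli summands underlying a binomial random variable and to keep track of how the resulting constant depends on $p$.

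First I would write $X=\sum_{i=1}^{n}B_i$ where the $B_i$ are i.i.d.\ $\mathrm{Bernoulli}(p)$, and set $Z_i\triangleq B_i-p$, so that $X-np=\sum_{i}Z_i$ and hence $Y_n=(X-np)/\sqrt{np(1-p)}$ is precisely the normalized sum of the $Z_i$. A short computation gives $\E[Z_i]=0$, variance $\sigma^2=p(1-p)$, and
\[
\rho=\E[\lvert B_i-p\rvert^3]=p(1-p)^3+(1-p)p^3=p(1-p)\bigl(p^2+(1-p)^2\bigr)\le p(1-p).
\]
Substituting these moments into the Berry--Esseen bound yields
\[
\sup_{x}\bigl\lvert\Pr[Y_n\le x]-\Phi(x)\bigr\rvert\le\frac{C\rho}{\sigma^3\sqrt{n}}=\frac{C\bigl(p^2+(1-p)^2\bigr)}{\sqrt{p(1-p)}}\cdot\frac{1}{\sqrt{n}}.
\]
When $p$ is a fixed constant (equivalently, bounded away from $0$ and $1$) the leading factor is itself a constant, so the right-hand side is $O(1/\sqrt{n})$, which is the claimed uniform rate; letting $n\to\infty$ then gives convergence of $Y_n$ to $N(0,1)$ in distribution.

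The main, and essentially only, subtlety is the dependence of the implied constant on $p$: the factor $\rho/\sigma^3=(p^2+(1-p)^2)/\sqrt{p(1-p)}$ degenerates as $p\to0$ or $p\to1$, so the $O(1/\sqrt{n})$ statement should be read with $p$ bounded away from the endpoints of $[0,1]$, which is harmless for every application of the proposition in this paper, where the relevant probabilities are absolute constants. If one wanted a fully self-contained argument rather than citing the classical result, I would run the standard Fourier-analytic proof, combining Esseen's smoothing inequality with the characteristic-function estimate $\lvert\widehat{F_n}(\xi)-e^{-\xi^2/2}\rvert\le c\,(\rho/\sigma^3)\,(\lvert\xi\rvert^3/\sqrt{n})\,e^{-\xi^2/4}$ valid on $\lvert\xi\rvert\le\sigma^3\sqrt{n}/\rho$; since the binomial case is entirely classical, however, we simply invoke it.
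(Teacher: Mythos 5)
The paper states this proposition without proof, as a classical background fact alongside Chebyshev and Chernoff, so there is no ``paper's own proof'' to compare against. Your reduction to the i.i.d.\ Berry--Esseen theorem is the standard derivation and is correct: centering the Bernoulli summands, computing $\sigma^2=p(1-p)$ and $\rho=p(1-p)\bigl(p^2+(1-p)^2\bigr)$, and substituting into the Berry--Esseen bound gives $\sup_x\lvert\Pr[Y_n\le x]-\Phi(x)\rvert\le C\bigl(p^2+(1-p)^2\bigr)/\bigl(\sqrt{p(1-p)}\,\sqrt{n}\bigr)$, which yields both the convergence statement and the uniform $O(1/\sqrt{n})$ rate for fixed $p$.

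Your caveat about the hidden constant degenerating as $p\to0$ or $p\to1$ is the one genuinely non-cosmetic point, and I would sharpen it rather than wave it away. You write that this is ``harmless for every application of the proposition in this paper, where the relevant probabilities are absolute constants,'' but that is not quite how the paper uses it. In the proof of \cref{lem:Y_prime_and_Z}~\ref{Y_first} (and likewise in \cref{lem:common_and_distinct}), the binomial in question is $B(n',p)$ with $p=1/(k-i)=\Theta(1/k)$, which tends to zero with $k$; the success probability is emphatically not an absolute constant there. The correct reading is that the Berry--Esseen error is $O\bigl(1/\sqrt{n'p(1-p)}\bigr)=O\bigl(1/\sqrt{n'p}\bigr)$, and this still vanishes in the paper's regime because $n'p=\Theta(n/k)$ and $k\le n/\polylog(n)$ forces $n'p=\omega(1)$. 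So the application remains sound, but only because of this extra accounting, not because $p$ is bounded away from the endpoints. It would be cleaner to state the proposition with the $p$-dependent bound $O\bigl(1/\sqrt{np(1-p)}\bigr)$ explicitly, as your computation already delivers, and then note the $n'p=\omega(1)$ condition at the point of use.

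Everything else in your writeup — the moment computations, the observation that $\rho\le p(1-p)$, and the closing remark that a self-contained argument would go via Esseen's smoothing inequality and characteristic-function estimates — is accurate and appropriately scoped for a black-box invocation of a textbook theorem.
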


\section{The Lower Bound}\label{sec:tradeoff}
This section proves the lower bound trade-off between the redundancy and query time for any accessible data structure that
supports both $\adj$ and $\deg$ queries on the interval graph $G$ with $n$ vertices.

\thmtradeoff*

In the above theorem, $r$ represents the redundancy of the data structure since the optimal size for storing all possible inputs is
$\log(n!)$ bits.

In order to prove the lower bound, we begin by defining the tuple of hard queries $\query(i)$ for $i \in [n]$,
which consists of three queries: $\deg(i), \adj(i, e_i), \adj(i, e_i + 1)$.
Note that the answer to $\query(i)$ is $(\deg(i),0,1)$, and recall that $e_i$ is the right endpoint of interval $i$.

To prove \cref{thm:tradeoff}, we will revisit our main idea. Our proof is built on the round elimination strategy outlined
in~\cite{DBLP:conf/soda/PatrascuV10}.
During each round, we divide the vertices of $G$ into $k$ blocks, each consisting of $n/k$ vertices.
Let
$Q_{\Delta} \triangleq \{\query(\Delta+1), \query(n/k + \Delta+1), \query(2n/k+\Delta+1), \dots, \}$  be a set of $k$  queries for
$\Delta \in [0, n/k-2]$, let
$Q'_\Delta\triangleq \{\Delta+1,n/k+\Delta+1,\dots,\}$.
Assuming  the existence of a data structure that can answer $\deg$ and $\adj$ queries in $t_1$ and $t_2$ time, respectively, we can utilize
this structure to reveal $e_i$
and $\deg(i)$ in $t_1+t_2\log n$ time by executing $\deg(i)$ and binary search to locate $e_i$ with $\adj$ queries.
Thus, for any $i\in[n]$, we define $\Bin(i)$ which is the set of cells that are accessed by the aforesaid procedure to reveal $e_i,\deg(i)$.
We define $\Bin(Q),\Bin_\ell(Q)$, $\Bin_{\le\ell}(Q)$, $\Bin_{<\ell}(Q)$,
$\Foot(\Bin(Q))$, $\Foot_\ell(\Bin(Q))$, $\Foot_{<\ell}(\Bin(Q))$ in  the same way with the defining of
$\Probe{}(Q)$, $\Probe{}_\ell(Q)$, $\Probe{}_{\le\ell}(Q)$, $\Probe{}_{<\ell}(Q)$, $\Foot(Q)$, $\Foot_\ell(Q)$, $\Foot_{<\ell}(Q)$.

At the beginning, suppose we have a data structure for answering $\deg$ and $\adj$ using the optimal $\log(n!)$ bits with $P = o(\log(n!))$
published bits. Then we publish all the cells in $\Bin(Q'_0)$.
After publishing the cells in
$\Bin(Q'_0)$, we use the following lemma which shows that publishing the cells in $\Bin(Q'_0)$ decreases the average cell-probe complexity
for answering $\query(i)$ for all
$i \in [n]$ by some constant $c$.
This implies that the total average cell-probe complexity of at least one of $\deg(i),\adj(i,e_i),\adj(i,e_i+1)$ for all vertices $i$ in $G$
is also reduced by $c/3$. The proof of the lemma can be found in \cref{sec:large-overlap}.

\begin{restatable}[Large Overlap]{lemma}{lemqzerogoodoverlap}\label{lem:q0goodoverlap}
    Let $G$ be an interval graph chosen uniformly at random from all possible universal interval representations.
    Let $k \ge \gamma \log^2 n$ for some
    constant $\gamma> 0$, which is at most $n/\polylog(n)$.
    Then for any $(\log(n!) +P)$-bit (accessible) data structure
    on $G$ with $P = o(\log n !)$ published bits,
    \[
        \Pr_{j\in [n]}[\Bin(Q'_0)\cap\Probe{}(\query(j))\ne\emptyset] = \Omega(1).
    \]
\end{restatable}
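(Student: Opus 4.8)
The plan is to argue by contradiction via an encoding argument: assuming $\Pr_{j}[\Bin(Q'_0)\cap\Probe(\query(j))\ne\emptyset]=o(1)$, I encode a uniformly random universal interval representation $G$ with expected length strictly below $\log(n!)$, contradicting $H(G)=\log(n!)$ (by \cref{prop:independent}, $G$ is uniform over its $n!$ representations). Two structural facts are used. First, the footprint $\Foot(\query(j))$ certifies $\deg(j)$ and $e_j$ simultaneously: the pair of answers to $\adj(j,e_j)$ and $\adj(j,e_j+1)$ straddles the threshold $e_j$, so one recovers $\deg(j)$, $e_j$, and hence $R(j)=e_j-\deg(j)$, the number of right endpoints before position $j$. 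Second, $\Bin(Q'_0)$ consists of the cells touched while, for each block start $i\in Q'_0$, running $\deg(i)$ and binary-searching $e_i$ with $\adj(i,\cdot)$; thus $\Foot(\Bin(Q'_0))$ pins down $R(i)$ at all $k$ block boundaries, equivalently the vector $(b_1,\dots,b_k)$ of the numbers of right endpoints in the $k$ consecutive length-$(n/k)$ blocks. Because the address of each next probe is revealed by the contents already written, writing $\Foot(\Bin(Q'_0))$ in probing order costs only $w\cdot|\Bin(Q'_0)|\le w\,k(t_1+t_2\lceil\log n\rceil)$ bits, which is $o(\log(n!))$ for the relevant ranges of $k,t_1,t_2$ and $w=\Theta(\log n)$.

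For the encoding I would follow the Liu--Yu template. Averaging over $G$ and over a shift $\Delta\in[0,n/k-2]$, fix $\Delta$ so that with probability $1-o(1)$ over $G$ a $(1-o(1))$-fraction of the hard queries $\query(j)$ is disjoint from $\Bin(Q'_0)$ and $\Bin(Q'_\Delta)$ is itself almost disjoint from $\Bin(Q'_0)$; then the shifted starts $Q'_\Delta$ give a second partition of $[n]$ into length-$(n/k)$ blocks $q'_1,\dots,q'_k$ whose right-endpoint counts $(b'_1,\dots,b'_k)$ are recoverable from cells outside $\Bin(Q'_0)$, and the two partitions are offset so that $q_{m+1}$ and $q'_m$ share a $(1-o(1))$-fraction of their positions. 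The encoding writes: the $P$ published bits; the index of the best round $\ell\le t_1+t_2\lceil\log n\rceil$ (an $O(\log\log n)$-bit overhead); the contents of the cells $\Bin(Q'_0)$ probes in rounds $<\ell$, in probing order; and the contents of all remaining memory cells in address order, \emph{except} the at most $k$ cells $\Bin(Q'_0)$ probes in round $\ell$, whose addresses are determined by what precedes. Since almost all hard queries and almost all shifted binary searches avoid those omitted cells, enumerating and simulating all queries on what has been written recovers $(b'_1,\dots,b'_k)$, $R(j)$ for almost every $j$, and hence almost all of $G$; re-encoding the omitted round-$\ell$ cells (which determine the indicator vector $T=(T_1,\dots,T_k)$, $T_m=\mathbf{1}[b_m>\theta_m]$) conditioned on this recovered information therefore saves, relative to writing them raw, at least $I(T;\{T'_m\})=H(T)-H(T\mid\{T'_m\})$ bits, up to the $1/\polylog(n)$ loss from having to localize to a single round. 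The total encoding length is then $\log(n!)+P+O(\log\log n)-\Omega(I(T;\{T'_m\}))$, so a contradiction follows once $I(T;\{T'_m\})$ dominates $P$; the hypothesis $\gamma\log^2 n\le k\le n/\polylog(n)$ is exactly what makes the number of blocks large enough for this and small enough for the concentration arguments below.

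The crux is the mutual-information estimate, namely $H(T)=\Omega(k)$ and $H(T\mid\{T'_m\})=o(k)$, and this is the step I expect to be hardest, because the distribution is genuinely different from the i.i.d.\ Bernoulli case of \cite{DBLP:conf/soda/PatrascuV10}: by \cref{prop:independent} the number of right endpoints at a fixed position is a Poisson-binomial sum with parameters $1/n,1/(n-1),\dots$, and the block counts are correlated through the running total $Y_m$ of right endpoints seen before block $q_m$. For the lower bound I would use the chain rule and data processing to reduce $H(T)$ to $\sum_m H(T_m\mid Y_m)$, decompose the count in block $q_m$ as $Y'_m+Z_m$ (right endpoints in $q_m$ whose left endpoint lies before $q_m$, plus intervals wholly inside $q_m$), observe that $Y'_m\mid Y_m$ is an honest binomial and hence — by the Berry--Esseen estimate — deviates from a fixed value of order $\sqrt{n/k}$ in an approximately symmetric way, and use a Chernoff bound to show $|Z_m-\mathbb{E}[Z_m]|$ is with high probability dominated by this deviation; placing $\theta_m$ at the relevant center then yields $H(T_m\mid Y_m)=\Omega(1)$. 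For the upper bound I would bound $H(T_{m+1}\mid T'_m)$ by sub-additivity and the chain rule: the common part of $q_{m+1}$ and $q'_m$ again behaves binomially and so fixes $b_{m+1}$ to within a binomial-scale window once $b'_m$ is known, while the two symmetric-difference parts have length $o(n/k)$ and, being sums of a binomial and a Poisson-binomial given $Y_m$, have small variance, so Chebyshev's inequality controls them; hence $T_{m+1}$ is determined by $T'_m$ with probability $1-o(1)$ and $H(T_{m+1}\mid T'_m)=o(1)$. Assembling these two estimates with the block-partition bookkeeping of \cite{DBLP:conf/soda/PatrascuV10} completes the proof of the lemma.
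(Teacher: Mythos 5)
Your proposal is correct and follows essentially the same route as the paper: a Liu–Yu-style round-$\ell$ encoding argument against a random $\Delta$-shifted set of hard queries, reduced to the mutual-information estimate $I(T;T')=\Omega(k)$, which is in turn proved via the paper's decomposition $Y'_m+Z_m$, Berry–Esseen for the binomial part, and Chebyshev/Chernoff for the remaining small-variance pieces. The only cosmetic differences are that the paper samples $\ell$ uniformly (rather than picking the best round) and explicitly restricts the averaging over $\Delta$ to the range $\{\lfloor(1-\epsilon^3)n/k\rfloor,\dots,n/k-2\}$ so that consecutive blocks of the two partitions overlap in a $(1-O(\epsilon^3))$-fraction, and it adds an explicit encoding step writing the unrecovered entries of $T(Q_\Delta)$; none of these change the substance.
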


We adopt an iterative approach whereby we publish the cells in $\Bin(Q'_0)$ one after the other. It is important to note that before
publishing a total of $\log(n!)$ bits, the
average cell-probe complexity cannot possibly be below zero. Based on this observation, we are able to derive the desired lower bound. We
give a complete proof of
\cref{thm:tradeoff} assuming that \cref{lem:q0goodoverlap} is true.

\begin{proof}[Proof of \cref{thm:tradeoff}]
    The proof of our theorem relies on a similar argument to that presented in~\cite{DBLP:conf/soda/PatrascuV10}. Specifically, we iterate
    over a sequence of rounds, with each
    round publishing a certain number of bits. We denote by $P_i$ the total number of bits published up to and including the $i$-th round.

    Initially, we have $P_0 = r$ published bits, and for each round $i \in \{0, 1, \dots \}$, we set $k_i = \gamma P_i \cdot \log^2 n$ for
    some constant $\gamma > 0$, and
    publish the cells in $\Bin(Q'_0)$ with respect to the blocks of size $n/k_i$. We then have
    $P_{i+1} \leq P_i + k_i w (t_1 + t_2\log n)  = O(P_i\log^5 n)$ bits.
    Here, we assume that $t_1 + t_2\log n$ is at most $\log^2n / \log \log n$ since the theorem statement holds trivially if $t_1+t_2$ is
    greater than $\log n / \log \log n$.

    According to \cref{lem:q0goodoverlap}, after each round of bit publishing,
    the total cell-probe complexity required to answer one $\deg$ and two $\adj$ queries decreases by $\Omega(1)$.
    Since the total
    redundancy of the
    published bits cannot exceed $\log(n!)$ bits and average cell-probe complexity cannot be less than zero, we have
    $P_{t_1+t_2} = r (\log n)^{O(t_1+t_2)} \leq \log(n!) = n \log n - \Theta(n)$. This inequality completes the proof of the theorem.
\end{proof}

\subsection{Proving the Large Overlap}\label{sec:large-overlap}
In this section, we prove \cref{lem:q0goodoverlap}. The proof is based on the encoding argument used
in~\cite{DBLP:conf/soda/PatrascuV10,DBLP:conf/stoc/LiuY20}.
Before giving a complete proof, we first describe the overall idea.

To begin the proof by contradiction, assume that $\Pr_{j\in [n]}[\Bin(Q'_0)\cap\Probe{}(\query(j))\ne\emptyset] \le \epsilon^4$ for a
sufficiently small
$0 < \epsilon < 1$.
Using an average argument, we can find a $\Delta$ such that $\Pr_{q\in Q_{\Delta}}[\Bin(Q'_0)\cap\Probe{}(q)\ne\emptyset] \le \epsilon$.
Now let $\ans(\query(i))$ denote the number of right endpoints before the position $i$.
We will use the notation of $q_i$ and $q'_i$ consistently to denote queries in $Q_0$ and $Q_\Delta$ respectively.
We also define $\ans(Q_0)$ and $\ans(Q_{\Delta})$ as $\{\ans(q_i) \mid q_i \in Q_0\}$ and $\{\ans(q'_i) \mid q'_i \in Q_{\Delta}\}$,
respectively.

Since the input graph $G$ is chosen uniformly at random, both $\ans(q)$ and $\ans(Q_p)$ are random variables. To handle them easily, we
introduce an indicator random variable $T(i)$ which we define as
$T(i)\triangleq \mathbf{1}_{\ans(\query(i+n/k ))-\ans(\query(i))\ge t_i}$ for some threshold $t_i > 0$, which we decide later. Also, we
define a vector random variable $T(Q_p)$ as $(T(p+1), T(n/k + p+1) , \dots)$.
Notice that $T(Q_p)$ can be computed from $\ans(Q_p)$.
We introduce the following lemma that shows that the two vector random variables $T(Q_0)$ and $T(Q_{\Delta})$ are highly correlated.

\begin{restatable}[Mutual Information]{lemma}{lemmutual}\label{lem:mutual}
    For any $i\in [0.1k,0.9k]$ and $k\in \omega(1)$,
    suppose $Q_{0} = \{q_1, q_2, \dots, q_k\}$ and $Q_{\Delta} = \{q'_1, q'_2, \dots, q'_k\}$, where $q_i$ and $q'_i$ are $\query(in/k+1)$ and
    $\query(in/k + \Delta+1)$, respectively. Then, for any $\Delta \in \{(1-\epsilon^3)n/k, \dots, n/k-2\}$, the following holds:
    \begin{enumerate}[label=(\roman*)]
        \item $H(T(q_i) \mid T(q_1),T(q_2), \dots, T(q_{i-1})) = \Omega(1)$, but\label{lem:item:high_infor}
        \item $H(T(q_{i+1}) \mid T(q'_i))  = O(\epsilon)$.\label{lem:item:cond_infor}
    \end{enumerate}
\end{restatable}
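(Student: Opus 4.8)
I would reduce both parts to the anticoncentration of a single binomial random variable obtained by conditioning. Set $B_i=[\,in/k+1,(i+1)n/k\,]$, the block read off by $q_i=\query(in/k+1)$, so $T(q_i)=\mathbf{1}[N_i\ge t_{in/k+1}]$ with $N_i$ the number of right endpoints in $B_i$, and likewise $B'_i,N'_i$ for $q'_i$. Since $\Delta\in\{(1-\epsilon^3)n/k,\dots,n/k-2\}$, the blocks $B'_i$ and $B_{i+1}$ overlap in a common sub-interval of length $\Delta\ge(1-\epsilon^3)n/k$, while each of $B_{i+1}\setminus B'_i$ and $B'_i\setminus B_{i+1}$ has length $n/k-\Delta\le\epsilon^3 n/k$. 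I would work in the regime $k=\omega(1)$ and $n/k=\omega(1)$ (the latter holds wherever this lemma is invoked, where $k\le n/\polylog n$), using $i\in[0.1k,0.9k]$ to keep every relevant left endpoint bounded away from $n$, so that ``$e_j$ lands in a fixed length-$\Theta(n/k)$ window'' has probability $\Theta(1/k)$. The one structural fact: conditioning on $Y_i:=|\{j:e_j\le in/k\}|$ (the number of right endpoints before $B_i$), by \cref{prop:independent} there are exactly $in/k-Y_i$ intervals $j\le in/k$ with $e_j>in/k$, each having $e_j$ conditionally uniform on $[in/k+1,n]$; so writing $N_i=Y'_i+Z_i$, where $Y'_i$ counts right endpoints in $B_i$ from intervals with left endpoint before $B_i$ and $Z_i$ counts intervals fully inside $B_i$,
\[
  Y'_i\mid (Y_i=y)\ \sim\ \mathrm{Bin}\!\left(in/k-y,\ \tfrac{1}{k-i}\right),
\]
$Z_i$ is independent of $(Y_1,\dots,Y_i,Y'_i)$, $Z_i$ is a sum of independent Bernoullis with $\E[Z_i]=\Theta(n/k^2)$ (hence $\V[Z_i]\le\E[Z_i]$), and $\E[Y'_i]=\Theta(n/k)$, $\E[Y_i]=\Theta(n)$ — the $\Theta$'s using $i\in[0.1k,0.9k]$. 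Finally I would \emph{choose} the thresholds so that $T(q)$ tests whether the block count of $q$ is at least its expectation: $t_m:=\floor{\E[N(m)]}$ with $N(m)$ the number of right endpoints in $[m,m+n/k-1]$.

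\paragraph{Part (i).}
The number of right endpoints in $B_j$ is $Y_{j+1}-Y_j$, so $T(q_1),\dots,T(q_{i-1})$ is a deterministic function of $(Y_1,\dots,Y_i)$; hence $H(T(q_i)\mid T(q_1),\dots,T(q_{i-1}))\ge H(T(q_i)\mid Y_1,\dots,Y_i)=H(T(q_i)\mid Y_i)$, the equality because the structural fact makes $T(q_i)$ conditionally independent of $(Y_1,\dots,Y_{i-1})$ given $Y_i$. So I only need $H(T(q_i)\mid Y_i=y)=\Omega(1)$ for a $1-o(1)$ fraction of $y$. A Chernoff bound on the Poisson binomial $Y_i$ gives $|Y_i-\E[Y_i]|=o(\sqrt{nk})$ with probability $1-o(1)$; for such ``typical'' $y$, $\E[Y'_i\mid Y_i=y]$ is within $o(\sqrt{n/k})$ of $\E[Y'_i]$. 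Fixing a typical $y$: since $\sqrt{\V[Z_i]}=\Theta(\sqrt n/k)=o(\sqrt{n/k})$, Chebyshev gives $|Z_i-\E[Z_i]|=o(\sqrt{n/k})$ with probability $1-o(1)$, and on that event $\{N_i\ge t\}$ becomes $\{Y'_i\ge \E[Y'_i\mid Y_i=y]+o(\sqrt{n/k})\}$ (the threshold minus $Z_i$ equals the conditional mean of $Y'_i$ up to an additive $o(\sqrt{n/k})+O(1)$, the $O(1)$ from integrality, absorbed as $\sqrt{n/k}\to\infty$). As $Y'_i$ is a binomial of that mean with standard deviation $\sigma=\Theta(\sqrt{n/k})\to\infty$, Berry--Esseen yields $\Pr[\,\cdot\,]=\tfrac12\pm o(1)$, so $\Pr[T(q_i)=1\mid Y_i=y]=\tfrac12\pm o(1)$ and $H(T(q_i)\mid Y_i=y)=1-o(1)$. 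Averaging over the typical $1-o(1)$ mass gives $H(T(q_i)\mid Y_i)=\Omega(1)$.

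\paragraph{Part (ii).}
Let $C$ count right endpoints in the common interval $B_{i+1}\cap B'_i$, and $D:=N_{i+1}-C$, $D':=N'_i-C$ the counts in the two distinct parts. With the thresholds above, $\{T(q_{i+1})=1\}=\{(C-\E[C])+(D-\E[D])\ge-\rho_{i+1}\}$ and $\{T(q'_i)=1\}=\{(C-\E[C])+(D'-\E[D'])\ge-\rho'_i\}$ for rounding offsets $\rho_{i+1},\rho'_i\in[0,1)$, so the two bits disagree only if $|C-\E[C]|\le 1+\max(|D-\E[D]|,|D'-\E[D']|)$. Conditioning on $Y_{i+1}$ exactly as above, $C$ equals a genuine binomial of mean $\Theta(n/k)$ and standard deviation $\Theta(\sqrt{n/k})$ plus a ``fully-inside'' term of standard deviation $o(\sqrt{n/k})$, so $C$ satisfies Berry--Esseen anticoncentration at scale $\sqrt{n/k}$; and since the distinct parts have length $\le\epsilon^3 n/k$, $D$ and $D'$ are sums of independent Bernoullis with $\V[D],\V[D']=O(\epsilon^3 n/k)$, i.e.\ standard deviation $O(\epsilon^{3/2}\sqrt{n/k})$. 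With cutoff $\epsilon\sqrt{n/k}$, I would bound the disagreement probability by $\Pr[|C-\E[C]|\le 1+\epsilon\sqrt{n/k}]+\Pr[|D-\E[D]|>\epsilon\sqrt{n/k}]+\Pr[|D'-\E[D']|>\epsilon\sqrt{n/k}]$: the first term is $O(\epsilon)$ by Berry--Esseen (the cutoff is $\Theta(\epsilon)$ standard deviations of $C$), the last two are $O(\epsilon)$ by Chebyshev (the cutoff is $\Theta(\epsilon^{-1/2})$ standard deviations of $D,D'$), and the additive $o(1)$ from restricting $Y_{i+1}$ to a typical value is absorbed for $n$ large. Then $H(T(q_{i+1})\mid T(q'_i))\le H(T(q_{i+1})\oplus T(q'_i))=O(\epsilon\log(1/\epsilon))$, which gives the claimed $O(\epsilon)$ bound (and can be brought to $O(\epsilon)$ exactly by rerunning the estimates with a larger power of $\epsilon$ governing $\Delta$).

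\paragraph{The main obstacle.}
The hard part will be making the two ``acts like a binomial'' claims rigorous. The raw block counts $N_i,C,D,D'$ are sums of \emph{non-identically distributed, mutually dependent} indicators — a Poisson-binomial-with-dependence, because a single interval contributes both a left and a right endpoint — so a clean binomial only surfaces after conditioning on the correct prefix count $Y_i$ (resp.\ $Y_{i+1}$) and peeling off the lower-order ``fully-contained'' part. Getting the parameters of that conditional binomial into the precise window where Berry--Esseen yields $\tfrac12\pm o(1)$ (Part (i)) and $O(\epsilon)$ anticoncentration (Part (ii)) is exactly where the hypotheses $i\in[0.1k,0.9k]$, $k=\omega(1)$, and $n/k=\omega(1)$ are consumed.
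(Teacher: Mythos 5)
Your proposal is correct and follows essentially the same route as the paper's proof: conditioning on the prefix count $Y_i$ to expose $Y'_i$ as a genuine binomial, peeling off the low-variance fully-contained term $Z_i$, applying Berry--Esseen anticoncentration to the binomial part and Chebyshev to the small-variance parts, and for part (ii) splitting the block count into a common part $C$ and two short distinct parts $D,D'$ with $\mathbb{V}[D],\mathbb{V}[D'] = O(\epsilon^3 n/k)$ to bound the disagreement probability by $O(\epsilon)$. (The paper's Lemma 4.4 and Lemma 4.5 carry out exactly this plan; your explicit remark that the entropy bound you obtain directly is $O(\epsilon\log(1/\epsilon))$ rather than $O(\epsilon)$ is a fair observation, but it applies to the paper's proof as well and is harmless since $\epsilon$ is a fixed small constant in the application.)
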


The proof of \cref{lem:mutual} can be found in \cref{sec:mutual} and is split into \cref{sec:mutual_first,sec:mutual_second}.

For any $\ell \in [t_1+t_2\log n]$, let $\Foot_{<\ell}(\Bin(Q'_0))$ denote the contents of the first $\ell-1$ probed cells. Let
$\Foot_{\ell}(\Bin(Q'_0))$ denote the contents
of just the $\ell$-th probed cell in $\Bin(Q'_0)$.
We begin by encoding the initial $P$ published bits, $\Delta$, and $\ell$ which is uniformly chosen at random.
Next, we write down all the contents of the cells in the database other than $\Foot_{\ell}(\Bin(Q'_0))$.
Notice that we have written down $\Foot_{<\ell}(\Bin(Q'_0))$, which can be used to reveal $\Bin_\ell(Q_0)$
but not the contents of $\Foot_{\ell}(\Bin(Q'_0))$ itself.
Then by the assumption and the chain rule of mutual information, we can reveal $(1-\epsilon/(t_1+t_2\log n))$-fraction of $T(Q_{\Delta})$ on
average  from  the current bits that we have written down.

Based on \cref{lem:mutual}, we observe that this $(1-\epsilon/(t_1+t_2\log n))$-fraction of $T(Q_{\Delta})$ contains sufficient information
about $T(Q_0)$ on average.
This also provides a lower bound on the mutual information between $\Foot_{\ell}(\Bin(Q'_0))$ and $T(Q_{\Delta})$.
Therefore, it suffices to write down the unknown parts of $T(Q_{\Delta})$, then use an optimal encoding scheme to write down
$\Foot_{\ell}(\Bin(Q'_0))$ while conditioning
on the current bits that we have written down and $T(Q_{\Delta})$. This can save more than $P$ bits of space.
In conclusion, we obtain an encoding of the database using strictly less than $\log(n!)$ bits overall, which contradicts to fact that the
information-theoretic lower bound of the database is $\log(n!)$ bits.

Let us now restate \cref{lem:q0goodoverlap} and give a complete proof assuming that \cref{lem:mutual} is true.

\lemqzerogoodoverlap*
\begin{proof}[Proof of \cref{lem:q0goodoverlap}]
    To begin a proof by contradiction, assume that
    \[
        \Pr_{j\in [n]}[\Bin(Q'_0)\cap\Probe{}(\query(j))\ne\emptyset] \le \epsilon^4
    \]
    for a sufficiently small $0 < \epsilon < 1$.
    Then, there exists $\Delta\in \{\floor{(1-\epsilon^3)n/k},\dots, n/k-2\}$ which satisfies
    $\Pr_{q\in Q_{\Delta}}[\Bin(Q'_0)\cap\Probe{}(q)\ne\emptyset] \le \epsilon$, by an averaging argument.
    Now we give an encoding of the data structure that supports $\deg$ and $\adj$ on $G$ as
    follows:

    \begin{enumerate}
        \item Write down the published bits of size $P$ and write down $\Delta$\label{item:encoding:published}.
        \item Sample $\ell\in[t_1+t_2\log n]$ uniformly at random and write it down.\label{item:encoding:2}
        \item Write down $\Foot{}_{<\ell}(\Bin{}(Q'_0))$ with $w\cdot|\Bin_{<\ell}(Q'_0)|$ bits.
        \item Write down all the contents of remaining cells according to the increasing order of their addresses,
              except $\Foot{}_{\ell}(\Bin{}(Q'_0))$.\label{item:encoding:writeparts}
              Note that by enumerating queries $q$ and simulating query algorithm on $q$, one can recover some entries of $\ans(Q_\Delta)$.
        \item Write down the unknown entries of $T(Q_\Delta)$ given the known entries of $\ans(Q_\Delta)$.\label{item:encoding:leftQ0}
        \item Write down $\Foot{}_{\ell}(\Bin{}(Q'_0))$ with any optimal encoding,
              conditioning on $T(Q_\Delta)$.\label{item:encoding:end}
    \end{enumerate}

    The above encoding allows us to reconstruct the data structure as follows.
    Firstly, we decode the values of $P$, $\Delta$, and $\ell$.
    Next, recall that $\Bin(Q'_0)$ is the set of cells probed by $\deg(i)$ and by binary searching $e_i$ for $i\in Q'_0$ with $\adj$ queries,
    so for any
    $\ell\in[t_1+t_2\log n]$, given $\Foot_{<\ell}(\Bin(Q'_0))$ we can compute $\Bin_\ell(Q'_0)$ by simulating the procedure up to step
    $\ell-1$ using $\Foot{}_{<\ell}(\Bin{}(Q'_0))$.
    We then decode the contents of all remaining cells not in $\Bin_\ell(Q'_0)\setminus \Bin_{<\ell}(Q'_0)$.

    Next, we enumerate all queries $\deg(\cdot),\adj(\cdot,\cdot)$ and simulate the query algorithm on the queries with the decoded
    information.
    When we enumerate and successfully simulate the query algorithm on a hard query tuple $\deg(i),\adj(i,e_i),\adj(i,e_i+1)$, we recover an
    answer $\ans(q_i)$.
    Putting all the recovered $\ans(q_i)$ together, we know some entries of $T(Q_\Delta)$.
    Therefore, we can put the known entries together with the missing entries which are written down in Step 5, then we can fully recover
    $T(Q_\Delta)$.
    After decoding $T(Q_{\Delta})$ completely, we can decode $\Foot{}_{\ell}(\Bin{}(Q'_0))$ by referring to the encoding in Step 6.
    At this point, we have decoded every single cells of the data structure.
    Therefore, the total number of bits we have written down in the encoding procedure must be at least the information-theoretic minimum of
    $G$, which is $\log(n!)$ bits.

    We analyze the space of the encoding. Here, we use $\mathcal{P}$ to denote the random variable of the published bits, which implies
    $|\mathcal{P}| = P$.
    \begin{align*}
         & P +  O(\log {\frac{\epsilon n}{k}} )                                                              & \text{(Step 1)} \\
         & + O(\log \log n)                                                                                  & \text{(Step 2)} \\
         & + w\cdot\lvert \Bin{}_{<\ell}(Q'_0)\rvert                                                         & \text{(Step 3)} \\
         & + \log (n!) - w\cdot\lvert \Bin{}_{\le\ell}(Q'_0)\rvert                                           & \text{(Step 4)} \\
         & + O(\epsilon k/(t_1+t_2\log n))                                                                   & \text{(Step 5)} \\
         & + w\cdot|\Bin_\ell(Q'_0)\setminus \Bin_{<\ell}(Q'_0)|                                             & \text{(Step 6)} \\
         & - I(\Foot{}_{\ell}(\Bin{}(Q'_0)); T(Q_\Delta) \mid \Foot{}_{<\ell}(\Bin(Q'_0)),\ell, \mathcal{P}) & \text{(Step 6)}
    \end{align*}
    The following two claims are used to analyze the space for Step 5 and 6.
    We first analyze the space at Step~\ref{item:encoding:end}.
    \begin{claim}\label{cl:encoding:mutual_information}
        \[
            \E[I(\Foot{}_{\ell}(\Bin{}(Q'_0)); T(Q_\Delta) \mid \Foot{}_{<\ell}(\Bin(Q'_0)),\ell, \mathcal{P})]
            \geq \Omega\left(\frac{k}{t_1+t_2\log n}\right).
        \]
    \end{claim}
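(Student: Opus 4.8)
The plan is to strip off the average over $\ell$ with the chain rule, reduce from cell contents to the combinatorial random variables $T(Q_0)$ by data processing, and then feed in \cref{lem:mutual}.

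\textbf{Step 1: collapse the average.} As $\ell$ is drawn uniformly from $[t_1+t_2\log n]$ independently of $G$, the left-hand side of the claim equals $\tfrac{1}{t_1+t_2\log n}\sum_{\ell}I(\Foot_\ell(\Bin(Q'_0));T(Q_\Delta)\mid\Foot_{<\ell}(\Bin(Q'_0)),\mathcal{P})$, and the chain rule for mutual information identifies this sum with $I(\Foot(\Bin(Q'_0));T(Q_\Delta)\mid\mathcal{P})$, where $\Foot(\Bin(Q'_0))$ is the footprint over all $t_1+t_2\log n$ rounds of the $\Bin$ procedure. Thus it suffices to prove $I(\Foot(\Bin(Q'_0));T(Q_\Delta)\mid\mathcal{P})=\Omega(k)$.

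\textbf{Step 2: footprints pin down $T(Q_0)$.} For $v\in Q'_0$, the procedure defining $\Bin(v)$ runs $\deg(v)$ and then binary-searches for $e_v$ using $\adj(v,\cdot)$; it is deterministic once $\mathcal{P}$ and the contents of the probed cells in probing order are fixed, and $\Foot(\Bin(v))$ records exactly those contents. Hence $(\Foot(\Bin(Q'_0)),\mathcal{P})$ determines $\deg(v)$ and $e_v$ for every $v\in Q'_0$, and the identity $\ans(\query(v))=e_v-\deg(v)$ turns this into knowledge of $\{\ans(\query(v)):v\in Q'_0\}$, hence of the vector $T(Q_0)$, whose coordinates are fixed threshold functions of two such values. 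The (conditional) data-processing inequality then gives $I(\Foot(\Bin(Q'_0));T(Q_\Delta)\mid\mathcal{P})\ge I(T(Q_0);T(Q_\Delta)\mid\mathcal{P})$.

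\textbf{Step 3: invoke \cref{lem:mutual}.} Restricting to the middle index set $J=[0.1k,0.9k]$ and writing $\tilde Q_0,\tilde Q_\Delta$ for the corresponding sub-collections, data processing again gives $I(T(Q_0);T(Q_\Delta)\mid\mathcal{P})\ge I(T(\tilde Q_0);T(\tilde Q_\Delta)\mid\mathcal{P})$; expanding the latter as $H(T(\tilde Q_0)\mid\mathcal{P})-H(T(\tilde Q_0)\mid T(\tilde Q_\Delta),\mathcal{P})$, then using $H(T(\tilde Q_0)\mid\mathcal{P})\ge H(T(\tilde Q_0))-P$ and $H(T(\tilde Q_0)\mid T(\tilde Q_\Delta),\mathcal{P})\le H(T(\tilde Q_0)\mid T(\tilde Q_\Delta))$, reduces matters to bounding $H(T(\tilde Q_0))-H(T(\tilde Q_0)\mid T(\tilde Q_\Delta))-P$. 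The chain rule together with the first part of \cref{lem:mutual} gives $H(T(\tilde Q_0))\ge\sum_{i\in J}H(T(q_i)\mid T(q_1),\dots,T(q_{i-1}))=\Omega(k)$, while sub-additivity of entropy together with the second part of \cref{lem:mutual} (applicable because the selected $\Delta$ lies in $\{\floor{(1-\epsilon^3)n/k},\dots,n/k-2\}$) gives $H(T(\tilde Q_0)\mid T(\tilde Q_\Delta))\le\sum_{i\in J}H(T(q_i)\mid T(q'_{i-1}))=O(\epsilon k)+O(1)$, with the $O(1)$ absorbing the lone boundary index $i=0.1k$. Taking $\epsilon$ a sufficiently small constant and using that $P=o(k)$ in the regime $k=\gamma P\log^2 n$ under which \cref{lem:q0goodoverlap} is applied, the $O(\epsilon k)$ and $P$ terms are dominated by the $\Omega(k)$ term, so $I(T(Q_0);T(Q_\Delta)\mid\mathcal{P})=\Omega(k)$, which together with Steps 1--2 proves the claim.

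The hard part will be the bookkeeping in Step 3: one must pass to the middle blocks so that the second part of \cref{lem:mutual} covers every summand (the extreme blocks are not even well-defined, since, e.g., $T(q_k)$ would reference positions beyond $n$), absorb the constantly many boundary terms, and order the quantifiers carefully so that $\epsilon$ is fixed only after the hidden constants in \cref{lem:mutual}, making the $O(\epsilon k)$ loss strictly smaller than the $\Omega(k)$ gain; one also needs to check $P=o(k)$ in the relevant parameter regime so that conditioning on the published bits costs nothing asymptotically. Step 2 is conceptually routine but relies on the point that a footprint read in probing order suffices to replay an adaptive query algorithm even though the cell addresses are not known ahead of time.
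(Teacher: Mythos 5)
Your proof is correct and takes essentially the same approach as the paper: use the chain rule to collapse the average over $\ell$ into $I(\Foot(\Bin(Q'_0));T(Q_\Delta)\mid\mathcal{P})/(t_1+t_2\log n)$, reduce to $I(T(Q_0);T(Q_\Delta)\mid\mathcal{P})$ by data processing, and then feed in \cref{lem:mutual} via the chain rule and sub-additivity of entropy. If anything, your explicit restriction to the middle index set $J$ and your bookkeeping of the $P$-bit cost of conditioning on $\mathcal{P}$ (absorbed by $P=o(k)$) are more careful than the paper's terse write-up, which is otherwise the same argument.
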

    \begin{proof}[Proof of \cref{cl:encoding:mutual_information}]
        Since $T(Q_0)$ can be computed from the cells in $\Bin(Q'_0)$, we obtain the following.
        \pagebreak
        \begin{align*}
             & I\left(\Foot{}(\Bin{}(Q'_0));T(Q_\Delta) \right)                                                                                                              \\
             & \geq I(T(Q_0); T(Q_\Delta))                                                                                                                                   \\
             & =  H(T(Q_0)) - H(T(Q_0) \mid T(Q_\Delta))                                                                                                                     \\
             & =\sum_{j \in [k]} H(T(q_j) \mid T(q_1), \dots, T(q_{j-1})) - \sum_{j \in [k]} H(T(q_j) \mid T(q_1), \dots, T(q_{j-1}) ; T(q'_{1}), T(q'_2), \dots T(q_{j-1})) \\
             & \ge \sum_{j \in [k]} H(T(q_j) \mid T(q_1), \dots, T(q_{j-1})) - H(T(q_1)) - \sum_{j \in [0.1k, 0.9k]} H(T(q_{j+1}) \mid T(q'_{j}))                            \\
             & \ge \Omega(k) - \Omega(1) - O(\epsilon k) = \Omega(k).
        \end{align*}
        The last inequality is obtained from \cref{lem:mutual}. Notice that we can consider
        $I(\Foot(\Bin(Q'_0));T(Q_\Delta))$ as $t_1+t_2\log n$ random variables. Then we
        obtain the claim from the chain rule of mutual information, $I(X_1,\dots,X_n ; Y) = \sum_i I(X_i;Y\mid X_{<i})$.
    \end{proof}

    Next, we prove the space used at Step~\ref{item:encoding:leftQ0}.
    \begin{claim}\label{cl:encoding:entropy}
        We write down $O(\epsilon k/(t_1+t_2\log n))$ bits in Step 5.
    \end{claim}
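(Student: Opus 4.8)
The plan is to bound, \emph{in expectation} over the uniformly random $\ell\in[t_1+t_2\log n]$ drawn in Step 2 (and over the random graph $G$), the number of entries of $T(Q_\Delta)$ left unknown after Step 4, and then to note that each such entry costs exactly one bit in Step 5. The ``one bit per entry'' point is because the decoder can itself detect which entries it has recovered: it knows $\Delta$, and from $\Foot_{<\ell}(\Bin(Q'_0))$ it can reconstruct the set $D_\ell\triangleq\Bin_\ell(Q'_0)\setminus\Bin_{<\ell}(Q'_0)$ of cells that Steps 3--4 do \emph{not} write down, so any simulation that tries to read a cell of $D_\ell$ is flagged as having failed. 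Hence Step 5 just lists the still-unknown bits of $T(Q_\Delta)$ in order, with no indices to spell out.

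First I would reduce to counting unrecovered answers. Since $T(q'_i)$ is a deterministic function of $\ans(q'_i)$ and $\ans(q'_{i+1})$ alone, the number of unknown entries of $T(Q_\Delta)$ is at most twice the number of $i$ with $\ans(q'_i)$ unrecovered. Writing $v_i$ for the vertex of $q'_i$, we have $\ans(q'_i)=e_{v_i}-\deg(v_i)$, and this is recovered once the decoder can simulate the three queries $\deg(v_i),\adj(v_i,e_{v_i}),\adj(v_i,e_{v_i}+1)$ on the cells it holds: enumerating \emph{all} triples $(\deg(a),\adj(a,b),\adj(a,b+1))$ and simulating them, the unique $(a,b)$ whose three simulations all complete and return $(\cdot,\text{true},\text{false})$ is $(v_i,e_{v_i})$, because for $b\ge a$ one has $\adj(a,b)=\text{true}$ and $\adj(a,b+1)=\text{false}$ exactly when $b=e_a$; from that triple the decoder reads off $e_{v_i}$ and $\deg(v_i)$. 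Thus $\ans(q'_i)$ is recovered whenever $\Probe(\query(v_i))\cap D_\ell=\emptyset$.

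Next I would invoke two structural facts. The family $\{D_\ell\}_{\ell\in[t_1+t_2\log n]}$ partitions $\Bin(Q'_0)$: each $c\in\Bin(Q'_0)$ lies in $D_{\ell(c)}$ for the first round $\ell(c)$ in which the procedure defining $\Bin(Q'_0)$ probes $c$. And the choice of $\Delta$ gives $\Pr_{q\in Q_\Delta}[\Probe(q)\cap\Bin(Q'_0)\ne\emptyset]\le\epsilon$, so at most $\epsilon k$ of the $q'_i$ meet $\Bin(Q'_0)$ at all; those that do not are recovered for \emph{every} $\ell$, while one that does is missed only when the random $\ell$ lands in $\{\ell(c):c\in\Probe(\query(v_i))\cap\Bin(Q'_0)\}$, a set of size at most $\lvert\Probe(\query(v_i))\cap\Bin(Q'_0)\rvert$. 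Averaging over $\ell$ and summing over $i$, the expected number of unrecovered answers is at most
\[
    \frac{1}{t_1+t_2\log n}\sum_{i}\bigl\lvert\Probe(\query(v_i))\cap\Bin(Q'_0)\bigr\rvert ,
\]
and since at most $\epsilon k$ summands are nonzero while each $\Probe(\query(v_i))$ is the short probe set of the three queries $\deg(v_i),\adj(v_i,e_{v_i}),\adj(v_i,e_{v_i}+1)$, this sum is $O(\epsilon k)$. Doubling, the expected number of unknown entries of $T(Q_\Delta)$, hence the expected length of Step 5, is $O\!\bigl(\epsilon k/(t_1+t_2\log n)\bigr)$; picking an $\ell$ at or below this expectation proves the claim.

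The step I expect to be the main obstacle is the final estimate: one must be careful that it is the \emph{short} triple probe sets $\Probe(\query(v_i))$ whose overlaps with $\Bin(Q'_0)$ are being summed (rather than the full binary-search sets $\Bin(v_i)$, which are $t_1+t_2\log n$ cells long), so that the total overlap is genuinely $O(\epsilon k)$ and is amortized against the right number $t_1+t_2\log n$ of rounds; absorbing any leftover $\mathrm{poly}(t_1+t_2)$ factors into $\epsilon$ (which only rescales the constants in the averaging step that chose $\Delta$) then closes the gap. Everything else is the bookkeeping sketched above, together with the self-detection of failed simulations that makes Step 5 free of positional overhead.
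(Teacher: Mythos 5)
You follow essentially the same route as the paper's proof: average over the round index $\ell$, reduce to counting how many $\ans(q'_i)$ can still be simulated from the cells written in Steps 3--4, double because each entry of $T(Q_\Delta)$ depends on two consecutive $\ans$-values, and observe that each still-unknown entry of $T(Q_\Delta)$ is a single bit. You also usefully spell out the ``self-detection'' point (the decoder reconstructs $D_\ell$ from $\Foot_{<\ell}(\Bin(Q'_0))$ and therefore knows which simulations aborted), which is correct, needed for Step~5 to carry no positional overhead, and left implicit in the paper.

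However, there is a quantitative gap you circle around but do not close. The sum $\sum_i\lvert\Probe(\query(v_i))\cap\Bin(Q'_0)\rvert$ is not $O(\epsilon k)$: at most $\epsilon k$ summands are nonzero by the choice of $\Delta$, but each nonzero summand is bounded only by $\lvert\Probe(\query(v_i))\rvert\le t_1+2t_2$, which is ``short'' relative to $\Bin(v_i)$ but is $\Theta(t_1+t_2)$, not $O(1)$. So the sum is $O(\epsilon k(t_1+t_2))$, and after averaging over the $t_1+t_2\log n$ rounds and doubling you get $O\bigl(\epsilon k(t_1+t_2)/(t_1+t_2\log n)\bigr)$, not the claimed $O\bigl(\epsilon k/(t_1+t_2\log n)\bigr)$. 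The proposed repair --- ``absorbing any leftover $\mathrm{poly}(t_1+t_2)$ factors into $\epsilon$'' --- does not just rescale constants: it forces $\epsilon=O(1/(t_1+t_2))$, which degrades Lemma~\ref{lem:q0goodoverlap}'s conclusion from $\Omega(1)$ to $\Omega(1/(t_1+t_2)^4)$ (because the contradiction hypothesis is $\Pr\le\epsilon^4$), and in turn inflates the exponent in Theorem~\ref{thm:tradeoff} from $O(t_1+t_2)$ to roughly $O((t_1+t_2)^5)$. For what it is worth, the paper's own one-line derivation of equation~\eqref{eq:averaging} (``by an average argument and the assumption'') seems to elide the same $t_1+2t_2$ factor, so your instinct to flag this exact step as the main obstacle is well placed; but the factor is a real slippage, not bookkeeping that $\epsilon$ can silently swallow.
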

    \begin{proof}[Proof of \cref{cl:encoding:entropy}]
        By an average argument and the assumption the following holds:
        \begin{align}\label{eq:averaging}
            \E{}\left[\left|\lbrace q\in Q_\Delta : \Probe{}(q)\cap \Probe{}_\ell(\Bin{}(Q'_0))\ne\emptyset\rbrace\right|\right]
            \le \epsilon k/(t_1+t_2\log n).
        \end{align}
        This implies that we can recover at least $(1-\epsilon/(t_1+t_2\log n))$-fraction of $\ans(Q_\Delta)$ in expectation with the
        information we have written down in
        Step~\ref{item:encoding:2}-\ref{item:encoding:writeparts}.
        By a simply counting, we can recover at least $(1-2\epsilon/(t_1+t_2\log n))$-fraction of $T(Q_\Delta)$ in expectation.
        The proof is completed by noting that the entries of $T(Q_\Delta)$ are bits and $|T(Q_\Delta)|=k$.
    \end{proof}

    In overall, the total expected space of the encoding is at most:
    \begin{align}\label{eq:overall_size}
         & O\left(\log {\frac{\epsilon n}{k}}\right) + O(\log \log n) + \log (n!) +
        O\left(\frac{\epsilon k}{t_1 + t_2\log n}\right) - \Omega\left(\frac{k}{t_1+t_2\log n}\right).
    \end{align}

    For a sufficiently large constant $\gamma > 0$, if $k \geq \gamma \log^2 n$, then for small enough constant $\epsilon$, the inequality
    given above implies that the
    value is strictly less than $\log(n!)$. This allows us to construct a data structure that uses less space than the information-theoretic
    minimum of $G$, which leads to a contradiction.
\end{proof}

\subsection{Analysis of the Mutual Information}\label{sec:mutual}
In this section, we prove \cref{lem:mutual}.
\lemmutual*
We begin with describing a high-level idea of the proof. Recall that we divide the universe $[n]$ into $k$ blocks, each is of length $n/k$.
Also, for each $q_t \in Q_0$ and $q'_t \in Q_{\Delta}$, we denote $T(q_t)$ and $T(q'_t)$ as $T((t-1)n/k + 1)$ and $T((t-1)n/k + \Delta +1)$,
respectively (recall
that $T(i)$ is an indicator random variable defined as $T(i)\triangleq \mathbf{1}_{\ans(\query(i+n/k))-\ans(\query(i))\ge t_i}$).
Then for each block $i \in \{1, \dots, k\}$, we first define the following random variables:
\begin{itemize}
    \item $Y_i$: the number of right endpoints that land before position $in/k$.
    \item $Y'_i$: the number of intervals whose right endpoints that land in the $i$-th block, and left endpoints land before block $i$.
    \item $Z_i$: the number of intervals whose right endpoints that land in the $i$-th block, and left endpoints land in block $i$.
    \item $A_i$: the number of right endpoints that land in the $i$-th block, i.e., $A_i=Y'_i+Z_i$.
\end{itemize}

Now in the rest of the paper, we fix the threshold $t_i$ as the average number of right endpoints of the intervals
in the $i$-th block, i.e., $t_i = \E[Y'_i + Z_i]$. Then we can observe the followings:
\begin{enumerate}
    \item $T(q_i)$ only depends on $Y'_i$ and $Z_i$.
    \item $Z_i$ is independent of $T(q_1), T(q_2), \dots, T(q_{i-1})$.
    \item $T(q_1),T(q_2), \dots, T(q_i)$ can be computed from $Y_1,\dots, Y_{i+1}$.
    \item $Y'_i$ is independent of $Y_1,\dots, Y_{i-1}$, given $Y_i$.
\end{enumerate}
Hence,
\[
    H(T(q_i) \mid T(q_1),T(q_2), \dots, T(q_{i-1}))\ge H(\mathbf{1}_{Y'_i+Z_i\ge t_i} \mid Y_1,\dots, Y_{i})= H(\mathbf{1}_{Y'_i+Z_i\ge t_i}\mid Y_i).
\]

To prove \cref{lem:mutual}~\ref{lem:item:high_infor}, it suffices to show that $H(T(q_i) \mid Y_i) = \Omega(1)$ from the above inequality. We
first show that $Y_i$ is
$\Theta(in/k)$ with high probability using a Chernoff bound. Then when $Y_i$ is fixed to $m = c \cdot in/k$ for some constant $c >0$, we show
that both $\Pr[A_i \ge t_i \mid Y_i=m]$ and  $\Pr[A_i \le t_i \mid Y_i=m]$ are $\Omega(1)$, which proves the first part of the lemma (recall
that $A_i$  is $Y'_i + Z_i$, which
decides $T(q_i)$). Note that  $Y'_i$ has a binomial distribution, given $Y_i$, whereas $Z_i$ does not, which means we cannot directly use the
idea of the proof in~\cite{DBLP:conf/soda/PatrascuV10}. To resolve the issue, we first show that with high probability, the information
whether $Y'_i$ is at least its expectation or not, dominates the value of $T(q_i)$, and show that the information-theoretic lower bound of
such information is $\Omega(1)$.

For the proof of \cref{lem:mutual}~\ref{lem:item:cond_infor}. We show that with high probability, both
values the $T(q_{i+1})$ and $T(q'_i)$ are basically determined by the number of right endpoints whose left endpoints land before the
$(i+1)$-th block, and the number of right endpoints land in $\{(i+1)n/k + 1, \dots, (i+1)n/k + \Delta + 1\}$
(the `common' part of $T(q_{i+1})$ and $T(q'_i)$).
In the rest of this section, we assume that $i\in [0.1k,0.9k]$.

\subsubsection{Proof of \cref{lem:mutual}~\ref{lem:item:high_infor}}\label{sec:mutual_first}
We begin with the following lemma, which shows that both $\E[Y_i]$ and $in/k-\E[Y_i]$ are $\Omega(in/k)$.
\begin{lemma}\label{lem:mui_bound}
    $\E[Y_i]$ is $O(in/k)$ and is at least $i^2n/(4k^2)$.
\end{lemma}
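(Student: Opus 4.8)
The claim is that $\mathbb{E}[Y_i] = O(in/k)$ and $\mathbb{E}[Y_i] \ge i^2 n/(4k^2)$, where $Y_i$ counts the right endpoints landing strictly before position $in/k$. By \cref{prop:independent}, for each $j \in [n]$ the right endpoint $e_j$ is uniform on $[j,n]$, and the $e_j$'s are independent. So $Y_i = \sum_{j=1}^n \mathbf{1}_{e_j < in/k}$, and by linearity $\mathbb{E}[Y_i] = \sum_{j=1}^n \Pr[e_j < in/k]$. The plan is simply to compute this sum, splitting on whether the left endpoint $j$ already lies before position $in/k$ or not.

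**The computation.** Write $m = in/k$ for brevity. For $j \ge m$ the interval $I_j = [j, e_j]$ has its left endpoint at or after $m$, so $e_j \ge j \ge m$ and $\Pr[e_j < m] = 0$; these terms contribute nothing. For $j < m$, the endpoint $e_j$ is uniform on the $n-j+1$ values $\{j, j+1, \dots, n\}$, of which exactly $m - j$ lie in $\{j, \dots, m-1\}$, so $\Pr[e_j < m] = (m-j)/(n-j+1)$. Hence
\[
  \mathbb{E}[Y_i] = \sum_{j=1}^{m-1} \frac{m-j}{n-j+1}.
\]
For the upper bound, bound each denominator below by $n - m + 1 \ge n - m$ (using $j \le m-1$); since there are fewer than $m$ terms each with numerator at most $m$, one gets $\mathbb{E}[Y_i] \le m^2/(n-m)$. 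With $m = in/k$ and, say, $i \le 0.9k$ (the regime the lemma is used in, per the running assumption $i \in [0.1k, 0.9k]$), $n - m = n(1 - i/k) = \Omega(n)$, so $\mathbb{E}[Y_i] = O(m^2/n) = O(i^2 n/k^2) = O(in/k)$ since $i \le k$. For the lower bound, bound each denominator above by $n$: $\mathbb{E}[Y_i] \ge \frac{1}{n}\sum_{j=1}^{m-1}(m-j) = \frac{1}{n}\cdot\frac{(m-1)m}{2} \ge \frac{m^2}{4n}$ for $m \ge 2$, and substituting $m = in/k$ gives $\mathbb{E}[Y_i] \ge i^2 n/(4k^2)$, exactly as claimed. (The edge case $m \le 1$ is vacuous given $i \ge 0.1k$ and $k \in \omega(1)$.)

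**Expected obstacles.** There is essentially no obstacle here — this is a routine two-sided estimate of an explicit harmonic-type sum, and the only mild care needed is (a) invoking $i \le 0.9k$ (or whatever upper bound on $i/k$ is in force) to keep $n - m$ a constant fraction of $n$ for the $O(in/k)$ bound, and (b) handling the trivial small-$m$ cases so the constant $1/4$ in the lower bound is clean. If one wanted the upper bound $\mathbb{E}[Y_i] = O(in/k)$ to hold even without an upper bound on $i/k$, one would instead note $m - j \le n - j + 1$ termwise trivially gives $\mathbb{E}[Y_i] \le m = in/k$ directly, which is cleaner and assumption-free; I would likely present it that way. The lower bound is unconditional as written.
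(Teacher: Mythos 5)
Your proof is correct and matches the paper's in spirit: the paper's upper bound is the same trivial deterministic bound $Y_i \le in/k - 1$ you flag as the ``cleaner'' alternative, and the paper's lower bound is also an elementary averaging argument yielding $m^2/(4n)$ with $m = in/k$ (the paper restricts attention to intervals $j \le in/(2k)$ and bounds the probability that $e_j$ lands in $[in/(2k), in/k]$ by $(in/(2k))/n$, while you bound the full sum $\sum_{j<m}(m-j)/(n-j+1)$ termwise by replacing the denominator with $n$). The two organizations of the lower bound are interchangeable and give the identical constant.
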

\begin{proof}
    Since $Y_i$ cannot be greater than $in/k-1$, it is clear that $\E[Y_i]$ is at most $O(in/k)$. Thus, it suffices to show that the lower
    bound of $\E[Y_i]$ is also in $\Theta(in/k)$. To prove this, we use an averaging argument as follows:
    \begin{align*}
        \E[Y_i] & \ge \sum_{j=1}^{in/(2k)}\Pr[\text{right endpoint of interval }j\text{ lands in }[in/(2k),in/k]] \\
                & \ge \frac{in}{2k}\cdot \frac{in/(2k)}{n}=\frac{i^2n}{4k^2}\label{EY_upper}.
    \end{align*}
    Since we only consider the case that $i$ is $\Theta(k)$, $\frac{i^2n}{4k^2}$ is in $\Theta(in/k)$.
\end{proof}
According to the Chernoff bound, the random variable $Y_i$ lies within the range
$\Big[\E[Y_i]-10\sqrt{\E[Y_i]},\E[Y_i]+10\sqrt{\E[Y_i]}\Big]$ with a
probability greater than
$0.999$. Therefore, it suffices to show that $H(T(q_i) \mid Y_i = m)$ is $\Omega(1)$ when
$m \in \left[\E[Y_i]-10\sqrt{\E[Y_i]}, \E[Y_i]+10\sqrt{\E[Y_i]}\right]$.
As previously explained, we can establish the lemma by proving
$\Pr[A_i \ge t_i \mid Y_i=m]$ and  $\Pr[A_i \le t_i \mid Y_i=m]$ are $\Omega(1)$. Since $\E[A_i] = \E[Y'_i] + \E[Z_i]$ and $Y'_i$ and $Z_i$
are independent,
$\Pr[A_i\ge t_i \mid Y_i=m]$ is at least
\[
    \Pr\biggl[Y'_i \ge \E[Y'_i] \land Z_i\ge\E[Z_i] \mid Y_i = m\biggr] \ge \Pr\biggl[Y'_i \ge \E[Y'_i] \mid Y_i = m\biggr] \cdot \Pr[Z \ge \E[Z_i]].
\]
Thus, if (i) $\Pr[Y'_i \ge \E[Y'_i] \mid Y_i = m]$ is $\Omega(1)$, and (ii) $\Pr[Z \ge \E[Z_i]]$ is $\Omega(1)$, we can prove
$\Pr[A_i\ge t_i \mid Y_i=m]$ is $\Omega(1)$.

Unfortunately, for large values of $k$, the property (ii) does not hold, even if $k$ is asymptotically smaller than $n$. For instance, when
$k$ is
$\Theta(n/\log n)$, the reciprocal of the standard deviation of $Z_i$ is $\omega(1)$, which causes $\Pr[Z \geq \E[Z_i]]$ to be $o(1)$. Note
that same issue occurs when we
compute $\Pr[A_i \le t_i \mid Y_i=m]$ by computing $\Pr[Y'_i < \E[Y'_i] \mid Y_i = m]$ and $\Pr[Z < \E[Z_i]]$ independently.

To overcome this problem, rather than computing the two probabilities independently, we establish the subsequent lemma motivated
from~\cite{DBLP:conf/soda/PatrascuV10}.
This lemma proves that the probability of $Y'_i$ deviating significantly larger from its expected value is much
higher than that of $Z_i$ undergoing the same occurrence.

\begin{lemma}\label{lem:Y_prime_and_Z}
    For any $k = \omega(1)$, the following holds for any small constant $0 < \kappa < 1$ that satisfies $k = \omega(\kappa^{-3})$:
    \begin{enumerate}[label=(\roman*)]
        \item $\Pr\left[Y'_i-\E[Y'_i]\ge \kappa\cdot\sqrt{\E[Y'_i]} \mid Y_i = m\right] = 1/2-O(\kappa)$\label{Y_first},
        \item $\Pr\left[Y'_i-\E[Y'_i]\le -\kappa\cdot\sqrt{\E[Y'_i]} \mid Y_i = m\right] = 1/2-O(\kappa)$, and\label{Y_second}
        \item $\Pr\left[\lvert Z_i-\E[Z_i]\rvert\ge \kappa\cdot\sqrt{\E[Y'_i]}\right] =  O(\kappa)$\label{Y_third}.
    \end{enumerate}
\end{lemma}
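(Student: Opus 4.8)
The plan is to first pin down the conditional law of $Y'_i$ given $Y_i=m$ --- the crux of the whole lemma --- and then read off (i)--(iii) from standard tail estimates. Throughout, $i\in[0.1k,0.9k]$ and $k\le n/\polylog n$ (the standing assumptions of this section), and $m$ ranges over the typical window $|m-\E[Y_i]|\le 10\sqrt{\E[Y_i]}$ guaranteed by the Chernoff bound, which is the regime in which the lemma is invoked in \cref{lem:mutual}. Put $M:=(i-1)n/k$, so block $i$ is $\{M+1,\dots,M+n/k\}$ (and $M+n/k\le n$ since $i\le 0.9k$). The key claim is that $Y'_i\mid(Y_i=m)$ is \emph{exactly} $\Bin(L_m,q)$ with $L_m:=M-m$ and $q:=\tfrac{1}{k-i+1}$. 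Indeed, only the endpoints $e_j$ with $j\le M$ influence $Y_i$ or $Y'_i$; by \cref{prop:independent} these are independent, and $\{Y_i=m\}$ is precisely the event that exactly $L_m$ of them exceed $M$. Condition further on \emph{which} $L_m$-subset $T\subseteq[M]$ overshoots: for $j\in T$, $e_j$ is uniform on $\{j,\dots,n\}$ with $j\le M$, hence conditionally uniform on $\{M+1,\dots,n\}$, and these are independent, so the count of them landing in block $i$ is $\Bin(L_m,\tfrac{n/k}{n-M})=\Bin(L_m,q)$. Since this does not depend on $T$, it is also the law of $Y'_i$ given only $\{Y_i=m\}$. (One can alternatively verify $\Pr[Y'_i=y\mid Y_i=m]\propto\binom{L_m}{y}q^y(1-q)^{L_m-y}$ by a short elementary-symmetric-polynomial calculation.)

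Next I would collect the parameter estimates. With $H_m=\sum_{\ell=1}^m 1/\ell$, harmonic-sum bounds give $\E[Y'_i]=\tfrac nk(H_n-H_{n-M})=\Theta(n/k)$ and $\E[Z_i]=\Theta(n/k^2)$. Moreover, using $|m-\E[Y_i]|\le 10\sqrt{\E[Y_i]}$ and $\E[Y_i]\le M=O(n)$ (\cref{lem:mui_bound}), the conditional mean $\mu':=\E[Y'_i\mid Y_i=m]=L_mq$ obeys $|\mu'-\E[Y'_i]|=\tfrac{|m-\E[Y_i]|}{k-i+1}=O(\sqrt n/k)=o(\sqrt{\E[Y'_i]})$; in particular $\mu'=\Theta(n/k)\to\infty$, and in (i)/(ii) the centering and the threshold $\kappa\sqrt{\E[Y'_i]}$ may be replaced by $\mu'$ at the cost of an additive $o(\sqrt{\mu'})$ and a factor $1+o(1)$.

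For (i) and (ii): since $q=\tfrac{1}{k-i+1}\le 10/k\to 0$, the binomial $Y'_i\mid(Y_i=m)\sim\Bin(L_m,q)$ has variance $(\sigma')^2=L_mq(1-q)=(1-o(1))\mu'\to\infty$. Normalizing and applying the Berry--Esseen theorem to this sum of $L_m$ i.i.d.\ Bernoulli$(q)$ variables (whose approximation error is $O(1/\sigma')=O(\sqrt{k/n})=o(1)$), and using $q=o(1)$ so the normalizing constants are $1+o(1)$, gives $\Pr[Y'_i-\E[Y'_i]\ge\kappa\sqrt{\E[Y'_i]}\mid Y_i=m]=1-\Phi\big(\kappa(1+o(1))+o(1)\big)+o(1)$. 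Since $\Phi$ is continuous with $\Phi(0)=1/2$ and $\Phi'(0)=1/\sqrt{2\pi}$, this equals $1/2-\Theta(\kappa)+o(1)$, which for a small constant $\kappa$ and $n$ large is $1/2-O(\kappa)$; this is (i), and (ii) is identical with the inequality reversed (using $\Phi(-\kappa)=1/2-\Theta(\kappa)$).

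For (iii): $Z_i=\sum_j\mathbf{1}[e_j\le M+n/k]$ over $j$ in block $i$ is a sum of independent Bernoulli variables, so $\V[Z_i]\le\E[Z_i]=O(n/k^2)$; it is also independent of $Y_i$, which depends on disjoint endpoints. Chebyshev's inequality then gives $\Pr[|Z_i-\E[Z_i]|\ge\kappa\sqrt{\E[Y'_i]}]\le\tfrac{\V[Z_i]}{\kappa^2\E[Y'_i]}=\tfrac{O(n/k^2)}{\kappa^2\Theta(n/k)}=O\big(\tfrac{1}{\kappa^2 k}\big)=o(\kappa)$, the last step by the hypothesis $k=\omega(\kappa^{-3})$. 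The main obstacle is the first step --- showing $Y'_i\mid(Y_i=m)$ is \emph{exactly} binomial even though $Y_i$ is itself a Poisson binomial with non-identical success probabilities; the resolution is that conditioning on $\{Y_i=m\}$ resamples the overshooting endpoints as i.i.d.\ uniforms on the common range $\{M+1,\dots,n\}$. Granting that, (i)--(iii) are the routine CLT-versus-Chebyshev comparison in the spirit of~\cite{DBLP:conf/soda/PatrascuV10}: the ``good'' part $Y'_i$ fluctuates on the scale $\sqrt{\E[Y'_i]}$, whereas the ``bad'' part $Z_i$ has standard deviation only $O(\sqrt{\E[Z_i]})=O(\sqrt{\E[Y'_i]/k})$, so with probability $1-O(\kappa)$ the $Z_i$-deviation is swamped by the $Y'_i$-deviation.
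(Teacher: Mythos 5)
Your proof is correct and follows essentially the same route as the paper: identify the conditional law of $Y'_i$ given $Y_i=m$ as an exact binomial, apply Berry--Esseen to $Y'_i$ for (i) and (ii), and apply Chebyshev to $Z_i$ for (iii). Your write-up fills in two details the paper leaves implicit --- the argument that conditioning on $\{Y_i=m\}$ makes the overshooting endpoints i.i.d.\ uniform on $\{M+1,\dots,n\}$ (the paper just invokes \cref{prop:independent}), and the check that the conditional and unconditional means of $Y'_i$ differ by $o(\sqrt{\E[Y'_i]})$ so the centering is immaterial --- but the decomposition and the CLT-versus-Chebyshev comparison are the same.
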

Notice that for some appropriate choice of $\kappa$, we can ensure that $1/2-O(\kappa)$ is close to $1/2$ and that $O(\kappa)$ is small.
\begin{proof}
    Let $n'=in/k - \E[Y_i] = in/k-m$. Then by \cref{lem:mui_bound}, $c \cdot in/k < n' < d \cdot in/k$ for some constant $0<c<d<1$.
    By \cref{prop:independent}, $Y'_i$ has a binomial distribution
    $B\left(n',\frac{n/k}{n-in/k}\right) = B\left(n',\frac{1}{k-i}\right)$ when $Y_i =m$.
    \begin{enumerate}[label=(\roman*)]
        \item We use $p$ to denote $1/(k-i)$. Then by the Berry-Essen theorem and the Taylor expansion,
              $\Pr\left[Y'_i-\E[Y'_i]\ge \kappa\cdot\sqrt{\E[Y'_i]} \mid Y_i = m\right]$
              is approximated by
              $1-\left(\Phi\left(\frac{\kappa}{2 \pi \sqrt{1-p}}\right)+O\left(\sqrt{1/n'}\right)\right) \approx 1/2 - O\left(\frac{\kappa}{\sqrt{1-p}}\right)$.
              Since $p$ is $o(1)$ for any
              $k = \omega(1)$, $O\left(\frac{\kappa}{\sqrt{1-p}}\right)$ is
              $O(\kappa)$, which completes the proof.

        \item This can be proved by the same argument as the proof of \cref{lem:Y_prime_and_Z}~\ref{Y_first}.

        \item For each interval $j + in /k$ in the $i$-th block,
              its right endpoint also exists in $i$-th block with probability
              $p_j = \frac{n/k -j}{n - in/k - j}$. Then by \cref{prop:independent}, $Z_i$ has a Poisson binomial distribution with a parameter
              $(p_0, \dots, p_{n/k-1})$. Not let $\sigma_{Z_i}$ be a standard deviation of $Z_i$. Since
              $\sigma_{Z_i}^2 = \sum_{j=0}^{n/k-1} (1-p_j)p_j \le \sum_{j=0}^{n/k-1} p_j $, and
              $p_j \le 1/(k-i-1)$ for all $j$, $\sigma_{Z_i}^2$ is at most $\frac{n}{k(k-i-1)}$. By the Chebyshev's inequality, we obtain the following:
              \begin{align*}
                   & \Pr\left[\lvert Z_i-\E[Z_i]\rvert\ge \kappa\cdot\sqrt{\E[Y'_i]}\right] \le \frac{\sigma_{Z_i}^2}{\kappa^2 \cdot \E[Y'_i]} \\
                   & \le \frac{\frac{n}{k(k-i-1)}}{\kappa^2 \cdot \frac{1}{k-i} \cdot c\frac{in}{k}}
                  \leq \frac{k-i}{k-i-1} \cdot \frac{1}{\kappa^2 \cdot c \cdot i }
                  \le \frac{2}{\kappa^2  \cdot i \cdot c}.\label{eq:cheby_on_Z}
              \end{align*}
              The above implies that for any $k = \omega(\kappa^{-3})$, $\Pr[\lvert Z_i-\E[Z_i]\rvert\ge \kappa\cdot\sqrt{\E[Y'_i]}]$ is $O(\kappa)$
              since $i = \Theta(k)$. \qedhere
    \end{enumerate}
\end{proof}
From \cref{lem:Y_prime_and_Z},
we can prove \cref{lem:mutual}~\ref{lem:item:high_infor} as follows.
Let $T_Y(q_i)$ be an indicator random variable defined as $T_Y(q_i) = 1$ if and only if $Y'_i \ge \E[Y'_i \mid Y_i = m]$.
Then by \cref{lem:Y_prime_and_Z}~\ref{Y_first} and~\ref{Y_second}, it is clear that $H(T_Y(q_i))$ is $\Omega(1)$. On the other hand, by
\cref{lem:Y_prime_and_Z} and the reverse union bound, $Y'_i$ deviates from expectation more than $\kappa\sqrt{\E[Y'_i]}$ whereas $Z_i$
deviates from expectation at most $\kappa\sqrt{\E[Y'_i]}$
with probability $1-O(\kappa)$. Hence, with probability $1-O(\kappa)$, $T(q_i) = T_Y(q_i)$, which proves
\cref{lem:mutual}~\ref{lem:item:high_infor}.

\subsubsection{Proof of \cref{lem:mutual}~\ref{lem:item:cond_infor}}\label{sec:mutual_second}

The proof of \cref{lem:mutual}~\ref{lem:item:cond_infor} follows a similar approach as in the proof of Lemma~\ref{lem:Y_prime_and_Z}
with some modifications. Let $i+1$ denote $i'$, and $S_{Y'_{i'}}$ be a set of intervals counted by $Y'_{i'}$. We then partition $S_{Y'_{i'}}$
into two sets $S_1$ and $S_2$ where
\begin{enumerate*}[label=(\roman*)]
    \item $S_1$ is a set of intervals whose right endpoints exist in the range $\{(i+1)n/k + 1, \dots, (i+1)n/k + \Delta + 1\}$, and
    \item $S_2 = S_{Y'_{i'}} \setminus S_1$.
\end{enumerate*}
We refer to $S_1$ as the \emph{common part} of $Y'_{i'}$, since all intervals in this set are used to compute both $T(q_{i+1})$ and
$T(q'_{i})$.
Conversely, we refer to $S_2$ as the \emph{distinct part} of $Y'_{i'}$, since all intervals in this set are only used to compute $T(q_{i+1})$
and not $T(q'_{i})$. We then define two random variables $\tilde Y'_{i'}$ and $Y''_{i'}$, which represent the number of intervals such that
their left endpoints land before block $i$, and their right endpoints are in the common and distinct parts of $Y'_{i'}$, respectively.

Similar as in the proof of \cref{lem:mutual}~\ref{lem:item:high_infor}, we consider $Y_{i'}$ to be fixed to $m$ in
$\Big[\E[Y_{i'}]-10\sqrt{\E[Y_{i'}]},\E[Y_{i'}]+10\sqrt{\E[Y_{i'}]}\Big]$, and denote $\frac{i'n}{k} -m$ as $n'$.
Then $c \cdot \frac{i'n}{k} < n' < d \cdot \frac{i'n}{k}$ for some constants $0<c<d<1$ by \cref{lem:mui_bound}.
We introduce the following lemma, which shows that the probability of $\tilde Y'_{i'}$ is away from its expectation is much larger
than the probability that $Y''_{i'}$ or $Z_{i'}$ are away from their expectations, even in the case that the size of the common
part of $Y'_{i'}$ is minimal (that is, $\Delta = (1-\epsilon^3) \frac{n}{k}$).

\begin{lemma}\label{lem:common_and_distinct}
    Suppose $k = \omega(1)$ and $\Delta = (1-\epsilon^3) \frac{n}{k}$.
    Then for any constant $\kappa$ which satisfies $\sqrt[3]{\epsilon^3/(1-\epsilon^3)} = O(\kappa)$ and $k = \omega(\kappa^{-3})$, the
    following holds:
    \begin{enumerate}[label=(\roman*)]
        \item $\Pr\left[\lvert \tilde{Y}_{i'}'-\E[\tilde{Y}_{i'}']\rvert\ge \kappa\cdot\sqrt{\E[\tilde{Y}_{i'}']} \mid Y_{i'}=m\right] =  1-O(\kappa)$,\label{item:common_and_distinct_yi}
        \item $\Pr\left[\lvert Y''_{i'}-\E[Y''_{i'}]\rvert\ge \kappa/2\cdot\sqrt{\E[\tilde{Y}_{i'}']} \mid Y_{i'} =  m\right] =  O(\kappa)$, and\label{item:common_and_distinct_ypp}
        \item $\Pr\left[\lvert Z_{i'}-\E[Z_{i'}]\rvert\ge \kappa/2\cdot\sqrt{\E[\tilde{Y}_{i'}']}\right] =  O(\kappa)$.\label{item:common_and_distinct_zi}
    \end{enumerate}
\end{lemma}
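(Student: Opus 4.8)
The plan is to reuse the machinery of \cref{lem:Y_prime_and_Z} while carefully tracking the sizes of the common part $S_1$ and the distinct part $S_2$ of block $i'$. Conditioned on $Y_{i'}=m$, each of the $n'=i'n/k-m$ intervals whose left endpoint lands before block $i'$ independently lands its right endpoint in block $i'$ with probability $\frac{n/k}{n-i'n/k}=\frac{1}{k-i'}$; furthermore, conditioned on landing in block $i'$, the position inside the block is (approximately) uniform. Hence $\tilde Y'_{i'}$ counts the intervals (out of $n'$) whose right endpoint lands in the first $\Delta+1 = (1-\epsilon^3)n/k + O(1)$ positions of block $i'$, which is $\Bin\!\left(n',\frac{(1-\epsilon^3)/k + o(1/k)}{k-i'}\cdot(\text{something})\right)$ — more precisely a binomial with parameter $\Theta(1/k)$ — and $Y''_{i'}$ counts the ones landing in the last $\epsilon^3 n/k$ positions, a binomial with parameter $\Theta(\epsilon^3/k)$. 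So $\E[\tilde Y'_{i'}]=\Theta(i'n/k^2)=\Theta(n/k)$ and $\E[Y''_{i'}]=\Theta(\epsilon^3 n/k)$.

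First I would prove part \ref{item:common_and_distinct_yi}: since $\tilde Y'_{i'}\mid Y_{i'}=m$ is binomial with parameter $p=o(1)$ (as $k=\omega(1)$) and mean $\Theta(n/k)\to\infty$, the Berry--Esseen theorem gives that $\frac{\tilde Y'_{i'}-\E[\tilde Y'_{i'}]}{\sqrt{\E[\tilde Y'_{i'}](1-p)}}$ is within $O(1/\sqrt{n/k})=o(1)$ of a standard normal, so by the same Taylor-expansion computation as in \cref{lem:Y_prime_and_Z}\ref{Y_first}–\ref{Y_second}, $\Pr[|\tilde Y'_{i'}-\E[\tilde Y'_{i'}]|\ge \kappa\sqrt{\E[\tilde Y'_{i'}]}\mid Y_{i'}=m]=1-O(\kappa)$ (here we want the two-sided version: the probability of \emph{not} deviating by $\kappa\sqrt{\E}$ is the Gaussian mass in an interval of half-width $O(\kappa)$, which is $O(\kappa)$). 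Then I would prove parts \ref{item:common_and_distinct_ypp} and \ref{item:common_and_distinct_zi} by a variance bound plus Chebyshev, exactly as in \cref{lem:Y_prime_and_Z}\ref{Y_third}: $\mathrm{Var}(Y''_{i'}\mid Y_{i'}=m)\le \E[Y''_{i'}]=\Theta(\epsilon^3 n/k)$, so
\[
\Pr\!\left[|Y''_{i'}-\E[Y''_{i'}]|\ge (\kappa/2)\sqrt{\E[\tilde Y'_{i'}]}\,\Big|\,Y_{i'}=m\right]\le \frac{\Theta(\epsilon^3 n/k)}{(\kappa^2/4)\,\Theta(n/k)}=O(\epsilon^3/\kappa^2),
\]
which is $O(\kappa)$ precisely under the hypothesis $\sqrt[3]{\epsilon^3/(1-\epsilon^3)}=O(\kappa)$ (equivalently $\epsilon^3=O(\kappa^3)$, up to the harmless $1-\epsilon^3$ factor). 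For $Z_{i'}$, which is a Poisson binomial with parameters $p_j\le 1/(k-i'-1)$ over $n/k$ terms, $\mathrm{Var}(Z_{i'})\le \sum_j p_j\le \frac{n}{k(k-i'-1)}=\Theta(n/k^2)$, so Chebyshev against threshold $(\kappa/2)\sqrt{\E[\tilde Y'_{i'}]}=\Theta(\kappa\sqrt{n/k})$ gives $O\!\left(\frac{n/k^2}{\kappa^2 n/k}\right)=O\!\left(\frac{1}{\kappa^2 k}\right)$, which is $O(\kappa)$ since $k=\omega(\kappa^{-3})$ — the same argument used for $Z_i$ in \cref{lem:Y_prime_and_Z}\ref{Y_third}.

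Finally I would close the lemma by assembling it into part \ref{lem:item:cond_infor} of \cref{lem:mutual}: write $Y'_{i'}=\tilde Y'_{i'}+Y''_{i'}$ and $A_{i'}=Y'_{i'}+Z_{i'}=\tilde Y'_{i'}+(Y''_{i'}+Z_{i'})$; by the reverse union bound, with probability $1-O(\kappa)$ the ``noise term'' $Y''_{i'}+Z_{i'}$ stays within $\kappa\sqrt{\E[\tilde Y'_{i'}]}$ of its mean while $\tilde Y'_{i'}$ moves by more than $\kappa\sqrt{\E[\tilde Y'_{i'}]}$, so whether $A_{i'}$ exceeds the threshold $t_{i'}$ is determined by the common part $\tilde Y'_{i'}$ alone; the same common quantity determines $T(q'_i)$ (since $T(q'_i)$ depends on the right endpoints landing in exactly the $S_1$ window plus intervals born inside), so $T(q_{i+1})$ is a function of $T(q'_i)$ up to a $1-O(\kappa)$-probability event, giving $H(T(q_{i+1})\mid T(q'_i))=O(\kappa)\cdot O(1)+H(\text{error bit})=O(\kappa)=O(\epsilon)$ after choosing $\kappa=\Theta(\epsilon)$. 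The main obstacle I anticipate is the bookkeeping in part \ref{item:common_and_distinct_yi}: one must confirm that $\tilde Y'_{i'}$ is genuinely (close enough to) binomial after conditioning on $Y_{i'}=m$ — the subtle point is that fixing $Y_{i'}$ fixes only the \emph{count} of right endpoints landing before block $i'$, and one needs that, given this count, the right endpoints among the remaining intervals distribute into the positions of block $i'$ in a product (independent, near-uniform) fashion, which follows from \cref{prop:independent} but requires care to state precisely when $\Delta$ is near its minimum so that $\E[\tilde Y'_{i'}]$ is still $\Theta(n/k)$ and the Berry--Esseen error $O(\sqrt{k/n})$ is $o(1)$.
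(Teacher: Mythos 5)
Your proposal is correct and follows essentially the same route as the paper: you identify $\tilde Y'_{i'}$ and $Y''_{i'}$ as (conditionally) binomial via \cref{prop:independent}, establish part~\ref{item:common_and_distinct_yi} by the same Berry--Esseen/Taylor argument used in \cref{lem:Y_prime_and_Z}\ref{Y_first}--\ref{Y_second}, and establish parts~\ref{item:common_and_distinct_ypp} and~\ref{item:common_and_distinct_zi} by variance bounds plus Chebyshev, with the same order-of-magnitude computations ($O(\epsilon^3/\kappa^2)$ and $O(1/(\kappa^2 k))$ respectively). Your closing discussion of how the lemma assembles into \cref{lem:mutual}~\ref{lem:item:cond_infor} is outside the scope of this lemma's proof but matches what the paper does immediately afterward.
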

Notice that our restriction on $\epsilon$
gives us an upper bound on $\epsilon$ depending on $\kappa$.
\begin{proof}
    \begin{enumerate}[label=(\roman*), leftmargin=*]
        \item $\tilde{Y}_{i'}' \sim B(n', \tilde p)$ where $\tilde p =\frac{(1-\epsilon^3)n/k}{n-i'n/k} = \frac{1-\epsilon^3}{k-i'}$ by
              \cref{prop:independent}.
              Then by the same argument as the proof of \cref{lem:Y_prime_and_Z}~\ref{Y_first},
              $\Pr\left[\tilde{Y}_{i'}' \le \E[\tilde{Y}_{i'}'] - \kappa\sqrt{\E[\tilde{Y}_{i'}']}\right]$ is $1/2 - O(\kappa)$
              when $k$ is $\omega(1)$.
        \item From \cref{prop:independent},
              $Y''_{i'} \sim B\left(n', \frac{\epsilon^3}{1-\epsilon^3}\tilde p\right)$, which implies the variance of $Y''_{i'}$ is
              $\frac{\epsilon^3}{1-\epsilon^3}n' \tilde p\left(1-\frac{\epsilon^3}{1-\epsilon^3}\tilde p\right)$. Then by the Chebyshev's
              inequality, we obtain the following when $\sqrt[3]{\epsilon^3/(1-\epsilon^3)}$ is  $O(\kappa)$ (recall that
              $\E[\tilde{Y}_{i'}']$ is $n' \tilde p$):
              \begin{align*}
                   & \Pr\left[\left|Y''_{i'}-\E[Y''_{i'}]\right|\ge \kappa/2\cdot\sqrt{\E[\tilde{Y}_{i'}']}\right]                      \\
                   & \le \frac{4\frac{\epsilon^3}{1-\epsilon^3}\left(1-\frac{\epsilon^3}{1-\epsilon^3} \cdot \tilde p\right)}{\kappa^2}
                  = \frac{4\frac{\epsilon^3}{1-\epsilon^3}\left(1-\frac{\epsilon^3}{k-i}\right)}{\kappa^2}                              \\
                   & \le \frac{4\epsilon^3}{(1-\epsilon^3)\kappa^2} = O(\kappa).
              \end{align*}
        \item Recall that $Z_{i'}$ has a Poisson binomial distribution with $(p_0, \dots, p_{n/k-1})$ where $p_j = \frac{n/k-j}{n-i'n/k-j}$.
              Now let $\sigma_{Z_{i'}}$ be a standard deviation of $Z_{i'}$. Then $\sigma_{Z_{i'}}^2$ is at most $\frac{n}{k(k-i'-1)}$ (see
              the proof of \cref{lem:Y_prime_and_Z}~\ref{Y_third}).
              Then by the Chebyshev's inequality, we obtain the following:
              \begin{align*}
                   & \Pr\left[\lvert Z_{i'}-\E[Z_{i'}]\rvert\ge \kappa/2\cdot\sqrt{\E[\tilde{Y}_{i'}']}\right]
                  \le \frac{4\sigma_{Z_{i'}}^2}{\kappa^2 \cdot \E[\tilde{Y}_{i'}']}                                   \\
                   & \le \frac{4\frac{n}{k(k-i'-1)}}{(1-\epsilon^3)\kappa^2 \cdot \frac{1}{k-i'} \cdot c\frac{in}{k}}
                  \le \frac{8}{(1-\epsilon^3)\kappa^2  \cdot i' \cdot c}.
              \end{align*}
              The above implies that for any $k = \omega(\epsilon^{-3})$,
              $\Pr\left[\left\lvert Z_{i'}-\E[Z_{i'}]\right\rvert\ge \kappa\cdot\sqrt{\E[\tilde{Y}_{i'}']}\right]$ is $O(\kappa)$.
    \end{enumerate}
\end{proof}

Let us proceed to the final step of the proof. We can observe that the number of right endpoints falling within the $i$-th block is given by
the sum of three random variables: $\tilde{Y}_{i'}'$, $Y_{i'}''$, and $Z_{i'}$.
Similarly, for the interval $[(i-1)n/k+\Delta+1, in/k+\Delta]$, which we refer to as the $(i, \Delta)$-th block, the number of right
endpoints falling within this interval is given by the sum of three random variables: $\tilde{Y_{i'}'}$, $Y_{i,\Delta}''$, and $Z_{i,\Delta}$.
Here, we define $Y_{i,\Delta}''$ as the number of intervals whose left endpoints land before the $(i,\Delta)$-th block and whose right
endpoints land before the $i$-th
block. On the other hand, $Z_{i,\Delta}$ represents the number of intervals whose both left and right endpoints fall within the
$(i, \Delta)$-th block.

Note that we cannot directly apply \cref{lem:common_and_distinct}~\ref{item:common_and_distinct_ypp}
and~\ref{item:common_and_distinct_zi} to $Y''_{i, \Delta}$ and $Z_{i, \Delta}$, as they have different distributions from $Y''_{i'}$ and
$Z_{i'}$, respectively.
However, the following lemma implies that we can still make the same argument as
\cref{lem:common_and_distinct}~\ref{item:common_and_distinct_ypp}
and~\ref{item:common_and_distinct_zi} for $Y''_{i, \Delta}$ and $Z_{i, \Delta}$, by replacing the variances of $Y''_{i'}$ and $Z_{i'}$ in the
proofs into the variances of
$Y_{i,\Delta}''$ and $Z_{i,\Delta}$, respectively:

\begin{lemma}\label{lem:shift_variance}
    (a) $\sigma_{Y''_{i, \Delta}}^2 = O(\sigma_{Y''_{i'}}^2)$, and (b)
    $\sigma_{Z_{i, \Delta}}^2$ is at most $\frac{n}{k(k-i'-1)}$.
\end{lemma}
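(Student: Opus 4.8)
The plan is to compute (or bound) the variances of $Y''_{i,\Delta}$ and $Z_{i,\Delta}$ directly from \cref{prop:independent}, exactly as was done for $Y''_{i'}$ and $Z_{i'}$ in the proof of \cref{lem:common_and_distinct}, and then compare the resulting expressions. First I would set up the distributions. Conditioned on $Y_{i'} = m$ (so $n' = i'n/k - m$ with $c\cdot i'n/k < n' < d\cdot i'n/k$ by \cref{lem:mui_bound}), $Y''_{i,\Delta}$ counts those intervals whose left endpoint lands before the $(i,\Delta)$-th block and whose right endpoint lands before the $i$-th block. There are $n' \pm O(n/k)$ such candidate intervals (the left-endpoint windows of block $i'$ and of the $(i,\Delta)$-th block differ only by a shift of $\Delta = (1-\epsilon^3)n/k$, an $O(n/k)$ amount), and for each such interval with left endpoint $\ell$ the relevant right-endpoint event has probability of the form $\frac{\text{(window length)}}{n - \ell + 1}$, which is $\Theta\!\big(\frac{\epsilon^3 n/k}{n - i'n/k}\big) = \Theta\!\big(\frac{\epsilon^3}{k-i'}\big)$ — the same order as the success probability $\frac{\epsilon^3}{1-\epsilon^3}\tilde p$ governing $Y''_{i'}$. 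Since this is a sum of independent indicators (a Poisson-binomial, or a genuine binomial after conditioning), its variance is at most its mean, which is $\Theta(n'\cdot \epsilon^3/(k-i')) = O(\sigma^2_{Y''_{i'}})$; this gives part (a).

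For part (b), $Z_{i,\Delta}$ counts intervals with both endpoints inside the $(i,\Delta)$-th block $[(i-1)n/k+\Delta+1,\ in/k+\Delta]$, which is just an interval of length $n/k$ shifted by $\Delta$. By \cref{prop:independent} this is again a sum of independent indicators, one per left endpoint $j$ in that block, with success probability $p_j = \frac{(\text{distance to right end of block})}{n - j + 1} \le \frac{n/k}{n - in/k - n/k} = \frac{1}{k - i' - 1}$ for every such $j$ (using $i' = i+1$ and that the block sits to the right of position $\approx i'n/k$). Hence $\sigma^2_{Z_{i,\Delta}} = \sum_j (1-p_j)p_j \le \sum_j p_j \le \frac{n/k}{k-i'-1} = \frac{n}{k(k-i'-1)}$, exactly the bound claimed and exactly the bound used for $Z_{i'}$ in \cref{lem:Y_prime_and_Z}\ref{Y_third}. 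This is what lets the Chebyshev computations in \cref{lem:common_and_distinct}\ref{item:common_and_distinct_ypp},\ref{item:common_and_distinct_zi} go through verbatim after substituting these two variances.

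The main obstacle I anticipate is bookkeeping rather than anything deep: one must be careful that the shift by $\Delta$ does not change the \emph{order} of the relevant denominators $n - j + 1$ or $n - i'n/k - j$, i.e. that throughout the $(i,\Delta)$-th block these quantities stay $\Theta(n - i'n/k) = \Theta(n(1 - i'/k))$, which holds because $\Delta \le n/k$ and $i' \in [0.1k, 0.9k]$ so the block moves by at most one block-width and stays well inside $[1,n]$. A secondary subtlety is that $Y''_{i,\Delta}$ and $Y''_{i'}$ are sums over slightly different index sets of slightly different Bernoulli probabilities, so "$O(\cdot)$" in part (a) must absorb these $1 + O(1/k)$-type discrepancies; making the constants uniform over $i \in [0.1k,0.9k]$ is the only place requiring minor care. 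Once these order-of-magnitude checks are in place, both bounds follow by the one-line "variance $\le$ mean" estimate for sums of independent indicators, so I would keep the write-up short and defer to the analogous computations already carried out in the proof of \cref{lem:common_and_distinct}.
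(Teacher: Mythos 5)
Your proposal is correct and follows essentially the same route as the paper: read off the (Poisson-)binomial parameters of $Y''_{i,\Delta}$ and $Z_{i,\Delta}$ directly from \cref{prop:independent}, bound the variance of each by the corresponding sum of success probabilities, and compare term-by-term with the parameters already established for $Y''_{i'}$ and $Z_{i'}$ in \cref{lem:common_and_distinct}. One small bookkeeping slip worth flagging: the left-endpoint windows of the $i'$-th block and the $(i,\Delta)$-th block differ by a shift of $n/k-\Delta=\epsilon^3 n/k$, not by $\Delta=(1-\epsilon^3)n/k$ (the two blocks are supposed to overlap heavily, not nearly tile), but since both quantities are $O(n/k)$ your conclusion $n'_{i,\Delta}=O(n')$ and the resulting bounds are unaffected.
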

\begin{proof}
    \begin{enumerate}[label=(\roman*), leftmargin=*]
        \item Recall that $Y''_{i'}$ has a binomial distribution $B(n', \frac{\epsilon^3}{k-i'})$, and
              the difference between the leftmost position of the $i'$-th block and
              the $(i, \Delta)$-th block is $(n/k-\Delta) \le \epsilon^3 \cdot n/k$.
              This implies $Y''_{i, \Delta}$ has a binomial distribution
              $B(n'_{i, \Delta}, \tilde{p}_{i, \Delta})$ where $n'_{i, \Delta} = O(n')$, and
              $\frac{\epsilon^3}{k-i'} \le \tilde{p}_{i, \Delta} \le \epsilon^3 \frac{n/k}{n-(i+\epsilon^3)n/k} = \frac{\epsilon^3}{k-i-\epsilon^3} \le \frac{\epsilon^3 }{2(k-i)}$,
              which proves the lemma.

        \item $Z_{i, \Delta}$ has a Poisson binomial distribution with $p_{j, \Delta} = \frac{n/k-j}{n-(i+\Delta)n/k-j} \le \frac{n/k-j}{n-i'n/k-j}$ for
              $j \in \{0, \dots, n/k-1\}$, which proves the lemma (see the proof of \cref{lem:Y_prime_and_Z}~\ref{Y_third}).
    \end{enumerate}
\end{proof}

Now we prove \cref{lem:mutual}~\ref{lem:item:cond_infor} as follows: We first choose a constant $\kappa$ that satisfies the conditions of
\cref{lem:common_and_distinct}.
Then by \cref{lem:common_and_distinct,lem:shift_variance},
\begin{enumerate*}[label=(\roman*)]
    \item $\tilde{Y}_{i'}'$ deviates from its expectation more than $\kappa\sqrt{\E[\tilde{Y}_{i'}']}$, and
    \item both $Y''_{i'}+Z_{i'}$ and $Y''_{i, \Delta}+Z_{i, \Delta}$ deviate from their expectations at most $\kappa \sqrt{\E[\tilde{Y}_{i'}']}$
\end{enumerate*}
with probability $1-O(\kappa)$ by applying the reverse union bound.
Hence, $T(q_{i+1})$ and $T(q'_{i})$ are the same with probability $1-O(\kappa) = 1- \sqrt[3]{\epsilon^3/(1-\epsilon^3)}$, which is
$1-O(\epsilon)$ for small enough
constant $\epsilon$. This completes the proof of  \cref{lem:mutual}~\ref{lem:item:cond_infor}.

\section{Data Structures for Individual Queries}\label{sec:upper}
In this section, we propose two accessible data structures on interval graph with $n$ vertices as follows.
First, we show that there exists a data
structure that can answer
$\adj$ queries in $O(1)$ time with $O(\sqrt{n} \log n)$-bit redundancy in the RAM model (\cref{thm:improved_adj}).
Afterward, we show that there exists a data structure that can answer $\deg$ queries in $O(1)$ time with $O(n^{2/3} \log n)$-bit redundancy
in the RAM model (\cref{thm:improved_deg}).
Note that all data structures are \emph{non-systematic (encoding)}~\cite{DBLP:journals/siamcomp/FischerH11}. In a non-systematic data
structure, one cannot access the
input graph after preprocessing so that only the data structure can be used for answering queries\footnote{Note that this does not mean that
    one cannot reconstruct the input with any non-systematic data structure. For example, we can reconstruct the input graph using the data
    structure of \cref{thm:improved_adj} by checking all possible $\adj$ queries.}.
In the rest of the paper, we ignore all floors and ceilings which do not affect the main result.

\thmimprovedadj*
\begin{proof}
    We first divide an interval $[1, n]$ into blocks of size $\sqrt{n}$.
    If the vertex $i$ has its right endpoint in the $k$-th block at position $j$, we can store the length of $i$ using the block number $k$
    along with the offset $j$. Then the length of each interval in $G$ is enough to check adjacency since the left endpoint of $I_i$ is fixed
    to $i$. Notice that each interval within the $k$-th block can only contain the right endpoints in the $j$-th block where $j \ge i$.
    Additionally, we only have to store the offset for the last block. Therefore, by using the encoding of
    \cref{lem:DPT} (\cite{DBLP:conf/stoc/DodisPT10})
    to encode the lengths of intervals in the same block, the total space required to store all interval lengths can be expressed as
    follows:
    \begin{align*}
         & \sum_{k=1}^{\sqrt{n}} \underbrace{\sqrt{n}}_{\text{vertices per block}} \left(\underbrace{\log (\sqrt{n}-k)}_{\text{block index}}
        + \underbrace{\log \sqrt{n}}_{\text{offset}}\right) + O(\sqrt{n}\log n).
    \end{align*}
    Let us now show that $\sum_{k=1}^{\sqrt{n}} \sqrt{n}(\log(\sqrt{n}-k)\sqrt{n})$ is $(\log(n!)+ O(\sqrt{n} \log n))$, which proves the
    theorem.
    \begin{align*}
         & \sqrt{n}\sum_{k=1}^{\sqrt{n}} \log ((\sqrt{n}-k)\sqrt{n})                                                                             \\
         & =\sqrt{n}\sum_{k=\sqrt{n}}^1 \log (k\sqrt{n})                                                                                         \\
         & =\sum_{k=\sqrt{n}}^2 \sum_{i=1}^{\sqrt{n}}\log (k\sqrt{n}-i + i) + \sqrt{n}\log(\sqrt{n})                                             \\
         & = \sum_{k=\sqrt{n}}^2 \sum_{i=1}^{\sqrt{n}} \left(\log (k\sqrt{n}-i)+ \log (1+\frac{i}{k\sqrt{n}-i})\right) + \sqrt{n}\log (\sqrt{n}) \\
         & \leq \log (n!) + \sum_{k=\sqrt{n}}^2 \sum_{i=0}^{\sqrt{n}} \log (1+\frac{i}{k\sqrt{n}-i})  +\sqrt{n}\log n                            \\
         & < \log (n!) + \sum_{k=\sqrt{n}}^2 \sum_{i=2}^{\sqrt{n}} \log (1+\frac{i}{k\sqrt{n}-i})  + 2\sqrt{n}\log n.
    \end{align*}
    The summation with $k=1$ is at most $1/2\sqrt{n}\log{n}$.
    \begin{align*}
         & < \log (n!) + \sum_{k=\sqrt{n}}^2\sum_{i=0}^{\sqrt{n}} \frac{i}{(k\sqrt{n}-i)} + O(\sqrt{n}\log{n})     \\
         & \le \log (n!) + \sqrt{n}\sum_{k=\sqrt{n}}^2 \frac{\sqrt{n}}{(k\sqrt{n}-\sqrt{n})} + O(\sqrt{n}\log{n}).
    \end{align*}
    Note that $i \le \sqrt{n}$. Hence, we can bound the above by
    \begin{align*}
         & \log (n!) + \sqrt{n}\sum_{k=2}^{\infty} \frac{1}{k-1} + O(\sqrt{n}\log{n}) \\
         & \le \log (n!) + O(\sqrt{n}\log n).
    \end{align*}
\end{proof}

Notice the data structure of \cref{thm:improved_adj} can support $\deg(i)$ queries in $O(n)$ time by answering the number of corresponding
intervals of vertices in $\{1, \dots, e_i\}$ that
intersect with $[i, e_i]$. Also, in the cell-probe model, we can support $\adj{}$ queries in $O(1)$ probes while using $\log n! + O(1)$ bits
of space (see \cref{sec:cell_adj}).

\thmimproveddeg*
\begin{proof}
    As mentioned in \cref{sec:uppertech},  We first divide the intervals into blocks of size $n^{1/3}$.
    For each interval $i$ in the $k$-th block, then all the intervals that intersect with $i$ are within exactly one of the four following
    cases (See \cref{fig:upper_bound_intervals}):

    \begin{enumerate}[label=(\roman{*})]
        \item the left endpoint lands before the $k$-th block, and the right endpoint lands after the $k$-th block;\label{item:deg:bl1}
        \item the left endpoint lands in the interval $i$;
        \item the left endpoint lands before the $k$-th block, and the right endpoint lands in the $k$-th block;
        \item the left endpoint lands in the $k$-th block but before the position $i$.
    \end{enumerate}

    For any two intervals which are in the same block, the number of intersecting intervals in the case~\ref{item:deg:bl1} are always the
    same. Hence, we can store the number of intervals in the case~\ref{item:deg:bl1} using $O({n}^{2/3}\log n)$ bits in total.
    Next, assume that there are $b_k$ right endpoints that land in the $k$-th block.
    Then the numbers of intervals in other three cases are at most
    \begin{enumerate*}[label=(\roman{*})]\setcounter{enumi}{1}
        \item $n-i$,
        \item $b_k$,
        \item $i-(k-1)n^{1/3}$,
    \end{enumerate*}
    respectively, which implies the total number of intervals in the cases (ii)-(iv) intersecting with $i$ is at most $n-(k-1)n^{1/3} +b_k$.
    Therefore, for each interval $i$ in the $k$-th block, we can encode $\deg(i)$ using the $O({n}^{2/3}\log n)$-bit encoding for storing the
    number of intersecting intervals with the case~\ref{item:deg:bl1}, along with $n^{2/3}$ arrays of integers which store
    the number of intersecting intervals in the cases (ii)-(iv), where the $k$-th array is of universe $[n-(k-1)n^{1/3} +b_k]^{\sqrt[3] n}$.
    We store the sizes of the alphabets of the $n^{2/3}$ arrays with an extra array of $O(n^{2/3}\log n)$ bits, so that we can use
    \cref{lem:DPT} to encode the $n^{2/3}$ arrays.
    Overall, we obtain a data structure of $\sum_{k=n^{2/3}}^{1}n^{1/3} \log {(n-(k-1)n^{1/3} + b_k)} + O(n^{2/3} \log n)$ bits in total.
    Now let $k' = n^{2/3}-k+1$ which we use to replace $b_k$.
    Then the following shows that
    $\sum_{k=n^{2/3}}^{1}n^{1/3} \log {(n-(k-1)n^{1/3} + b_k)} \le \sum_{k=n^{2/3}}^{1} n^{1/3} \log {(kn^{1/3} + b_{k'})}$ is
    $O(n^{2/3} \log n)$, which proves the theorem:

    \begin{align*}
         & \sum_{k=n^{2/3}}^1 n^{1/3}\log( kn^{1/3} + b_{k'})                                                                                 \\
         & =\sum_{k=n^{2/3}}^{1} \sum_{i=0}^{n^{1/3}-1} \log(kn^{1/3} - i + i + b_{k'})                                                       \\
         & =\sum_{k=n^{2/3}}^1 \sum_{i=0}^{n^{1/3}-1} \log(kn^{1/3} - i) + \log \left(1+\frac{i + b_{k'}}{kn^{1/3} - i}\right)                \\
         & = \log (n!)  + \sum_{k=n^{2/3}}^1 \sum_{i=0}^{n^{1/3}-1} \log \left(1+\frac{i + b_{k'}}{kn^{1/3}-i}\right)                         \\
         & \leq \log (n!)  + \sum_{k=n^{2/3}}^{1} n^{1/3}\log \left(1+\frac{n^{1/3} + b_{k'}}{kn^{1/3} - n^{1/3}+1}\right)                    \\
         & \leq \log (n!) + O(n^{1/3} \log n)+ \sum_{k=n^{2/3}}^{2} n^{1/3}\log \left(1+\frac{n^{1/3} + b_{k'}}{kn^{1/3} - n^{1/3}+1}\right).
    \end{align*}
    We used that $n^{1/3}+b_{n^{2/3}}$ is $O(n)$.
    Let us now bound the remaining sum form and set $a_k=n^{1/3}+b_{k'}$.
    \begin{align*}
         & \sum_{k=n^{2/3}}^{2} n^{1/3}\log \left(1+\frac{n^{1/3} + b_{k'}}{kn^{1/3} - n^{1/3}+1}\right)                                   \\
         & \leq n^{1/3}\sum_{\ell=\frac{2}{3}\log n}^0 \sum_{k=2^\ell}^{2^{\ell-1}} \log \left(1+\frac{a_k}{(k-1)n^{1/3}}\right)           \\
         & \leq n^{1/3}\sum_{\ell=\frac{2}{3}\log n}^0 \sum_{k=2^\ell}^{2^{\ell-1}} \log \left(1+\frac{a_k}{(2^{\ell-1}-1)n^{1/3}}\right).
    \end{align*}
    As for all $\ell \in \{0, \dots, \frac{2}{3}\log n \}$, $\sum_{k=2^\ell}^{2^{\ell -1}} a_k = \sum_{k=2^\ell}^{2^{\ell -1}} (n^{1/3} + b_{k'}) \le 2n$ (notice that
    $\sum_{k'} b_{k'} =n$).
    Also, by the concavity of logarithm and Jensen's inequality, $\sum_{i=1}^n \frac{\log (1+x_i)}{n} \leq
        \log (1+(\sum_i x_i)/n)$. Therefore, the above equation is at most
    \begin{equation*}
        n^{1/3} \sum_{\ell=\frac{2}{3}\log n}^0 2^{\ell-1}\cdot\log \left(1+\frac{2n/2^{\ell-1}}{(2^{\ell-1}-1)n^{1/3}}\right).
    \end{equation*}
    Now for $\ell \in \{0, \dots, \frac{2}{3}\log n \}$, we define $g(\ell) = 2^{\ell-1}\log (1+\frac{n^{2/3}}{2^{2\ell-2}})$.
    If $2^{2 \ell-2} \le n^{2/3}$, i.e., $\ell \le \frac{1}{3}\log n + O(1)$, we have $g(\ell) = O(n^{1/3} \log n)$.
    On the other hand, if $2^{2 \ell-2} > n^{2/3}$, i.e., $\ell \ge \frac{1}{3}\log n + \Omega(1)$, we have $g(\ell) \le O(\frac{n^{2/3}}{2^{\ell}})$, as $\log(1+x)\leq x$.
    So we have $g(\ell)\leq 2^{\ell-1} \frac{n^{2/3}}{2^{2\ell-2}}$ in this case. Hence, $g(\ell)$ is maximized when $\ell$ is $\frac{1}{3} \log n \pm O(1)$. In this case,
    the above equation is at most
    $\Theta(n^{2/3} \log n)$.

\end{proof}
Notice that the data structure of \cref{thm:improved_deg} can also support $\adj(i,j)$ queries in $O(n)$ time by completely reconstructing
the graph. This is done as follows.
To address the query $\adj(i,j)$, we utilize the degrees of vertices from $1$ to $\max{}(i, j)$ and decode the corresponding intervals.
Initially, we compute $\deg(1)$,
which allows us to decode $I_1$ as $[1, \deg(1)]$ (note that all intervals in $I$ intersect with $I_1$ should have their left endpoints
in $[2, e_i]$). For $k \in \{2, \dots, \max \{ i, j\} \}$, we can decode $I_k$ using $\deg(k)$ and the intervals ${I_1, \dots, I_{k-1}}$.
Specifically, $I_k = [k, \deg(k)-k']$, where $k'$
represents the number of intervals in ${I_1, \dots, I_{k-1}}$ that cross the point $k$.
Finally, to determine $\adj(i, j)$, we check whether $I_i$ and $I_j$ intersect or not.

\printbibliography

\appendix
\section{Data structure for \texorpdfstring{$\adj{}$}{adj} query in the cell-probe model}\label{sec:cell_adj}
In this section, we give a data structure for $\adj{}$ queries on interval graph with $n$ vertices in the cell-probe model.
We use the following lemma from~\cite{DBLP:conf/stoc/DodisPT10} which gives us more control over the mapping than their main theorem.
\begin{lemma}[{\cite[Lemma 4]{DBLP:conf/stoc/DodisPT10}}]\label{lem:DPTdetail}
    Let $X,Y\leq 2^w$. Then for some values $M$ and $S$ we can injectively map $(x,y)\in [X]\times [Y]$ to $(m,s)\in [2^M]\times [S]$ with
    the following properties:
    \begin{itemize}
        \item $S$ and $M$ are chosen as functions of $X$ and $Y$ but satisfy $S=O(\sqrt{X})$, $2^M=O(Y\sqrt{X})$.
        \item The map can be evaluated in time $O(1)$.
        \item $x$ can be decoded from $m$ and $s$ in $O(1)$ time.
        \item $y$ can be decoded from $m$ alone in $O(1)$ time.
        \item The redundancy is $(M+\log S)-(\log X + \log Y) = O(1/\sqrt{X})$.
    \end{itemize}
\end{lemma}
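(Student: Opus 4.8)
The plan is to reconstruct the ``spillover representation'' technique of \cite{DBLP:conf/stoc/DodisPT10}: the main word $m$ will hold a base-$B$ ``low digit'' of $x$ glued together with $y$, while the spillover $s$ will hold the high-order part of $x$; the base $B$ is chosen to be a power of two of order $\sqrt X$ so that the spillover universe stays $O(\sqrt X)$, all arithmetic stays $O(1)$, and the only price paid is a single fractional bit that the spillover absorbs rather than rounds away. First I would fix $b\triangleq\ceil{\tfrac12\log X}$ and set $B\triangleq 2^{b}$, so that $\sqrt X\le B<2\sqrt X$, and then put $S\triangleq\ceil{X/B}$; note $1\le S\le\ceil{\sqrt X}=O(\sqrt X)$, $BS\ge X$, and $BS-X<B<2\sqrt X$.

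Given $(x,y)\in[X]\times[Y]$, I would write $x=sB+t$ with $s\triangleq\floor{x/B}$ (a right shift by $b$ bits) and $t\triangleq x\bmod B$ (a mask with $B-1$), and set $m\triangleq yB+t$. A routine monotonicity check shows $s\in[S]$ (the extreme case $x=X-1$ gives $s=\ceil{X/B}-1=S-1$), so $m$ ranges over exactly $[YB]$; writing $2^{M}\triangleq YB$ then yields $2^{M}=O(Y\sqrt X)$, as required. Encoding is a constant number of word operations, and so is decoding: from $m$ alone one recovers $y=\floor{m/B}$ (correct since $0\le t<B$), and from $m$ together with $s$ one recovers $t=m\bmod B$ and then $x=sB+t$; injectivity is clear because $(x,y)\mapsto(s,t,y)\mapsto(m,s)$ is a composition of bijections onto its image. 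The hypothesis $X,Y\le 2^{w}$ is used only to keep $x,y,s,t$ and the products above inside $O(1)$ machine words.

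For the redundancy, the pair $(m,s)$ ranges over a space of size $2^{M}\cdot S=YB\cdot S=Y\cdot BS$, so using $BS<X+B$ and $B<2\sqrt X$,
\[
  (M+\log S)-(\log X+\log Y)=\log\frac{BS}{X}<\log\Bigl(1+\frac{B}{X}\Bigr)\le\log\Bigl(1+\frac{2}{\sqrt X}\Bigr)=O\bigl(1/\sqrt X\bigr),
\]
which is exactly the stated bound. The step I expect to require the most care is the bookkeeping around the fact that the main range $[YB]$ is \emph{not} a power of two: it is precisely this that keeps the loss sub-constant, so one must read $2^{M}$ as the integer $YB$ (equivalently, allow $M=\log(YB)$ to be fractional, with the spillover carrying the fractional part), exactly as in the telescoping construction of \cite{DBLP:conf/stoc/DodisPT10}. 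If one instead forced $M$ to be an integer, the same map would have redundancy $O(1/\sqrt X)+1$, and the stated bound is recovered only by forwarding the extra slack to the next composition level.
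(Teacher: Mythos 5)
The paper itself does not prove this lemma---it cites Dodis--\patrascu--Thorup and refers the reader there---so I am judging your reconstruction on its own merits. The overall shape is right: split $x$ at a base $B\approx\sqrt X$ into a low digit $t$ and a high digit $s$, pack $(y,t)$ into the memory word $m$, and let $s$ be the spill. But you have put the ``round to an integer'' in the wrong place, and with your choice the one clause that actually matters---$O(1/\sqrt X)$ redundancy with $M$ an \emph{integer} bit count---does not come out.

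You fix $B=2^{b}$ to be a power of two, so $m=yB+t$ ranges over $[YB]$, which is in general not a power of two. You notice this and propose to read $2^{M}$ as the integer $YB$, i.e.\ to allow a fractional $M$; you also concede that insisting on $M=\lceil\log(YB)\rceil\in\mathbb Z$ degrades the redundancy to $O(1/\sqrt X)+1$. That extra additive $1$ is exactly what the lemma is engineered to avoid, and it is fatal to the way the paper uses the lemma in \cref{thm:improved_deg2}: there $n'$ applications are chained and the per-application redundancies are summed, so a lost bit per application gives $\Theta(n')$ redundancy instead of $O(1)$. The ``forward the slack to the next composition level'' remark is not a proof of the lemma as stated and is not what the paper does when it invokes \cref{lem:DPTdetail}.

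The actual construction flips which quantity is forced to be a power of two. Pick the \emph{integer} $M$ with $2^{M}=\Theta(Y\sqrt X)$, say $M=\bigl\lceil\log\bigl(Y\lceil\sqrt X\,\rceil\bigr)\bigr\rceil$, then set $B\triangleq\lfloor 2^{M}/Y\rfloor$ (in general not a power of two) and $S\triangleq\lceil X/B\rceil$. Now $m=yB+t<YB\le 2^{M}$ fits in $M$ bits by construction, $B=\Theta(\sqrt X)$, $S=O(\sqrt X)$, and the two sources of waste, $2^{M}-YB<Y$ and $BS-X<B$, are each a sub-constant fraction of the corresponding universe:
\[
M+\log S-\log X-\log Y=\log\frac{2^{M}}{YB}+\log\frac{BS}{X}\le\log\Bigl(1+\frac{1}{B}\Bigr)+\log\Bigl(1+\frac{B}{X}\Bigr)=O\bigl(1/\sqrt X\bigr),
\]
with $M$ a genuine integer. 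Division and remainder by an arbitrary $B$ that fits in $O(1)$ words are still $O(1)$-time operations on the word RAM, so nothing is lost on the time side---you only give up the shift-and-mask convenience. Everything else in your write-up (injectivity, the two decoding maps, $S=O(\sqrt X)$, $2^{M}=O(Y\sqrt X)$) carries over verbatim once $B$ is redefined this way.
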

The details of this can be found in~\cite{DBLP:conf/stoc/DodisPT10}.

\begin{restatable}[]{theorem}{thmimproveddeg2}\label{thm:improved_deg2}
    There is a cell-probe data structure for $n$-vertex interval graphs in universal interval representation, supporting adjacency query in
    constant time using $\log(n!)+O(1)$ bits of space.
\end{restatable}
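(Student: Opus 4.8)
The plan is to reduce adjacency to a single array look-up and then to store that array within an additive constant of the information-theoretic optimum using a varying-alphabet version of \cref{lem:DPTdetail}.

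\emph{Reduction.} Fix the universal interval representation, write $I_i=[i,e_i]$, and put $A_i:=e_i-i$, which ranges over $\{0,1,\dots,n-i\}$, i.e.\ over an alphabet of size $\sigma_i:=n-i+1$. For $i\le j$ we have $\adj(i,j)=1$ iff $j\in[i,e_i]$ iff $A_i\ge j-i$, and for $i>j$ we use $\adj(i,j)=\adj(j,i)$, so one look-up of $A_i$ plus an integer comparison answers any adjacency query. Hence it suffices to store $A=(A_1,\dots,A_n)\in[\sigma_1]\times\dots\times[\sigma_n]$ in a cell-probe structure returning any $A_i$ in $O(1)$ probes and using $\lceil\log\prod_i\sigma_i\rceil+O(1)=\lceil\log(n!)\rceil+O(1)$ bits. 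By \cref{prop:independent} every element of this product arises from some universal interval representation, and the product has size $\prod_i\sigma_i=n!$; so $\lceil\log(n!)\rceil$ bits are necessary too, and it is enough to meet this up to an additive constant.

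\emph{Encoding.} I would store $A$ by composing \cref{lem:DPTdetail} along the coordinates — the ``generalization of \cite{DBLP:conf/stoc/DodisPT10}'' mentioned in \cref{sec:uppertech}, instantiated for the decreasing alphabet profile $\sigma_i=n-i+1$. One processes the coordinates maintaining a short ``carry'' of power-of-two range; at each step \cref{lem:DPTdetail} merges the carry with the next value into a new carry $m\in[2^M]$ and a spill $s\in[S]$, and the spills are written at bit offsets that depend only on $n$ (never on $A$). Two facts drive the space bound: (i) the per-merge redundancy $O(1/\sqrt{X})$ of \cref{lem:DPTdetail} is kept negligible by only invoking it on chunks whose combined alphabet is large, so that the square-rooted quantity is polynomially big, which over the merges sums to $o(1)$; (ii) the remaining contributions — rounding to the final word, and the short tail of $O(1)$-sized alphabets $\sigma_n=1,\sigma_{n-1}=2,\dots$, absorbed directly into one extra word — are $O(1)$. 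For the query, note that the probe targets and all intermediate carries are determined by data-independent quantities together with the content already read; so the unbounded cell-probe CPU can replay the composed decoding internally, and the only memory accesses are at its endpoints: reading the $O(1)$-word stored final carry and the $O(1)$ spills that the extraction of $A_i$ actually consults. This last point is precisely where the cell-probe model is used — a word-RAM implementation would additionally have to re-derive the intermediate carries in $O(1)$ time.

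\emph{Main difficulty.} The crux is to arrange the recursion so that three requirements hold at once: the stored carry occupies $O(1)$ words (so it is read in $O(1)$ probes), every coordinate $A_i$ is recoverable while consulting only $O(1)$ of the stored spills (so the decoding chain meets memory only at its ends), and the cumulative rounding and spill-over loss is $O(1)$ rather than the $\Theta(\sqrt n)$ or $\Theta(n)$ incurred by a careless chunking. These goals pull against one another — shrinking the carry tends to spread a coordinate's information across many spills, while large chunks that make \cref{lem:DPTdetail} cheap reintroduce per-chunk rounding — and resolving the tension, together with the boundary effects from the tiny tail alphabets $\sigma_i=n-i+1$, is where the real work lies; but it follows the template of \cite{DBLP:conf/stoc/DodisPT10}'s array construction specialized to this alphabet profile.
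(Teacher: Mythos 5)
Your proposal correctly identifies the reduction (adjacency $\leftrightarrow$ array of interval lengths over the alphabet profile $\sigma_i=n-i+1$), correctly picks \cref{lem:DPTdetail} as the tool, and correctly intuits that chunking coordinates into groups with large combined alphabet is how to keep the per-merge redundancy negligible. In those respects you are on the paper's track. However, the proof as written has a genuine gap: you explicitly stop short of the crucial step and declare that ``resolving the tension\ldots is where the real work lies; but it follows the template.'' That sentence is doing the work you were supposed to do.

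The specific gap is that you have the roles of $m$ and $s$ in \cref{lem:DPTdetail} inverted, and this creates a tension that does not exist. In that lemma $m\in[2^M]$ with $2^M=O(Y\sqrt X)$ is the large piece (the memory block that gets written out), while $s\in[S]$ with $S=O(\sqrt X)$ is the small piece (the spill that is fed into the next application). Your description treats $m$ as a ``carry'' that is chained and the spills $s$ as what is written to memory; that design would indeed spread each coordinate's information along a chain and force $\Theta(n)$ probes or $\Theta(n)$ lost bits, which is why you perceive a three-way conflict. The paper uses the opposite allocation: group the array into $n'\le n$ blocks each of universe $\Theta(n^2)$, encode the $i'$-th group value $y_{i'}$ together with the incoming spill via $(x,y)=(y_{i'},s_{i'-1})\mapsto(m_{i'},s_{i'})$, and store the $m_{i'}$'s contiguously with only the final spill $s_{n'}$ written at the end. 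The locality property ``$y$ decodable from $m$ alone'' then says $s_{i'}$ is recoverable from $m_{i'+1}$ alone, and ``$x$ decodable from $(m,s)$'' gives $y_{i'}$ from $(m_{i'},s_{i'})$. So a query on $A_i$ reads exactly two adjacent memory words $m_{i'},m_{i'+1}$ (the offsets are data-independent, so the cell-probe CPU knows where to look), decodes $y_{i'}$, and unpacks the group by arithmetic. No chain is walked, no intermediate carries need to be ``replayed.'' The redundancy is then elementary: each of the $n'\le n$ applications has universe $X=\Theta(n^2)$ and costs $O(1/\sqrt X)=O(1/n)$ bits, summing to $O(1)$, plus a single rounding of the final spill for $1$ more bit. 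Your own arithmetic ($o(1)$ from larger chunks) would also work, but you never commit to a chunk size, never verify the sum, and never exhibit the two-probe decoding, so the proof is not finished as stated.
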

\begin{proof}
    Let $x_i$ be the length of the interval whose left endpoint is $i$, which is from $[n-i+1]$. We then encode the length of each interval
    iteratively using \cref{lem:DPTdetail}.
    In other words, we would like to encode an array $(x_1,\dots,x_n)\in[n]\times[n-1]\times\dots\times[1]$.

    We start with grouping the numbers in the following way, so that each group can be viewed as an element from a universe of size
    $\Theta(n^2)$.
    For this, the first group consists of the first two numbers, so the first group is of universe $[n]\times[n-1]$ of size $\approx n^2$.
    We iteratively group the elements to reach universes of size $\approx n^2$, and some groups may consist of more than two elements as the
    size of the universe of the element is decreasing.
    At the end of the grouping procedure, we obtain an array of groups such that each group is from a universe of size $\approx n^2$, except
    the last group. Note that each size of the universe of $i$-th group can be computed without knowing anything about the input, as it is a
    function of $i$ and $n$. Also note that the number of the groups, denoted by $n'$, is a function of $n$.

    We encode each group trivially by the multiplication, i.e., $(x_i,\dots,x_j)$ is written as a number
    $\sum_{k=i}^j (x_{k}\prod_{l=k+1}^j (n-l+1))$,\footnote{We let $\prod_{l=j+1}^j(n-l+1)=1$.} resulting in $y_{i'}$, so that we can fully decode $(x_i,\dots,x_j)$ from $y_{i'}$.
    From now on, given the grouping, we want to encode an array $(y_1,\dots,y_{n'})$ such that each element is from a universe of size $\Theta(n^2)$.
    First trivially apply the lemma (\cref{lem:DPTdetail}) with $(x,y)=(y_0,y_1)$, where $y_0\in[1]$ is a dummy element, obtain $(m_1,s_1)$.
    Then we iteratively apply the lemma for $i\in[2,n']$ with $(x,y)=(s_{i-1},y_i)$ and obtain $(m_i,s_i)$.
    We write down memory block $m_i$ for $i\in [n']$ one by one and, finally, $s_{n'}$.

    Let us now analyze the redundancy for $\sum_{i<n'} M_i$. The redundancy is given by $\sum_i O(1/\sqrt{X_i})$, where $X_i=\Theta(n^2)$
    (except $X_{n'}$) is the size of the universe of aforesaid $y_i$ determined by the lemma.
    We can see this sum as $n'\cdot O(1/\sqrt{n^2})=O(n'/n)= O(1)$ as $n'\le n$.
    We use $\lceil\log S_{n'}\rceil$ bits to write down $s_{n'}$, where $[S_{n'}]$ is the universe of $s_{n'}$, that incurs $1$ bits of
    redundancy.

    Now we want to answer a query. To reconstruct the length of interval $i$, we simulate the compression algorithm to find the sizes of the
    groups and the $M_i$s.
    As the sizes of the groups and the $M_i$'s are independent of the actual bits stored, this does not require any knowledge about the input
    and can be done for free in the cell-probe model.
    We then retrieve  the memory block $m_{i'}$ which maintains the $i$-th interval length since $\sum_{i<i'}M_i$ is known.
    We also reconstruct the spill $s_{i'}$ from  the bits $m_{i'+1}$ ($s_{n'}$, if $i'=n'$) which by \cref{lem:DPTdetail} can be
    reconstructed. From this we have the corresponding tuple $(m_{i'},s_{i'})$.
    From this we can decode the tuple $y_{i''}$ which encodes $(x_j,\dots, x_i,\dots, x_k)$.
    As the tuple is encoded trivially, we can retrieve the  actual  number $x_i$.
    This shows a time complexity of $O(1)$ probes in total, as retrieving $m_{i'},m_{i'+1}$ takes $O(1)$ probes.
    As for correctness, knowing the length of interval $i$ allows us to know $\adj(i,\cdot)$.
\end{proof}

\end{document}